\newtheorem{theorem}{Theorem}[section]
\newtheorem{corollary}[theorem]{Corollary}
\newtheorem{definition}[theorem]{Definition}
\newtheorem{remark}[theorem]{Remark}
\newtheorem{lemma}[theorem]{Lemma}
\newtheorem{proposition}{Proposition}[section]
\begin{document}

\title{Analysis-suitable T-splines: characterization, refineability,
  and approximation}

\author{Xin Li \\
School of mathematical science, USTC, \\
Hefei, Anhui Province 230026, P. R. China\\
lixustc@ustc.edu.cn\\
M. A. Scott\\
Department of Civil and Environmental Engineering, \\
Brigham Young
University, Provo, UT 84602, USA\\
michael.scott@byu.edu }

\maketitle

\begin{abstract}
We establish several fundamental properties of analysis-suitable
T-splines which are important for design and analysis. First, we
characterize T-spline spaces and prove that the
space of smooth bicubic polynomials, defined over the extended T-mesh of an
analysis-suitable T-spline, is contained in the corresponding
analysis-suitable T-spline space. This is accomplished through the
theory of perturbed analysis-suitable T-spline
spaces and a simple topological dimension formula. Second, we
establish the theory of analysis-suitable local refinement and describe
the conditions under which two analysis-suitable T-spline spaces are
nested. Last, we demonstrate that these results can be used to
establish basic approximation results which are critical for analysis.
\end{abstract}

\textbf{Keywords: } T-splines; isogeometric analysis; local refinement; analysis-suitable; approximation.

\section{Introduction}
T-splines were originally introduced in Computer Aided Design (CAD) as
a superior alternative to NURBS~\cite{SeZhBaNa03} and have since emerged as an important
technology across several disciplines including industrial,
architectural, and engineering design, manufacturing, and engineering
analysis. T-splines can model complicated designs as a single,
watertight geometry and can be locally
refined~\cite{SeCaFiNoZhLy04,ScLiSeHu10}. These basic properties
make it possible to merge multiple NURBS patches into a single
T-spline~\cite{Ip05,SeZhBaNa03} and any trimmed NURBS model can be
represented as a watertight T-spline~\cite{Sederberg08}.

The use of T-splines as a basis for isogeometric analysis has gained
widespread attention~\cite{Bazilevs2009,ScBoHu10,ScLiSeHu10,Verhoosel:2010vn,Verhoosel:2010ly,BoScLaHuVe11,BeBaDeHsScHuBe09,SchDeScEvBoRaHu12,ScSiEvLiBoHuSe12}. Isogeometric
analysis was introduced in~\cite{HuCoBa04} and described in detail
in~\cite{Cottrell:2009rp}. The isogeometric paradigm is simple: use the smooth spline basis
that defines the geometry as the basis for analysis. Traditional
design-through-analysis procedures such as geometry clean-up,
defeaturing, and mesh generation are simplified or eliminated entirely. Additionally, the
higher-order smoothness provides substantial gains to analysis in
terms of accuracy and robustness of finite element
solutions~\cite{CoReBaHu05,EvBaBaHu09,LiEvBaElHu10}.

An important development in the evolution of isogeometric analysis was
the advent of Analysis-suitable T-splines (ASTS). ASTS are a mildly restricted subset of
T-splines which are optimized to simultaneously meet the needs of design and
analysis~\cite{LiZhSeHuSc10,ScLiSeHu10}. Linear independence of
analysis-suitable T-spline blending functions was established
in~\cite{LiZhSeHuSc10}. An efficient local refinement algorithm for
ASTS was developed in~\cite{ScLiSeHu10}. Later, it was shown that a
dual basis, constructed as in the tensor product settting, could be
generalized to ASTS~\cite{BeBuChSa12}. This characteristic of ASTS is
called dual compatibility. These results were then generalized to ASTS
surfaces of arbitrary degree in~\cite{BeBuSaVa12}.

In this paper we continue to develop the theory of ASTS
spaces. Specifically, we provide a rigorous characterization of ASTS
and show that the space of smooth parametric bicubic polynomials,
defined over the extended T-mesh of an ASTS, is contained in the corresponding
ASTS space. To accomplish this, the theory of perturbed ASTS spaces
is developed and an ASTS dimension formula is established in terms of the
topology of the extended T-mesh. We note that, unlike existing
approaches, our dimension formula does not require
that the T-mesh have any particular nesting structure. We then show that this
characterization, coupled with the dual compatibility of ASTS, can be
used to prove that ASTS spaces possess the same optimal approximation
properties as tensor product  B-spline
spaces~\cite{BaBeCoHuSa06}. Next, we prove under what conditions two
ASTS spaces are nested. This provides the theoretical justification for the
analysis-suitable local refinement algorithm in~\cite{ScLiSeHu10} and
provides a foundation upon which adaptive isogeometric analysis
procedures may be developed in the future.

This paper is organized as follows.
Section~\ref{sec:tmesh} describes the T-mesh in index space,
T-junctions, and the extended T-mesh. T-splines in the parametric
domain, blending functions, and T-spline spaces are defined
in Section~\ref{sec:ts_spaces}. Section~\ref{sec:asts} describes the
conditions under which a T-spline is analysis-suitable.
The theory of smoothly perturbed ASTS is developed in
Section~\ref{sec:perturb}. Section~\ref{sec:refine} proves the
conditions under which two ASTS spaces are nested. Using the
characterization of ASTS spaces and dual compatibility several basic
approximation results are proven in Section~\ref{sec:approx}. Finally,
Section~\ref{sec:lemma} proves the dimension of ASTS spaces.

\section{The T-mesh}
\label{sec:tmesh}
An important object underlying T-spline spaces is the T-mesh. A T-mesh
is used to determine T-spline basis functions and how they are
arranged with respect to one another. In other words, the mesh
topology of the T-mesh determines the functional properties of the
resulting space. In an attempt to adhere to a single notation and to
reduce confusion, we define a T-mesh following much of the notation given
in~\cite{LiZhSeHuSc10,BeBuChSa12}. For quick reference, \ref{sec:notation} lists the most
important notational conventions used throughout the text and where
they are defined.

\subsection{Definition}
\label{sec:tmesh_def}
A T-mesh
$\mathsf{T}$ is a rectangular partition of the index domain
$[\underline{m},\overline{m}] \times [\underline{n},\overline{n}]$,
$\underline{m}, \overline{m}, \underline{n}, \overline{n} \in
\mathbb{Z}$, where all rectangle corners (or vertices) have integer
coordinates and all rectangles are open sets.
Each vertex in $\mathsf{T}$ is a singleton subset of
$\mathbb{Z}^2$. We denote all vertices of $\mathsf{T}$ by
$\mathsf{V}$. An edge of $\mathsf{T}$ is a
segment between vertices of $\mathsf{T}$ that does not intersect any
rectangle of $\mathsf{T}$. We note that edges do not
contain vertices and they are open at their
endpoints. We denote all edges of $\mathsf{T}$ by $\mathsf{E}$.
Figure~\ref{fig:tmesh} shows an example of a T-mesh.
The notation $\mathsf{T}^1 \subseteq \mathsf{T}^2$ will indicate
that $\mathsf{T}^2$ can be created by adding vertices and edges to
$\mathsf{T}^1$.

\begin{figure}[htb]
\centering
~\\[-1ex]
\subfigure {\includegraphics [width=3.5in]{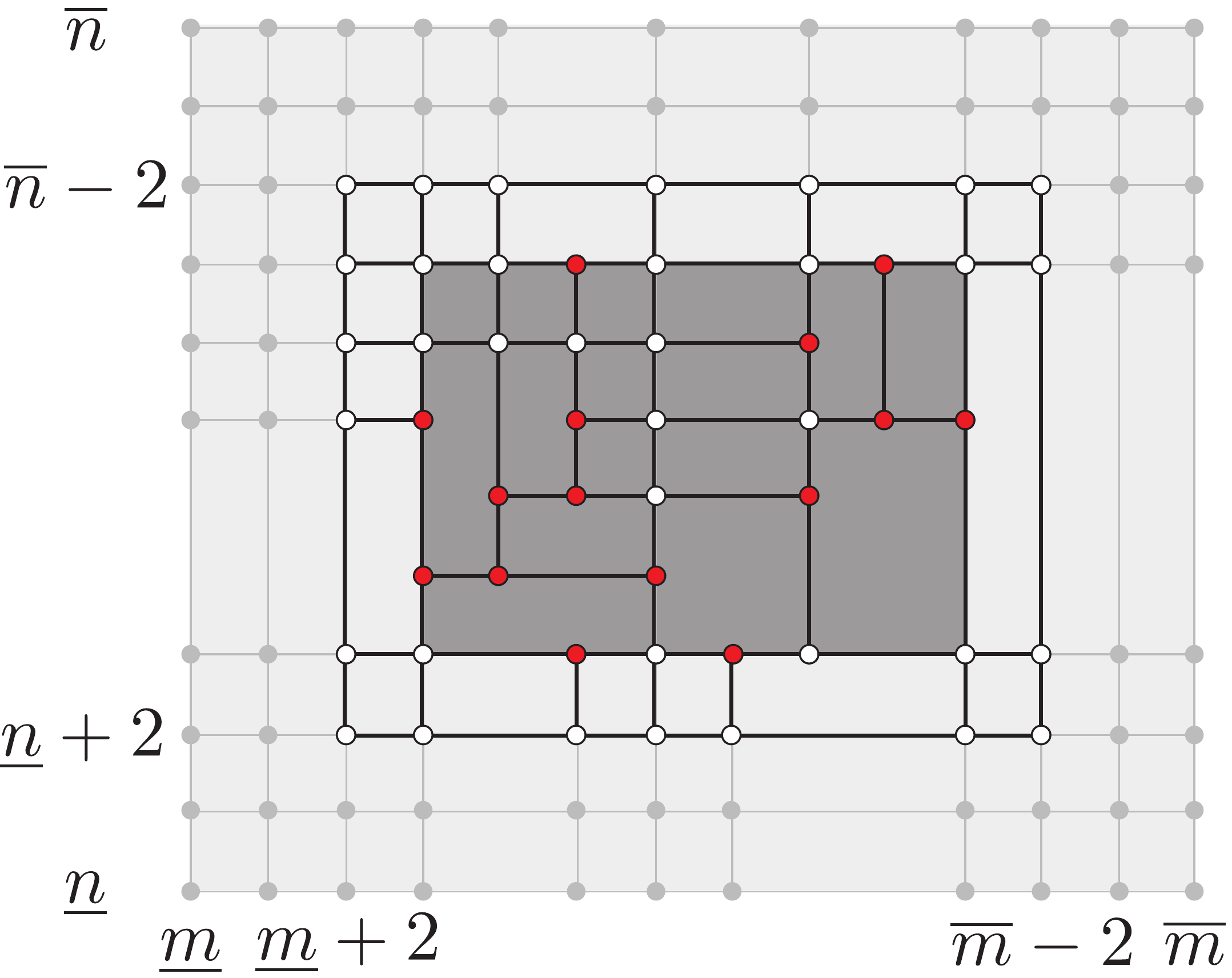}}
~\\[-1ex]
\caption{A T-mesh.} \label{fig:tmesh}
~\\[-1ex]
\end{figure}

The valence of a vertex $V \in \mathsf{V}$ is
the number of edges such that $V$ is an endpoint. We only allow valence
three (called T-junctions) or four vertices. Note that valence two vertices,
other than the four corners, are eliminated
from the definition.

The horizontal (resp., vertical) skeleton of a T-mesh is denoted by
$h\mathsf{S}$ (resp., $v\mathsf{S}$), and is the union of all
horizontal (resp., vertical) edges and all vertices. Finally, we
denote the skeleton to be the union $\mathsf{S} = h\mathsf{S} \cup
v\mathsf{S}$. For a given vertex $a = \{(i,j)\}$
we define $h\mathsf{J}(a) := \{k \in \mathbb{Z} : \{k\} \times a
\subset v\mathsf{S} \}$ and $v\mathsf{J}(a) := \{k \in \mathbb{Z} : a
\times \{k\} \subset h\mathsf{S} \}$. We assume that these two sets
are ordered.

We split the index domain $\mathsf{R} = [\underline{m}, \overline{m}] \times
[\underline{n}, \overline{n}]$ into an active region $\mathsf{AR}$ and a
frame region $\mathsf{FR}$ such that $\mathsf{R} = \mathsf{FR} \cup
\mathsf{AR}$ and $\mathsf{AR} = [\underline{m} + 2,
\overline{m} - 2] \times [\underline{n} + 2, \overline{n} - 2]$, and
$\mathsf{FR} = ([\underline{m}, \underline{m} + 2] \cup [\overline{m} - 2,
 \overline{m}]) \times [\underline{n}, \overline{n}] \cup [\underline{m}, \overline{m}]
 \times ([\underline{n}, \underline{n} + 2] \cup [\overline{n} - 2,
 \overline{n}])$. Note that both $\mathsf{FR}$ and $\mathsf{AR}$ are closed.

A symbolic T-mesh~\cite{LiZhSeHuSc10} is created from a T-mesh $\mathsf{T}$
by assigning a symbol in Table~\ref{tab:symbol} to each
vertex in a tensor product mesh formed from the index
coordinates, $\{\underline{m}, \ldots,
\overline{m}\} \times \{\underline{n}, \ldots, \overline{n}\}
\subset \mathbb{Z}^2$. The
symbol is chosen to match the mesh topology of $\mathsf{T}$. The
symbolic T-mesh corresponding to the T-mesh in Figure~\ref{fig:tmesh}
is shown in Figure~\ref{fig:tmesh_symbol}.

\begin{table}[htbp]
  \centering
  \caption{Definition of possible symbols in a symbolic T-mesh
    \label{tab:symbol}}
  \begin{tabular}{c c}
    \hline
    Symbol & Correspondence with $\mathsf{T}$ \\ \hline
    $+$ & Valence 4 vertex, corner vertex, or valence 3
    boundary vertex in $\mathsf{T}$ \\
    $\vdash$, $\dashv$, $\bot$, $\top$ & Oriented valence three vertex
    in $\mathsf{T}$\\
    $|$ & Vertical edge in $\mathsf{T}$ \\
    $-$ & Horizontal edge in $\mathsf{T}$ \\
    $\cdot$ & No corresponding vertex or edge in $\mathsf{T}$ \\
    \hline
  \end{tabular}
\end{table}

\begin{figure}[htb]
\centering
~\\[-1ex]
\subfigure {\includegraphics [width=3.5in]{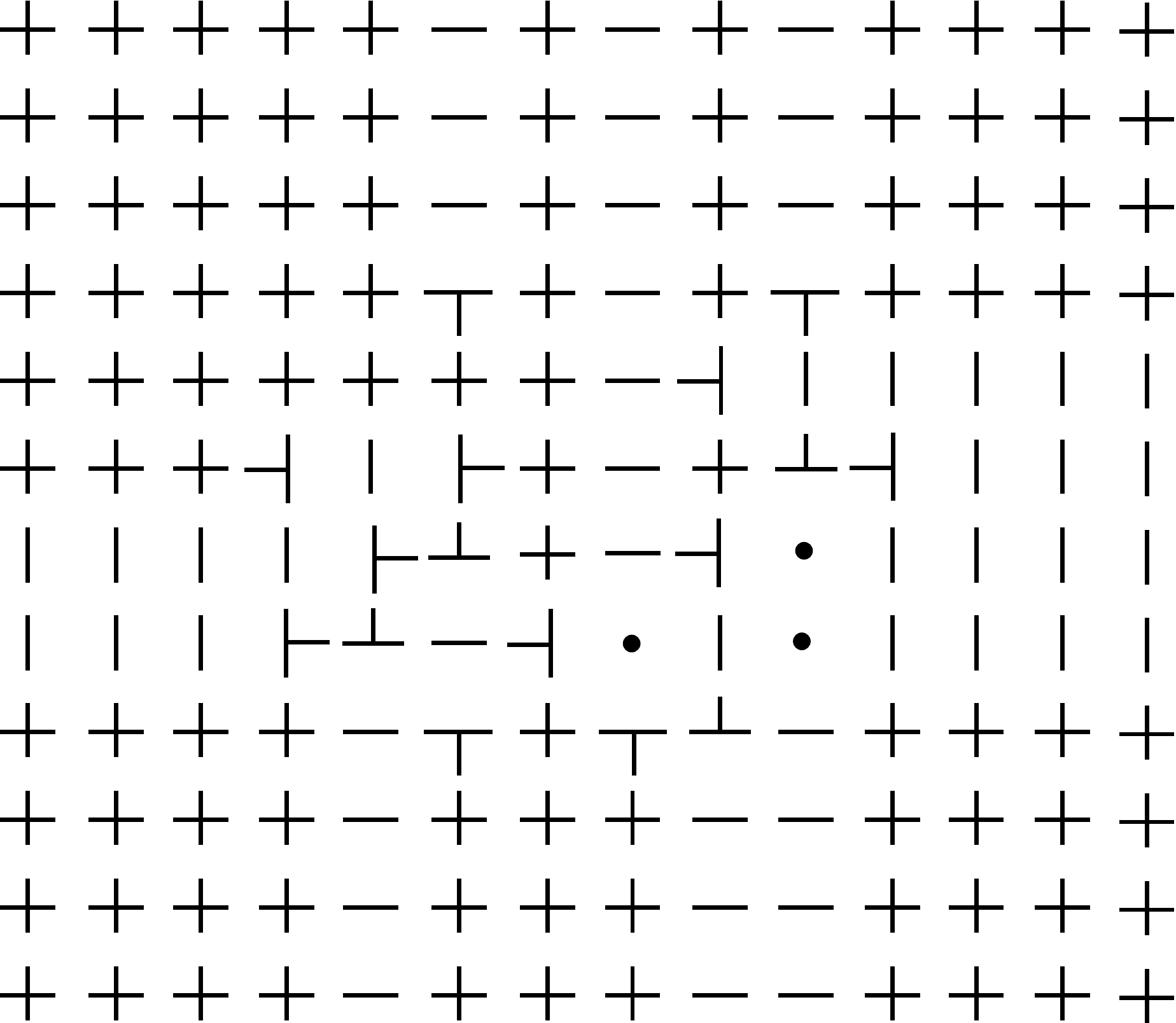}}
~\\[-1ex]
\caption{The symbolic representation of the T-mesh in
  Figure~\ref{fig:tmesh}.}
\label{fig:tmesh_symbol}
~\\[-1ex]
\end{figure}

\subsection{Admissible T-meshes}
We say that a T-mesh is admissible if it satisfies three basic
conditions. First, we require that $\mathsf{S} \cap \mathsf{FR}$
contains the vertical segments $\{i\} \times [\underline{n},
\overline{n}]$ for $i = \underline{m}, \underline{m} + 1,
\underline{m} + 2, \overline{m} - 2, \overline{m} - 1, \overline{m}$
and the horizontal segments $[\underline{m}, \overline{m}] \times
\{j\}$ for $j = \underline{n}, \underline{n} + 1,
\underline{n} + 2, \overline{n} - 2, \overline{n} - 1,
\overline{n}$. These horizontal and vertical lines are for basis function definition
near the boundary. Second, we require that $\mathsf{S} \cap
\mathsf{AR}$ contains the vertical segments $\{i\} \times [\underline{n},
\overline{n}]$ for $i = \underline{m} + 2, \underline{m} + 3,
\overline{m} - 3, \overline{m} - 2$
and the horizontal segments $[\underline{m}, \overline{m}] \times
\{j\}$ for $j = \underline{n} + 2, \underline{n} + 3,
\overline{n} - 3, \overline{n} - 2$. Third, we require that for any
two vertices $V_1 = \{(i_1, j_1)\}, V_2=\{(i_2, j_2)\}$ in
$\mathsf{V}$, such that $V_1, V_2 \subset \partial Q$ for some $Q \in
\mathsf{T}$, if $i_1 = i_2$ (resp., $j_1 = j_2$), then $\{i_1\} \times
]j_1, j_2[ \subset \mathsf{S}$ (resp., $]i_1, i_2[ \times \{j_1\} \subset
\mathsf{S}$). From a practical point of view these are minor
restrictions. The T-mesh in Figure~\ref{fig:tmesh} is admissible.

We note that for convenience and simplicity, we often refer to only
the active region of an admissible T-mesh when speaking of a T-mesh. In all cases, we
assume that the frame region has an admissible topology.

\subsection{Anchors and T-junctions}
\label{sec:anchors}
We define the anchors $\mathsf{A}(\mathsf{T}) = \{A \in \mathsf{V} \cap \mathsf{AR}\}$.
We denote the total number of anchors in $\mathsf{T}$ by
$n^A$. We define $\mathsf{J} \subset \mathsf{A}(\mathsf{T})$ to be the set of
all valence three vertices. These are called T-junctions. The symbols
$\vdash$, $\dashv$, $\bot$, $\top$ indicate
the four possible orientations of a T-junction in a symbolic T-mesh. A
T-junction $T_h \in \mathsf{J}$ (resp., $T_v \in \mathsf{J}$) of type
$\vdash$ and $\dashv$ (resp., $\bot$, $\top$) and their
extensions are called horizontal (resp., vertical). The solid white
and red circles in Figure~\ref{fig:tmesh} are anchors and the red
circles are T-junctions.

\subsection{Segments}
\label{sec:segments}
We define a segment to be a closed line segment of contiguous vertices and edges
whose beginning and ending vertices are T-junctions (interior or
boundary). Given two horizontal (resp., vertical) segments $G_1^h, G_2^h$
defined over the intervals $[i_1, j_1] \times a$ and $[i_2, j_2]
\times b$ we say that $G_1^h \leq G_2^h$ if $i_1 \leq i_2$. We denote by
$h\mathsf{G}$ (resp., $v\mathsf{G}$) the
collection of all horizontal (resp., vertical) segments, and by
$\mathsf{G} = h\mathsf{G} \cup v\mathsf{G}$ the collection of all
segments. We define $h\mathsf{G}(a) = h\mathsf{G} \cap ([\underline{m},
\overline{m}] \times a)$ and $v\mathsf{G}(a) = v\mathsf{G} \cap (a \times [\underline{n},
\overline{n}])$. We assume these two sets are ordered. We denote the total
number of segments in $\mathsf{T}$ by $n^{G}$. We denote the total
number of horizontal (resp., vertical) segments in $\mathsf{T}$ by
$n^{G}_h$ (resp., $n^{G}_v$). We denote the number of line segments
in $h\mathsf{G}(a)$ (resp., $v\mathsf{G}(a)$) by $n^{G}_h(a)$ (resp.,
$n^{G}_v(a)$).

\subsection{The extended T-mesh}
\label{sec:ext_tmesh}
T-junction extensions can be associated with each T-junction. For
example, given a T-junction $T = \{(\overline{\imath},\overline{\jmath})\} \in \mathsf{J}$ of
type $\vdash$ we extract from $h\mathsf{J}(\{\overline{\jmath}\})$
four consecutive indices $i_1, \ldots, i_4$ such that $\overline{\imath} =
i_3$. We call $ext^e(T)=[i_1, \overline{\imath}] \times
\{\overline{\jmath}\}$ the face
extension, $ext^f(T) = ]\overline{\imath}, i_4] \times
\{\overline{\jmath}\}$ the edge
extension for such kind of T-junction. Similarly, we can define the face and edge extensions for the other kinds
of T-junctions $\dashv$, $\bot$, $\top$ which are illustrated in Figure~\ref{fig:extension}.

\begin{figure}[htb]
\centering
~\\[-1ex]
\subfigure {\includegraphics [width=4.5in]{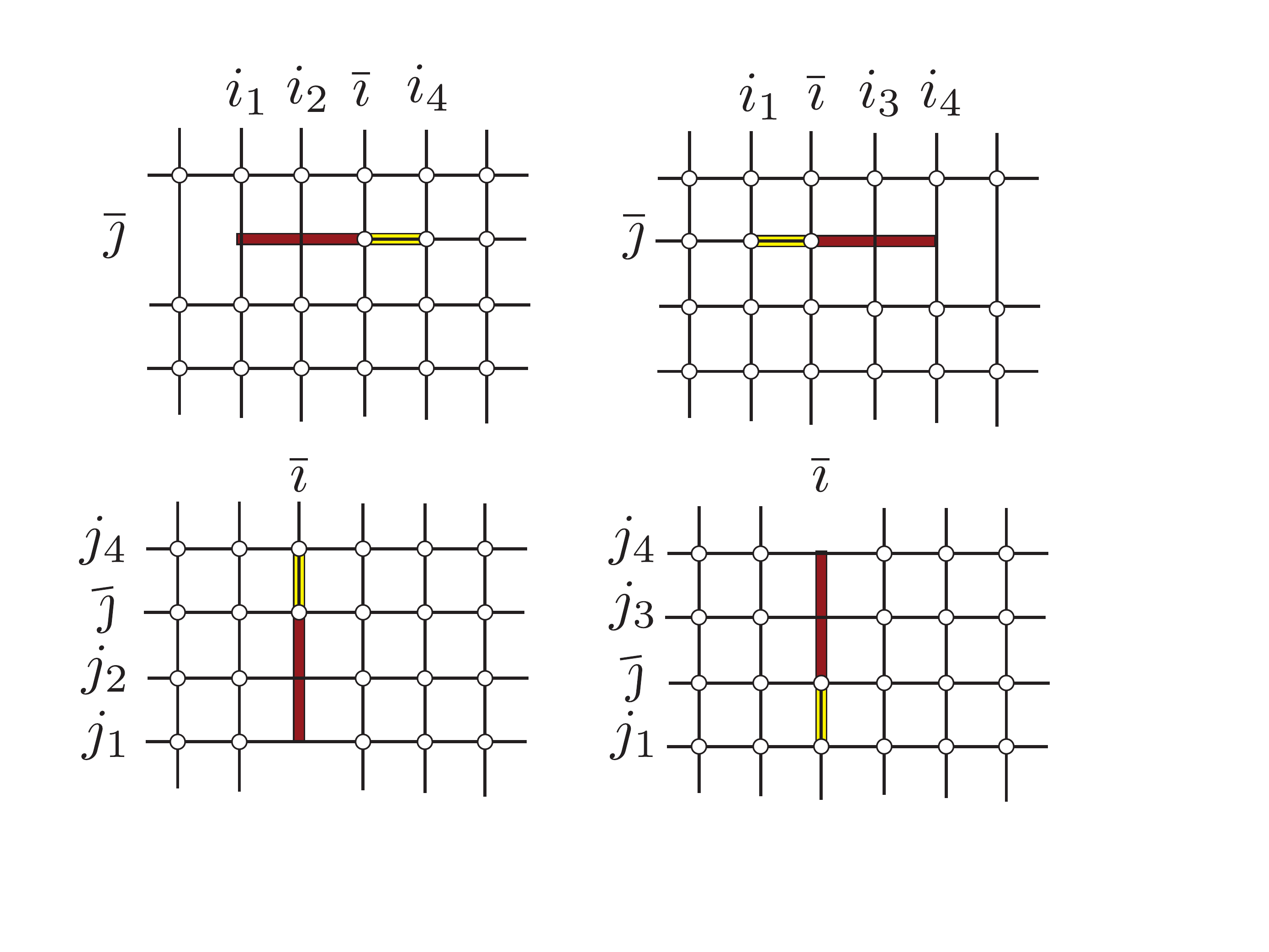}}
~\\[-1ex]
\caption{The face extension (red) and edge extension (yellow) for four different kinds of
T-junctions.\label{fig:extension}}
~\\[-1ex]
\end{figure}

We denote $ext(T) = ext^e(\mathsf{T}) \cup ext^f(T)$ the extension of
T-junction $T$ and the union of all horizontal (resp., vertical) face
extensions by $hext^f(\mathsf{T})$ (resp., $vext^f(\mathsf{T})$), the
union of all face extensions by $ext^f(\mathsf{T})$, and
the union of all extensions (face and edge) by $ext(\mathsf{T})$.
We define the extended T-mesh, $\mathsf{T}_{ext}$, as the T-mesh
created by adding to $\mathsf{T}$ all the T-junction extensions. In
other words, $\mathsf{T}_{ext} = \mathsf{T} \cup ext(\mathsf{T})$. We
denote the total number of vertices in $\mathsf{T}_{ext}$ by
$n^{ext}$. The extended T-mesh corresponding to the T-mesh in
Figure~\ref{fig:tmesh} is shown in Figure~\ref{fig:tmesh_extended}.

\begin{figure}[htb]
\centering
~\\[-1ex]
\subfigure {\includegraphics [width=3.5in]{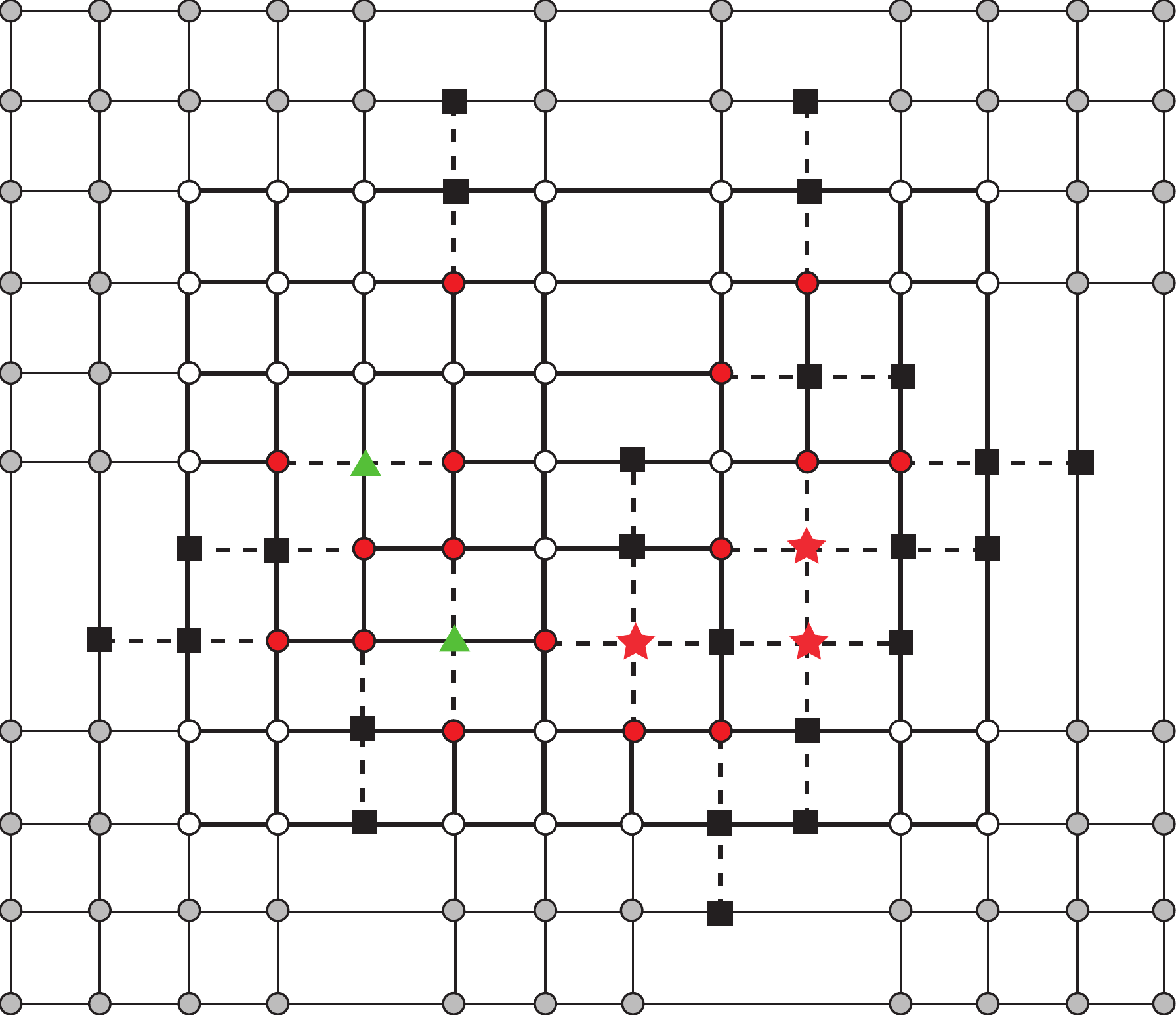}}
~\\[-1ex]
\caption{The extended T-mesh corresponding to the T-mesh in
  Figure~\ref{fig:tmesh}.  The crossing vertices are denoted by red stars, the overlap
vertices are denoted by green triangles, and the extended
vertices are denoted by black squares. Notice that the inactive vertices (grey circles)
are also regarded as extended vertices. The active vertices are denoted by hollow
circles and the T-junctions are denoted by red circles.}
\label{fig:tmesh_extended}
~\\[-1ex]
\end{figure}

Adding T-junction extensions to $\mathsf{T}$ may introduce three
additional collections of vertices. The first, called crossing
vertices and denoted by $\mathsf{CV}$, is
created from the intersection of crossing face extensions. In other
words, $$\mathsf{CV} = hext(\mathsf{T}) \cap vext(\mathsf{T}).$$ We
denote the number of crossing vertices in $\mathsf{T}_{ext}$ by
$n^+$. In Figure~\ref{fig:tmesh_extended} the crossing vertices are
denoted by red stars.

The second, called overlap vertices and denoted by $\mathsf{OV}$, is
created from the intersection of overlapping face extensions with
$\mathsf{S}$. In other words, $$\mathsf{OV} = ((\bigcap_{T_h \in
  \mathsf{J}} ext^f(T_h)) \cap v\mathsf{S}) \cup  ((\bigcap_{T_v \in
  \mathsf{J}} ext^f(T_v)) \cap h\mathsf{S}).$$ We denote the number of
overlap vertices in $\mathsf{T}_{ext}$ by $n^-$.  In
Figure~\ref{fig:tmesh_extended} the overlap vertices are
denoted by green triangles.

The third, called
extended vertices and denoted by $\mathsf{EV}$, is created from the
intersection of face extensions and $\mathsf{S}$ while removing those
vertices which already correspond to overlap vertices. Additionally,
all non-anchor vertices are classified as extended vertices. In other
words, $$\mathsf{EV} = ((ext^f(\mathsf{T}) \cap \mathsf{S}) \setminus
\mathsf{OV}) \cup (\mathsf{V} \setminus \mathsf{A}(\mathsf{T})).$$ We
denote the number of extended vertices in $\mathsf{T}_{ext}$ by
$n^*$.  In Figure~\ref{fig:tmesh_extended} the extended vertices are
denoted by black squares.

\section{The parametric domain and T-spline spaces}
\label{sec:ts_spaces}
Let $\Xi = (\xi_{\underline{m}}, \ldots,
\xi_{\overline{m}})$ and $\Pi = (\eta_{\underline{n}}, \ldots, \eta_{\overline{n}})$
be two global knot vectors defined on the interval $\mathbb{R}$. Interior
knots may have a multiplicity of three while end knots may have a
multiplicity of four. The global knot vectors
define a full parametric domain, $\tilde{\Omega} \subset
\mathbb{R}^2$, where $\tilde{\Omega} = [\xi_{\underline{m}}, \xi_{\overline{m}}] \otimes
[\eta_{\underline{n}}, \eta_{\overline{n}}]$ and a reduced parametric domain,
$\hat{\Omega} \subset \tilde{\Omega}$, where $\hat{\Omega} =
[\xi_{\underline{m}+3}, \xi_{\overline{m}-3}] \otimes
[\eta_{\underline{n}+3}, \eta_{\overline{n}-3}]$. The T-mesh in the
parametric domain is defined as the collection of
\textit{non-empty} elements of
the form $\tilde{Q} = ]\xi_{i_1}, \xi_{i_2}[ \times ]\eta_{j_1},
\eta_{j_2}[$ where $Q=]i_1, i_2[ \times ]j_1, j_2[ \in
\mathsf{T}$. We denote those elements where $\tilde{Q} \cap
\hat{\Omega} \neq \emptyset$ by $\hat{Q}$. The extended T-mesh in the
parametric domain as well as all element related concepts
are defined similarly. Throughout this paper we use the index and
parametric representation of a T-mesh interchangeably with the context
making the use clear.

For each anchor $A = a \times b \in
\mathsf{A}(\mathsf{T})$ we define its horizontal (vertical) index
vector $hv(A)$ ($vv(A)$, respectively) as a subset of $h\mathsf{J}(b)$
($v\mathsf{J}(a)$, respectively) where $hv(A) = (i_1, \ldots, i_5) \in
\mathbb{Z}^5$ contains five unique consecutive indices in
$h\mathsf{J}(b)$ with $\{i_3\} = a$. The vertical index vector,
denoted by $vv(A)$, is constructed in an analogous manner. We then
associate a T-spline blending function $N_A(\xi, \eta)$ with anchor
$A$. The T-spline blending functions are given by
\begin{equation}
N_A(\xi, \eta) := B[\Xi_A](\xi) B[\Pi_A](\eta) \quad \forall (\xi,
\eta) \in \hat{\Omega}
\end{equation}
where $B[\Xi_A](\xi)$ and $B[\Pi_A](\eta)$ are the cubic B-spline
basis functions associated with the local knot vectors
\begin{align}
\Xi_A &= [\xi_{i_1}, \ldots, \xi_{i_5}] \subset \Xi \\
\Pi_A &= [\eta_{j_1}, \ldots, \eta_{j_5}] \subset \Pi
\end{align}
and $hv(A) = (i_1, \ldots, i_5)$ and $vv(A) = (j_1, \ldots, j_5)$.

Figure~\ref{fig:bf} illustrates the construction of a T-spline
blending function corresponding to anchor $A=\{(3,3)\}$. In
this case, the local knot vectors are $\Xi_A = [\xi_1, \xi_2, \xi_3,
\xi_4, \xi_6]$ and $\Pi_A = [\eta_1, \eta_2, \eta_3, \eta_4,
\eta_5]$.

\begin{figure}[htb]
\centering
~\\[-1ex]
\subfigure {\includegraphics [width=3.5in]{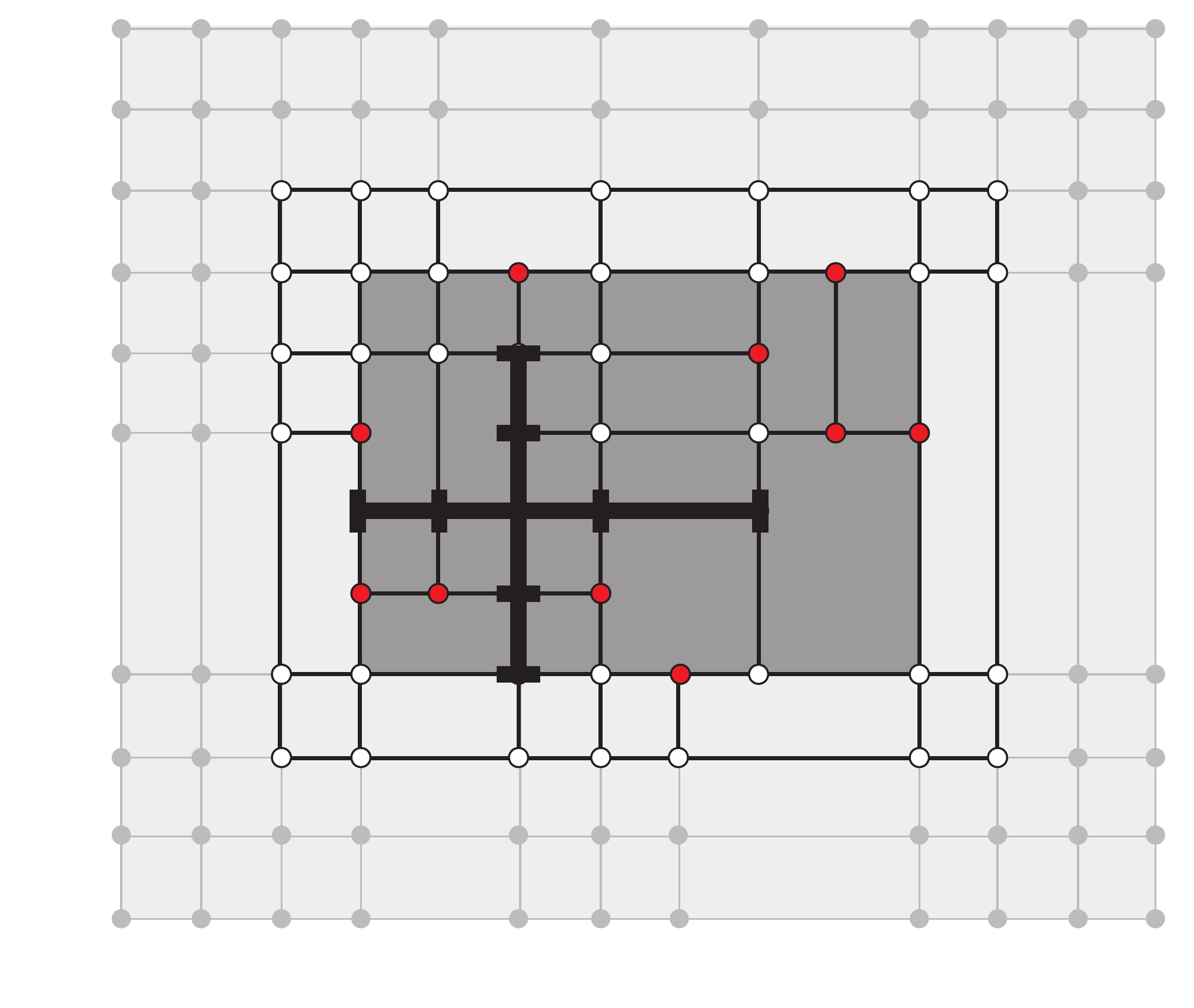}}
~\\[-1ex]
\caption{Inferring a T-spline blending function from a T-mesh.} \label{fig:bf}
~\\[-1ex]
\end{figure}

A T-spline space $\mathcal{T} = \mathcal{T}(\mathsf{T}, \Xi, \Pi)$ is
simply the span of the blending functions, $N_A$, $A \in
\mathsf{A}(\mathsf{T})$.

\section{Analysis-suitable T-splines}
\label{sec:asts}
Analysis-suitable T-splines form a practically useful subset of T-splines.
ASTS maintain the important mathematical
properties of the NURBS basis while providing an efficient and highly
localized refinement capability. Several important properties of
ASTS have been proven:
\begin{itemize}
\item The blending functions are linearly
  independent for \textit{any} choice of knots~\cite{LiZhSeHuSc10}.
\item The basis constitutes a partition of unity (see Corollary~\ref{cor:pu}).
\item Each basis function is non-negative.
\item They can be generalized to arbitrary degree~\cite{BeBuSaVa12}.
\item An affine transformation of an analysis-suitable T-spline is
  obtained by applying the transformation to the control points. We
  refer to this as affine covariance. This implies that all ``patch
  tests'' (see~\cite{Hug00}) are satisfied \textit{a priori}.
\item They obey the convex hull property.
\item They can be locally refined~\cite{SeZhBaNa03,ScLiSeHu10}.
\item A dual basis can be constructed~\cite{BeBuChSa12,BeBuSaVa12}.
\end{itemize}

\begin{definition}
An analysis-suitable T-spline is a T-spline whose T-mesh is
analysis-suitable~\cite{LiZhSeHuSc10}. A T-mesh is said to be
analysis-suitable if it is admissible and no horizontal T-junction extension
intersects a vertical T-junction extension.
\end{definition}

An analysis-suitable T-mesh is shown in
Figure~\ref{fig:as_tmesh}a. The corresponding extended T-mesh is shown in
Figure~\ref{fig:as_tmesh}b. Notice that no horizontal extension
intersects a vertical extension. The dual basis for an ASTS equips these spaces with a rich
mathematical structure which we leverage in this
paper~\cite{BeBuChSa12}.
\begin{lemma}
For a bicubic ASTS, each dual basis function, corresponding to a
T-spline basis function $N_A(\xi, \eta)$ with local knot vectors $\Xi_A =
[\xi_{i_1}, \ldots, \xi_{i_5}]$ and $\Pi_A = [\eta_{j_1}, \ldots,
\eta_{j_5}]$, is
$$\lambda_{A} = \lambda[\xi_{i_1}, \ldots, \xi_{i_5}]\otimes
\lambda[\eta_{j_1}, \ldots, \eta_{j_5}]$$
where $\lambda[\xi_{i_1}, \ldots, \xi_{i_5}]$ and $\lambda[\eta_{j_1},
\ldots, \eta_{j_5}]$ are dual basis functions corresponding to
univariate cubic B-splines~\cite{Schu93} whose knot vectors are
$[\xi_{i_1}, \ldots, \xi_{i_5}]$ and $[\eta_{j_1},
\ldots, \eta_{j_5}]$, respectively.
\end{lemma}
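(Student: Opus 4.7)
The plan is to lift the well-known univariate dual basis construction of Schumaker to the bivariate ASTS setting by exploiting dual compatibility, which is the structural consequence of the ASTS definition that the paper has just alluded to. I will need to verify two things: (i) the tensor-product functional $\lambda_A$ is well-defined on the span of the ASTS blending functions, and (ii) it acts as the Kronecker delta on the basis $\{N_B : B\in\mathsf{A}(\mathsf{T})\}$.

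First I would recall the univariate fact (from Schumaker): given a global knot vector $\Xi$ and the functional $\lambda[\xi_{i_1},\ldots,\xi_{i_5}]$ dual to the cubic B-spline $B[\xi_{i_1},\ldots,\xi_{i_5}]$, one has the biorthogonality $\lambda[\Xi_A](B[\Xi_B]) = \delta_{AB}$ \emph{whenever} $\Xi_A$ and $\Xi_B$ are five-tuples of consecutive knots drawn from a common refinement knot vector. This is the only nontrivial univariate input.

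Next I would invoke dual compatibility of ASTS from \cite{BeBuChSa12}: for any pair of anchors $A,B\in\mathsf{A}(\mathsf{T})$, the horizontal (resp.\ vertical) local index vectors $hv(A)$ and $hv(B)$ (resp.\ $vv(A)$ and $vv(B)$) can be viewed as consecutive sub-tuples of a single common index vector. This is where the analysis-suitability hypothesis enters in an essential way: a horizontal T-junction extension crossing a vertical one would produce a pair of anchors whose local knot vectors could not be embedded in a common univariate refinement, thereby breaking the univariate biorthogonality used in the previous step. With dual compatibility in hand, applying the tensor-product functional gives
\begin{equation}
\lambda_A(N_B) \;=\; \lambda[\Xi_A]\!\bigl(B[\Xi_B]\bigr)\cdot \lambda[\Pi_A]\!\bigl(B[\Pi_B]\bigr) \;=\; \delta_{AB}\,\delta_{AB} \;=\; \delta_{AB},
\end{equation}
which is exactly the claim.

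The main obstacle is not the tensor-product calculation itself, which is formal once biorthogonality is in place, but rather checking that dual compatibility is available in the form required here. I would therefore spend the bulk of the proof citing or sketching the argument of \cite{BeBuChSa12} that shows, for every pair of anchors in an ASTS, the pair of horizontal (resp.\ vertical) local index vectors lies inside a common ordered index sequence extracted from $h\mathsf{J}$ (resp.\ $v\mathsf{J}$). The closure of the span under this pairing — i.e., that $\lambda_A$ extends linearly to all of $\mathcal{T}(\mathsf{T},\Xi,\Pi)$ and is independent of the particular representation — is a routine consequence once the biorthogonality on the generating set $\{N_B\}$ is established, given that ASTS blending functions are linearly independent \cite{LiZhSeHuSc10}.
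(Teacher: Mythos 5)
First, a point of comparison: the paper does not actually prove this lemma; it is quoted as a known result with a bare citation to \cite{BeBuChSa12}. So your reconstruction of that reference's argument is the right target, and the overall strategy --- univariate biorthogonality of Schumaker's dual functionals, lifted through the tensor-product structure via dual compatibility --- is the correct one.

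There is, however, a concrete error in the form of dual compatibility you propose to establish, and it sits precisely in the step you say you would spend the bulk of the proof on. Dual compatibility does \emph{not} assert that for every pair of anchors both the horizontal and the vertical local index vectors embed as consecutive sub-tuples of a common sequence; it asserts that for every pair of anchors with overlapping supports \emph{at least one} of the two directions does. The two-directional version is false for any genuinely non-tensor-product ASTS: $hv(A)$ is extracted from $h\mathsf{J}(b)$ for the row $b$ containing $A$, and two anchors in different rows can see different sets of vertical lines, so that one horizontal index vector contains an index lying strictly between two indices that are adjacent in the other --- no common refinement exists in that direction. Consequently your displayed identity $\lambda_A(N_B)=\delta_{AB}\cdot\delta_{AB}$ does not hold factor-by-factor: in a non-overlapping direction the univariate factor is merely some bounded number, not $0$ or $1$. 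The repair is standard but must be made explicit: (i) if the supports of $N_A$ and $N_B$ are disjoint, $\lambda_A(N_B)=0$ because the dual functional is supported inside the support of $N_A$; (ii) otherwise choose a direction in which the local knot vectors overlap --- if they differ there, that single factor is $0$ and the product vanishes; (iii) if they coincide in that direction with $A\ne B$, the two anchors share a column (resp.\ row), so the other direction's index vectors are both consecutive sub-tuples of the same $v\mathsf{J}$ (resp.\ $h\mathsf{J}$) and differ, and that factor is $0$; only $A=B$ gives $1\cdot 1=1$. Had you pursued the two-directional overlap as planned, you would have been trying to prove a false statement, so this is more than a cosmetic slip, even though the intended conclusion survives under the correct one-directional hypothesis.
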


\begin{figure}[htb]
\centering
~\\[-1ex]
\subfigure {\includegraphics [width=4in]{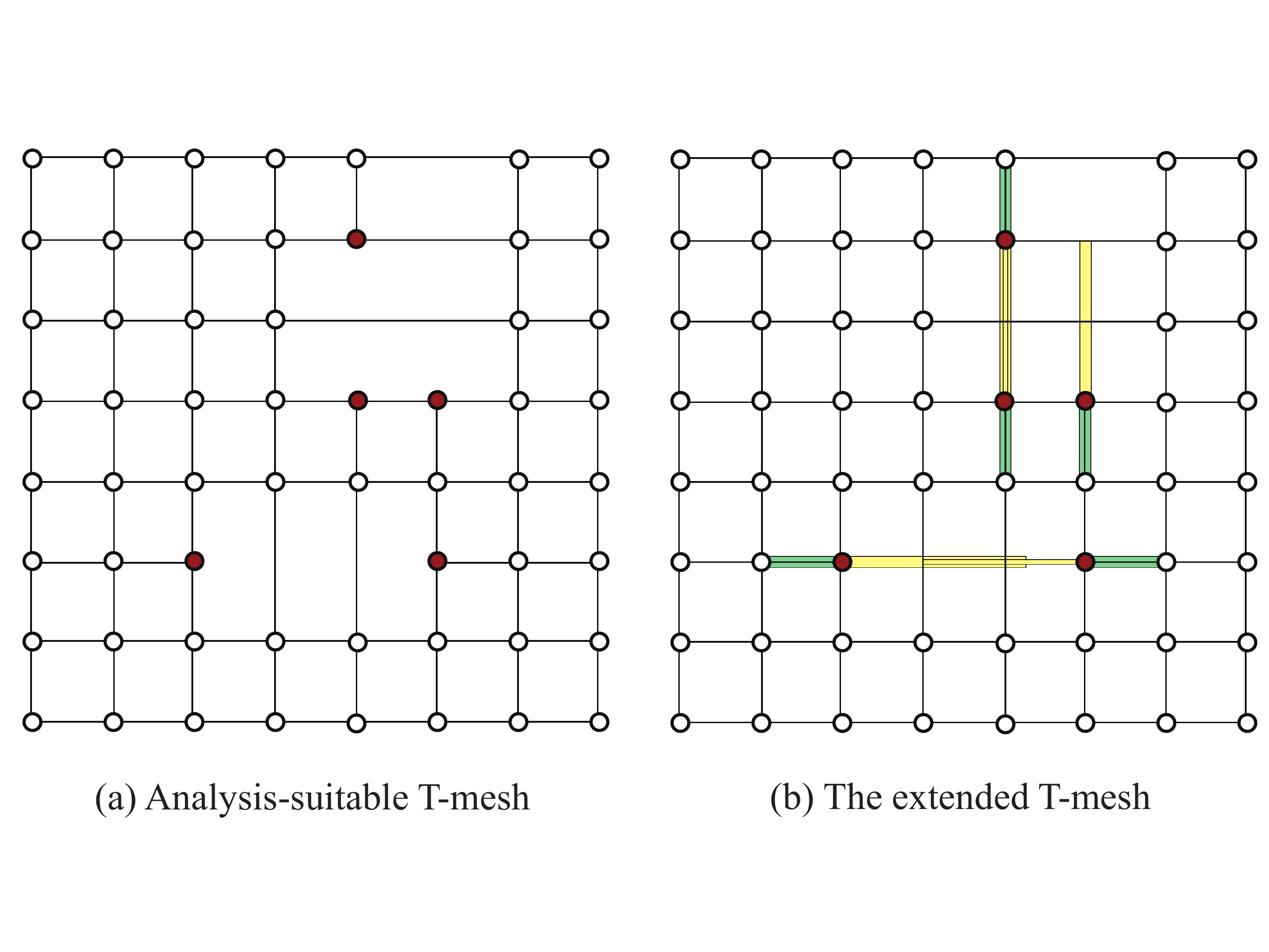}}
~\\[-1ex]
\caption{An analysis-suitable T-mesh (a) and the extended T-mesh (b).}
\label{fig:as_tmesh}
~\\[-1ex]
\end{figure}

\section{Perturbed T-splines}
\label{sec:perturb}
From a theoretical point of view, developing a complete and rigorous
characterization of T-spline spaces is
complicated by the presence of zero knot intervals (especially near
T-junctions) and overlap vertices.  However, allowing both is important when
T-splines are used as a tool in design and analysis.

To overcome this difficulty, we develop the theory of the perturbed T-mesh (and
resulting perturbed T-spline space). A perturbed T-spline can be used to prove
properties about the \textit{original} T-spline. In other words, we
will generate a perturbed T-mesh, establish the result in the
perturbed setting, and then show that the result holds as the
perturbation converges to the original T-spline.

\subsection{Perturbed T-meshes}
\label{subsec:perturb}
A perturbed T-mesh is created by first generating perturbed global knot
vectors, $\Xi[\boldsymbol{\delta}]$, $\Pi[\boldsymbol{\delta}]$, where
$\boldsymbol{\delta} =(\delta_1, \ldots, \delta_n)^T$ is a vector of
perturbation parameters. A perturbed global knot vector is
written as $$\Xi[\boldsymbol{\delta}] = (\xi[\boldsymbol{\delta}]_{\underline{\imath}}, \ldots,
\xi[\boldsymbol{\delta}]_{\imath}, \ldots,
\xi[\boldsymbol{\delta}]_{\overline{\imath}})$$ where $\imath = \imath(i,g)$ takes
the index of the $i^{th}$ knot in $\Xi$ and the $g^{th}$ segment in
$v\mathsf{G}(\{i\})$ and returns a unique index in the perturbed global
knot vector. The knot values are initialized as $\xi[\boldsymbol{\delta}]_{\imath(i,g)}
= \xi_i$. In other words, a knot index which corresponds to a
$v\mathsf{G}(\{i\})$ which contains multiple
segments in the T-mesh is repeated $n_v^G(\{i\})$ times. Notice that
this operation induces an index map $h\pi(\imath(i,g)) = i$ (resp., $v\pi$) from the
indices in the perturbed global knot vector onto the original global knot
vector. The knot values are then perturbed using a small parameter
$\delta \in \mathbb{R}$ as
\begin{equation*}
\xi[\boldsymbol{\delta}]_{\imath} = \xi[\boldsymbol{\delta}]_{\underline{\imath}} + \sum_{\jmath=\underline{\imath} +1}^{\imath}
\Delta \xi[\boldsymbol{\delta}]_{\jmath}, \quad \imath = \underline{\imath}, \ldots, \overline{\imath}
\end{equation*}
where $\Delta \xi[\boldsymbol{\delta}]_{\jmath} = c_{\alpha(\jmath)}\delta = \delta_{\alpha}$, if $\xi_{h\pi(\jmath)} - \xi_{h\pi({\jmath}-1)} =
0$, and is equal to $\xi_{h\pi(\jmath)} - \xi_{h\pi({\jmath}-1)}$,
otherwise. The constant, $c_{\alpha} \in [0, \infty)$. This same
procedure is applied to $\Pi$ to form
$\Pi[\boldsymbol{\delta}]$. The T-mesh, $\mathsf{T}$, is then modified to form the
perturbed T-mesh, $\mathsf{T}[\boldsymbol{\delta}]$, by associating the
vertices and edges contained in the $g^{th}$ segment of
$v\mathsf{G}(\{i\})$ with knot $\xi[\boldsymbol{\delta}]_{\imath(i, g)}$. Notice
that the number of anchors does not change when forming
$\mathsf{T}[\boldsymbol{\delta}]$. A perturbed T-spline space $\mathcal{T}[\boldsymbol{\delta}]$
is a T-spline space formed from perturbed global knot vectors and
T-mesh. A \textit{strictly} perturbed T-mesh or T-spline space
is one where $c_{\alpha} \in (0, \infty)$.

A perturbation of an analysis-suitable T-mesh is shown in
Figure~\ref{fig:perturb}. The analysis-suitable T-mesh is shown in
Figure~\ref{fig:perturb}a and the perturbed T-mesh is shown in
Figure~\ref{fig:perturb}b. Knot intervals
are shown instead of knots for simplicity. Recall that a knot interval
is simply the difference between adjacent knots in a global knot vector. Notice that the
horizontal and vertical zero knot intervals have been replaced by
non-zero knot intervals $\sigma_2$ and $\sigma_3$. The vertical
segments with T-junctions are perturbed resulting in a new knot
interval $\sigma_1$.

\begin{figure}[htb]
\centering
~\\[-1ex]
\subfigure {\includegraphics [width=4.5in]{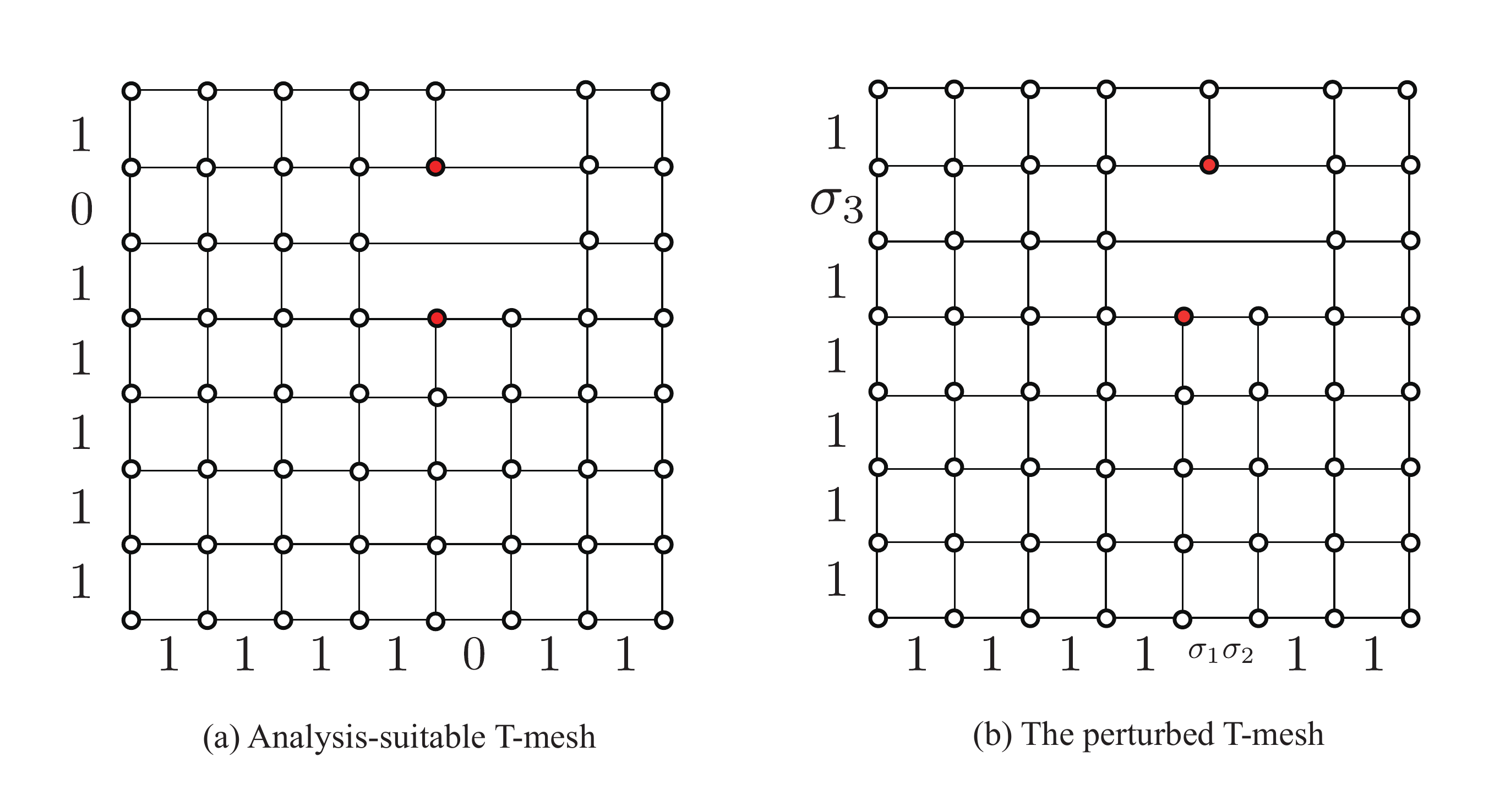}}
~\\[-1ex]
\caption{A perturbation of an analysis-suitable T-mesh $\mathsf{T}$
  shown in (a) results in the perturbed T-mesh
  $\mathsf{T}[\boldsymbol{\delta}]$ shown in (b).}
\label{fig:perturb}
~\\[-1ex]
\end{figure}

\begin{proposition}
If $\mathsf{T}$ is analysis-suitable then $\mathsf{T}[\boldsymbol{\delta}]$ is
analysis-suitable.
\label{lemma:offset_asts}
\end{proposition}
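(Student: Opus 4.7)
The plan is to show that the perturbation is fundamentally a combinatorial operation: it refines the global indexing but preserves all the discrete structure that defines analysis-suitability. First, I would set up a natural bijection $\phi$ between the vertices, edges, and faces of $\mathsf{T}$ and those of $\mathsf{T}[\boldsymbol{\delta}]$. By construction the number of anchors is unchanged; each anchor $A=\{(a,b)\}$ maps to the unique perturbed anchor $\{(\imath(a,g_a),\jmath(b,g_b))\}$, where $g_a,g_b$ identify the segments of $v\mathsf{G}(\{a\})$ and $h\mathsf{G}(\{b\})$ containing $A$. Each edge of $\mathsf{T}$ maps to a unique edge in $\mathsf{T}[\boldsymbol{\delta}]$ sitting on the corresponding perturbed segment, so valences, and hence T-junction orientations, are preserved.

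Next, I would verify admissibility of $\mathsf{T}[\boldsymbol{\delta}]$. The six boundary columns and six boundary rows in the frame region, as well as the four distinguished interior rows and columns required by the active-region conditions, each form a single segment in $\mathsf{T}$, so $n^G_v$ and $n^G_h$ equal one at those indices; they are not split by the perturbation and appear unchanged in $\mathsf{T}[\boldsymbol{\delta}]$. The remaining admissibility condition is the topological statement that any two vertices on the boundary of a face sharing an index are connected by a skeleton segment. Two such vertices lie on a common segment of $v\mathsf{G}(\{i\})$ or $h\mathsf{G}(\{j\})$ in $\mathsf{T}$, which maps through $\phi$ to a common perturbed segment, so the condition is inherited by $\mathsf{T}[\boldsymbol{\delta}]$.

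The substantive step is non-intersection of horizontal and vertical T-junction extensions. I would argue that along any horizontal slice $\{\bar{\jmath}\}$, $\phi$ induces an order-preserving bijection between $h\mathsf{J}(\{\bar{\jmath}\})$ and the corresponding perturbed index set in $\mathsf{T}[\boldsymbol{\delta}]$: each vertex $(k,\bar{\jmath})$ lies on a unique segment of $v\mathsf{G}(\{k\})$, giving a unique perturbed index $\imath(k,g_k)$; for small $\delta$ distinct original indices retain their order, and split copies of a single index are ordered consistently by their segment subscript. Consequently the four consecutive indices defining $ext(T_h)$ map to the four consecutive indices defining $ext(\phi(T_h))$, and analogously for vertical T-junctions. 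An intersection $ext(\phi(T_h))\cap ext(\phi(T_v))\neq\emptyset$ in $\mathsf{T}[\boldsymbol{\delta}]$ would therefore pull back to $ext(T_h)\cap ext(T_v)\neq\emptyset$ in $\mathsf{T}$, contradicting analysis-suitability of $\mathsf{T}$.

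The main obstacle I anticipate is making the order-preserving bijection of the last paragraph airtight in the presence of multiple coincident knots: the split copies $\imath(i,1),\dots,\imath(i,n^G_v(\{i\}))$ start at the common value $\xi_i$, and I must verify that the accumulated increments $\Delta\xi[\boldsymbol{\delta}]_\jmath$ place them into a consistent ordering across all horizontal slices, so that a single $\phi$ simultaneously realigns every T-junction extension. The key observation is that the segment subscript attached to each perturbed copy is a global label assigned once per vertex, and the smallness of $\delta$ ensures no pair of originally distinct knots is inverted; together these facts promote the per-slice bijections into the single structural correspondence needed to transport non-intersection from $\mathsf{T}$ to $\mathsf{T}[\boldsymbol{\delta}]$.
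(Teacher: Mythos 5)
Your proof is correct and follows essentially the same route as the paper's: both transport a hypothetical intersection of a horizontal and a vertical extension in $\mathsf{T}[\boldsymbol{\delta}]$ back to $\mathsf{T}$ via the order-preserving correspondence between perturbed and original indices (your $\phi$ versus the paper's projections $h\pi$, $v\pi$), contradicting the analysis-suitability of $\mathsf{T}$. You additionally verify admissibility of the perturbed T-mesh, a part of the definition that the paper's proof silently omits, but the substantive step is the same.
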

\begin{proof}
Suppose the extensions of two T-junctions
$ext(V_{1}) = [i_{1}^{1}, i_{4}^{1}] \times \{j^{1}\}$ ($\vdash$ or $\dashv$)
and $ext(V_{2}) = \{i^{2}\} \times [j_{1}^{2}, j_{4}^{2}]$ ($\bot$ or
$\top$) in $\mathsf{T}[\boldsymbol{\delta}]$
intersect. This implies that $j_{1}^{2} \leq j^{1} \leq j_{4}^{2}$ and
$i_{1}^{1} \leq i^{2} \leq i_{4}^{1}$. According to the construction
of $\mathsf{T}[\boldsymbol{\delta}]$, we don't change the order
of the indices, so $h\pi(j_{1}^{2}) \leq h\pi(j^{1}) \leq h\pi(j_{4}^{2})$ and
$v\pi(i_{1}^{1}) \leq v\pi(i^{2}) \leq v\pi(i_{4}^{1})$, i.e., there are
intersecting extensions in $\mathsf{T}$.
\end{proof}

\begin{lemma}
Let $\mathsf{T}$ be an analysis-suitable T-mesh. For every anchor,
$\{(\imath, \jmath)\}$, and horizontal index vector,
$hv(\{(\imath, \jmath)\})[\boldsymbol{\delta}]$, in the perturbed
T-mesh, $\mathsf{T}[\boldsymbol{\delta}]$, $hv(\{(h\pi(\imath),v\pi(\jmath))\}) =
h\pi(hv(\{(\imath, \jmath)\})[\boldsymbol{\delta}])$ where
$h\pi(hv(\{(\imath, \jmath)\})[\boldsymbol{\delta}]) = (h\pi(\imath_1),
\ldots, h\pi(\imath_5) )$. This also holds for the vertical
index vectors.
\label{lemma:map_knots}
\end{lemma}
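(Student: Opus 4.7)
The plan is to show that the map $\Phi(\imath') := h\pi(\imath')$ restricted to $h\mathsf{J}[\boldsymbol{\delta}](\jmath)$ is an order-preserving bijection onto $h\mathsf{J}(v\pi(\jmath))$ in a neighborhood of $\imath$. Once this is established, the lemma follows immediately, since both horizontal index vectors are by definition the five consecutive entries centered at the corresponding anchor column, and the vertical index vector case is symmetric in the roles of rows and columns.

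Well-definedness, order-preservation, and injectivity of $\Phi$ are routine. Every perturbed vertical segment at column $\imath(i,g)$ is obtained from a segment of $v\mathsf{G}(\{i\})$ by relabeling endpoints via $\imath(\cdot,\cdot)$ and $\jmath(\cdot,\cdot)$; applying $v\pi$ then shows that $(h\pi(\imath'), v\pi(\jmath))$ lies on the original vertical skeleton whenever $\imath' \in h\mathsf{J}[\boldsymbol{\delta}](\jmath)$. Order-preservation is immediate from the construction of the perturbed knot vector. For injectivity, two distinct perturbed indices $\imath(i,g_1) \neq \imath(i,g_2)$ correspond to two segments of $v\mathsf{G}(\{i\})$ with disjoint row ranges, so at most one of them can cross row $v\pi(\jmath)$, and hence at most one lies in $h\mathsf{J}[\boldsymbol{\delta}](\jmath)$.

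The main step is local surjectivity around $\imath$: no $i^* \in h\mathsf{J}(v\pi(\jmath))$ lies strictly between $h\pi(\imath_k)$ and $h\pi(\imath_{k+1})$ for a consecutive pair in $hv(A)[\boldsymbol{\delta}]$. Suppose one did. If $(i^*, v\pi(\jmath))$ is an interior edge point, a 4-valent vertex, or a horizontal ($\vdash$ or $\dashv$) T-junction, the vertical segment through it strictly contains $v\pi(\jmath)$ in its row range, hence its perturbed image strictly contains $\jmath$; this yields a perturbed index in $h\mathsf{J}[\boldsymbol{\delta}](\jmath)$ strictly between $\imath_k$ and $\imath_{k+1}$, contradicting consecutiveness. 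The same contradiction applies when $(i^*, v\pi(\jmath))$ is a vertical ($\top$ or $\bot$) T-junction whose horizontal segment matches the one supporting $\jmath$. The only remaining subcase is a $\top$ or $\bot$ T-junction on a horizontal segment distinct from that supporting $\jmath$: here the T-junction emits a vertical face extension through $(i^*, v\pi(\jmath))$, while the closeness of $i^*$ to $h\pi(\imath)$ in $h\mathsf{J}(v\pi(\jmath))$ forces the $\vdash$ or $\dashv$ T-junction that terminates the anchor's horizontal segment (or the anchor itself, if it carries that orientation) to emit a horizontal face extension that reaches column $i^*$. These two extensions meet at $(i^*, v\pi(\jmath))$, contradicting analysis-suitability of $\mathsf{T}$.

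The hard part is this last subcase, where one has to convert closeness in $h\mathsf{J}(v\pi(\jmath))$ into a concrete reach of a horizontal face extension across a gap between horizontal segments. I would handle it by enumerating the four possible orientations of the T-junction at the appropriate endpoint of the anchor's horizontal segment, together with the four-consecutive-index definition of face extensions and a rank argument in $h\mathsf{J}(v\pi(\jmath))$, to confirm that the relevant horizontal face extension always reaches $i^*$.
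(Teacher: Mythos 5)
Your route is genuinely different from the paper's. The paper passes immediately to the perturbed mesh, invokes Proposition~\ref{lemma:offset_asts} to get that $\mathsf{T}[\boldsymbol{\delta}]$ is itself analysis-suitable, and then argues symbolically: the only delicate case is an anchor of type $\vdash$ or $\dashv$, and analysis-suitability of the \emph{perturbed} mesh constrains the symbols of the two indices on the face side, from which the identity of the index vectors is read off. You instead treat $h\pi$ as an order-preserving injection of the perturbed crossing set into $h\mathsf{J}(v\pi(\jmath))$ and reduce everything to local surjectivity, using analysis-suitability of the \emph{original} mesh. Your decomposition is more explicit about where the content lies: a crossing can only be lost under perturbation at a $\top$/$\bot$ T-junction sitting on a horizontal segment different from the anchor's, and that is precisely the configuration analysis-suitability forbids near an anchor. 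The well-definedness, monotonicity, and injectivity steps, and your cases of an interior edge point, a valence-four vertex, a $\vdash$/$\dashv$, and a $\top$/$\bot$ on the anchor's own horizontal segment, are all correct as sketched.

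The crux, however --- your last subcase --- is announced rather than proved, and it is the only place the argument can fail. To close it you need the following chain, which does work but must be written down. Take $i^*$ to be the skipped element of $h\mathsf{J}(v\pi(\jmath))$ \emph{nearest} to $h\pi(\imath)$, say on the right, and assume $i^* < h\pi(\imath_5)$; minimality is essential, since a skipped index deep inside the gap $(h\pi(\imath_4), h\pi(\imath_5))$ is not reachable by any single extension. Because $(i^*, v\pi(\jmath))$ lies on a horizontal segment other than the anchor's, the anchor's segment terminates at an interior $\dashv$ at some column $i_e$ with $h\pi(\imath) \le i_e < i^*$ (were the segment to reach the frame it would contain $(i^*, v\pi(\jmath))$, a contradiction). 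The $\dashv$ has vertical edges on both sides, so $i_e$ is not skipped; by minimality of $i^*$ and monotonicity, every element of $h\mathsf{J}(v\pi(\jmath))$ in $[h\pi(\imath), i^*)$ equals $h\pi(\imath_3)$ or $h\pi(\imath_4)$, hence at most one such element lies strictly between $i_e$ and $i^*$. Therefore $i^*$ is among the next two elements of $h\mathsf{J}(v\pi(\jmath))$ after $i_e$, so the face extension of the $\dashv$ at $(i_e, v\pi(\jmath))$ contains $(i^*, v\pi(\jmath))$, which also lies in the vertical face extension of the $\top$/$\bot$ there --- the desired violation of analysis-suitability. With this counting supplied (and its mirror image on the left), your proof is complete; as submitted, the decisive step is missing.
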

\begin{proof}
If the topological symbols corresponding to the indices $\{(i_1,
\jmath), \ldots, (i_5, \jmath)\}$ where $i_\ell \in hv(\{(\imath,
\jmath)\})[\boldsymbol{\delta}]$, $\ell = 1,\ldots,5$, are not $\vdash$ or
$\dashv$ then the result immediately follows since the indices are
contained in a single horizontal segment. Thus, we only need to prove
that the result holds when the topological symbol corresponding to
$\{(\imath, \jmath)\}$ is $\vdash$ or
$\dashv$. Without loss of generality we assume it
to be $\vdash$. Since, according to Lemma~\ref{lemma:offset_asts}, the
T-mesh, $\mathsf{T}[\boldsymbol{\delta}]$, is analysis-suitable the
symbols for the first two indices in $hv(\{(\imath,
\jmath)\})[\boldsymbol{\delta}]$ can only be $|$ or $\dashv$. Thus,
$hv(\{(h\pi(\imath),v\pi(\jmath))\}) = h\pi(hv(\{(\imath, \jmath)\})[\boldsymbol{\delta}])$.
\end{proof}

\begin{theorem}
If $\mathcal{T}$ is analysis-suitable and $\mathcal{T}[\boldsymbol{\delta}]$ is an
offset perturbation then $N_A[\boldsymbol{\delta}] \rightarrow N_{\pi(A)}$
as $\delta \rightarrow 0$, $A \in
\mathsf{A}(\mathsf{T}[\boldsymbol{\delta}])$ where $\pi(A) =
\{(h\pi(\imath), v\pi(\jmath))\}$.
\label{thm:converge_basis}
\end{theorem}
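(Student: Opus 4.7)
The plan is to reduce the bivariate claim to two univariate convergences, since by definition $N_A[\boldsymbol{\delta}](\xi,\eta) = B[\Xi_A[\boldsymbol{\delta}]](\xi)\, B[\Pi_A[\boldsymbol{\delta}]](\eta)$ is a tensor product. It therefore suffices to show that $B[\Xi_A[\boldsymbol{\delta}]](\xi) \to B[\Xi_{\pi(A)}](\xi)$ pointwise on $\hat{\Omega}$ for every $A \in \mathsf{A}(\mathsf{T}[\boldsymbol{\delta}])$, and analogously for the vertical factor. The two directions are symmetric so I would work out the horizontal one.

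First, I would use Lemma~\ref{lemma:map_knots} to pin down the bookkeeping: the map $h\pi$ carries $hv(A)[\boldsymbol{\delta}]$ onto $hv(\pi(A))$ entry by entry, so the local knot vector $\Xi_A[\boldsymbol{\delta}] = (\xi[\boldsymbol{\delta}]_{i_1},\ldots,\xi[\boldsymbol{\delta}]_{i_5})$ of the perturbed anchor is matched, index by index, with $\Xi_{\pi(A)} = (\xi_{h\pi(i_1)}, \ldots, \xi_{h\pi(i_5)})$. In particular, no reindexing or relabelling is needed when passing to the limit. Proposition~\ref{lemma:offset_asts} is what guarantees that analysis-suitability survives the perturbation and hence that this lemma applies uniformly in $\boldsymbol{\delta}$.

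Second, I would read off convergence of the knot values directly from the construction in Section~\ref{subsec:perturb}. Each perturbed knot $\xi[\boldsymbol{\delta}]_{\imath}$ is a telescoping sum of intervals $\Delta \xi[\boldsymbol{\delta}]_{\jmath}$, each of which is either the original gap $\xi_{h\pi(\jmath)} - \xi_{h\pi(\jmath-1)}$ (independent of $\delta$) or equals $c_{\alpha(\jmath)}\delta$ (which vanishes as $\delta \to 0$). Since the sum is finite, $\xi[\boldsymbol{\delta}]_{\imath} \to \xi_{h\pi(\imath)}$, so $\Xi_A[\boldsymbol{\delta}] \to \Xi_{\pi(A)}$ componentwise. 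Combining this with the previous step, the remaining ingredient is continuity of the univariate cubic B-spline with respect to its five knots. Using the divided-difference representation $B[t_1,\ldots,t_5](\xi) = (t_5-t_1)\,[t_1,\ldots,t_5]\,(\,\cdot\,-\xi)_+^{3}$, the B-spline depends continuously on the knot sequence, including at configurations with coalescent knots, so $B[\Xi_A[\boldsymbol{\delta}]](\xi) \to B[\Xi_{\pi(A)}](\xi)$ for every $\xi$. Multiplying by the analogous horizontal statement yields $N_A[\boldsymbol{\delta}] \to N_{\pi(A)}$.

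The main obstacle is the behaviour at coalescing knots. The original T-mesh $\mathsf{T}$ may have zero knot intervals (at T-junctions, overlap vertices, and where a single segment index is repeated), so several of the values $\xi_{h\pi(i_\ell)}$ can be equal, whereas the perturbed knots are strictly distinct when $\delta>0$. One has to check that the divided-difference denominators do not blow up in the limit and that the outer spread $t_5 - t_1$ stays bounded away from zero; the first is automatic because the divided difference is symmetric and depends polynomially on $\xi$ after clearing denominators, and the second follows because analysis-suitability, together with Lemma~\ref{lemma:map_knots}, prevents the two extreme indices of $hv(A)[\boldsymbol{\delta}]$ from merging in the limit. Once this is verified, the convergence argument runs through cleanly.
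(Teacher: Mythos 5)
Your proposal is correct and follows essentially the same route as the paper: establish componentwise convergence of the perturbed knots, use Proposition~\ref{lemma:offset_asts} and Lemma~\ref{lemma:map_knots} to match the perturbed local knot vectors with those of $N_{\pi(A)}$, and conclude by continuity of B-splines with respect to their knots. The only difference is that the paper disposes of the final step (including the coalescing-knot issue you discuss) by citing Theorem~4.36 of Schumaker, whereas you sketch that continuity directly via the divided-difference representation.
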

\begin{proof}
Notice that if $\delta \rightarrow 0$,
$\xi[\boldsymbol{\delta}]_{\imath} \rightarrow \xi_{h\pi(\imath)}$ for
any $\imath$ and $\eta[\delta]_{\jmath} \rightarrow
\eta_{v\pi(\jmath)}$ for any $\jmath$. According to Lemma~\ref{lemma:map_knots},
the local knot vectors for each basis function $N_A[\boldsymbol{\delta}]$ converge to those for
$N_{\pi(A)}$, $A \in \mathsf{A}(\mathsf{T}[\boldsymbol{\delta}])$. Thus, according
to (\cite{Schu93}, Theorem 4.36) we have the result.
\end{proof}

\section{Refineability and nestedness}
\label{sec:refine}
We now explore the refineability and nesting behavior of
analysis-suitable T-spline spaces. In
other words, given two analysis-suitable T-splines spaces,
$\mathcal{T}^1$ and $\mathcal{T}^2$, we establish the conditions under
which $\mathcal{T}^1 \subseteq \mathcal{T}^2$. We first establish basic refineability
properties when the analysis-suitable T-mesh does not have any knot
multiplicities or overlap vertices. Using the theory of perturbed T-splines, we then
extend those results to encompass T-meshes
which do have zero knot intervals and overlap vertices.

\begin{definition}
The notation $\mathsf{T}^1[\boldsymbol{\delta}, \mathsf{T}^2]$ denotes
a perturbed T-mesh where $\mathsf{T}^1 \subseteq \mathsf{T}^2$ and
$\mathsf{T}^1[\boldsymbol{\delta}, \mathsf{T}^2]$ is created by
removing those edges and vertices from the strictly perturbed T-mesh
$\mathsf{T}^2[\boldsymbol{\delta}]$ which correspond to unperturbed edges and
vertices in $\mathsf{T}^2 \setminus
\mathsf{T}^1$. By inspection, it is clear that $\mathsf{T}^1[\boldsymbol{\delta}, \mathsf{T}^2]$,
constructed in this way, is a perturbed T-mesh which satisfies
Proposition~\ref{lemma:offset_asts},
Lemma~\ref{lemma:map_knots}, and Theorem~\ref{thm:converge_basis} and
that $\mathsf{T}^1[\boldsymbol{\delta}, \mathsf{T}^2]
\subseteq \mathsf{T}^2[\boldsymbol{\delta}]$.
\end{definition}

The construction of $\mathsf{T}^1[\boldsymbol{\delta},
\mathsf{T}^2]$ is depicted in Figure~\ref{fig:submesh}. Two
analysis-suitable T-meshes are shown in Figure~\ref{fig:submesh}a and
Figure~\ref{fig:submesh}b. Notice that $\mathsf{T}^1 \subseteq
\mathsf{T}^2$. The perturbed T-mesh $\mathsf{T}^1[\boldsymbol{\delta},
\mathsf{T}^2]$ (shown in Figure~\ref{fig:submesh}c) is formed by
removing the dotted lines (shown in Figure~\ref{fig:submesh}c) from
$\mathsf{T}^2[\boldsymbol{\delta}]$ (shown in
Figure~\ref{fig:submesh}d).

\begin{figure}[htb]
\centering
\subfigure {\includegraphics [width=4in]{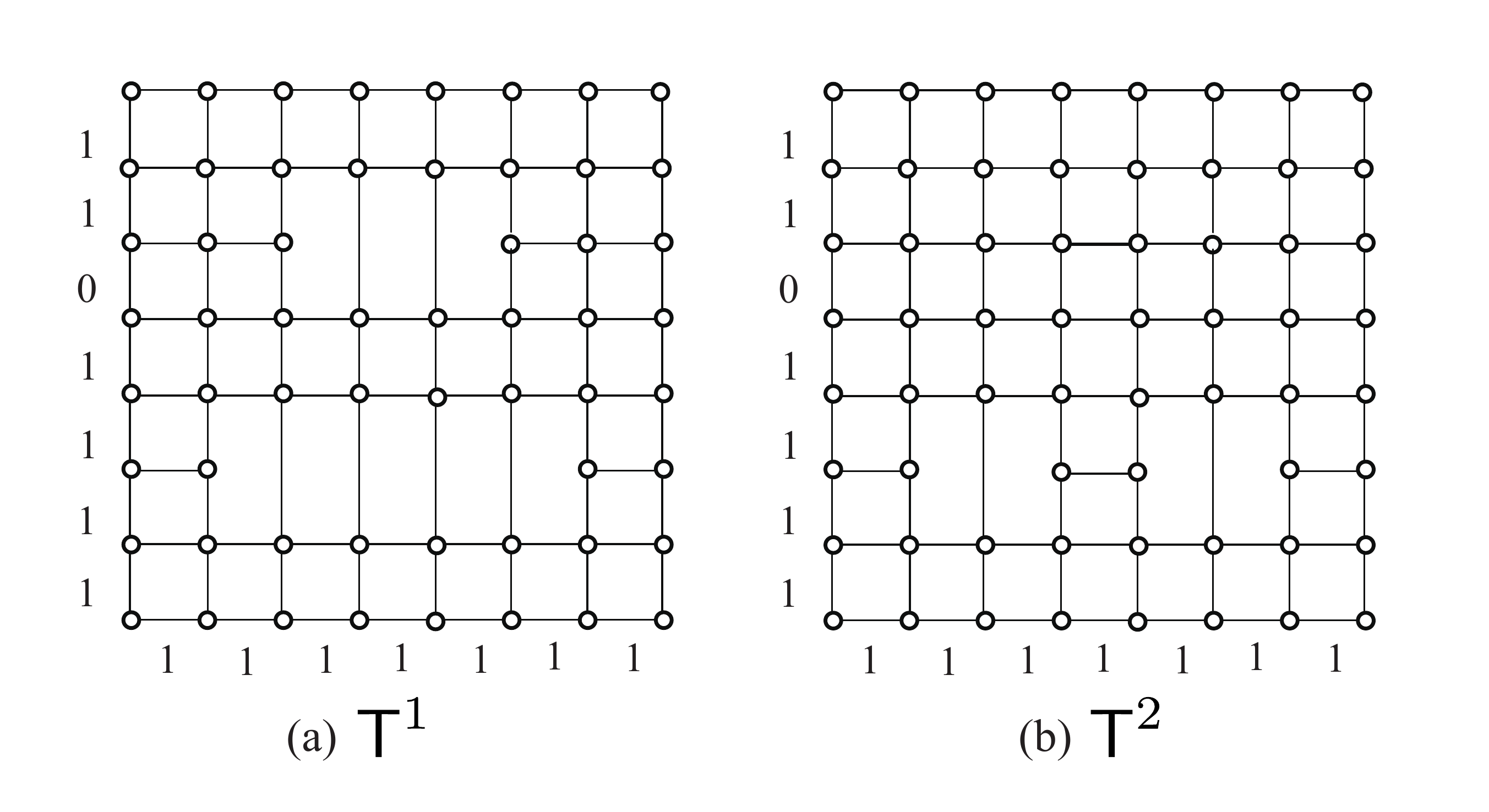}}\\
\subfigure {\includegraphics [width=4in]{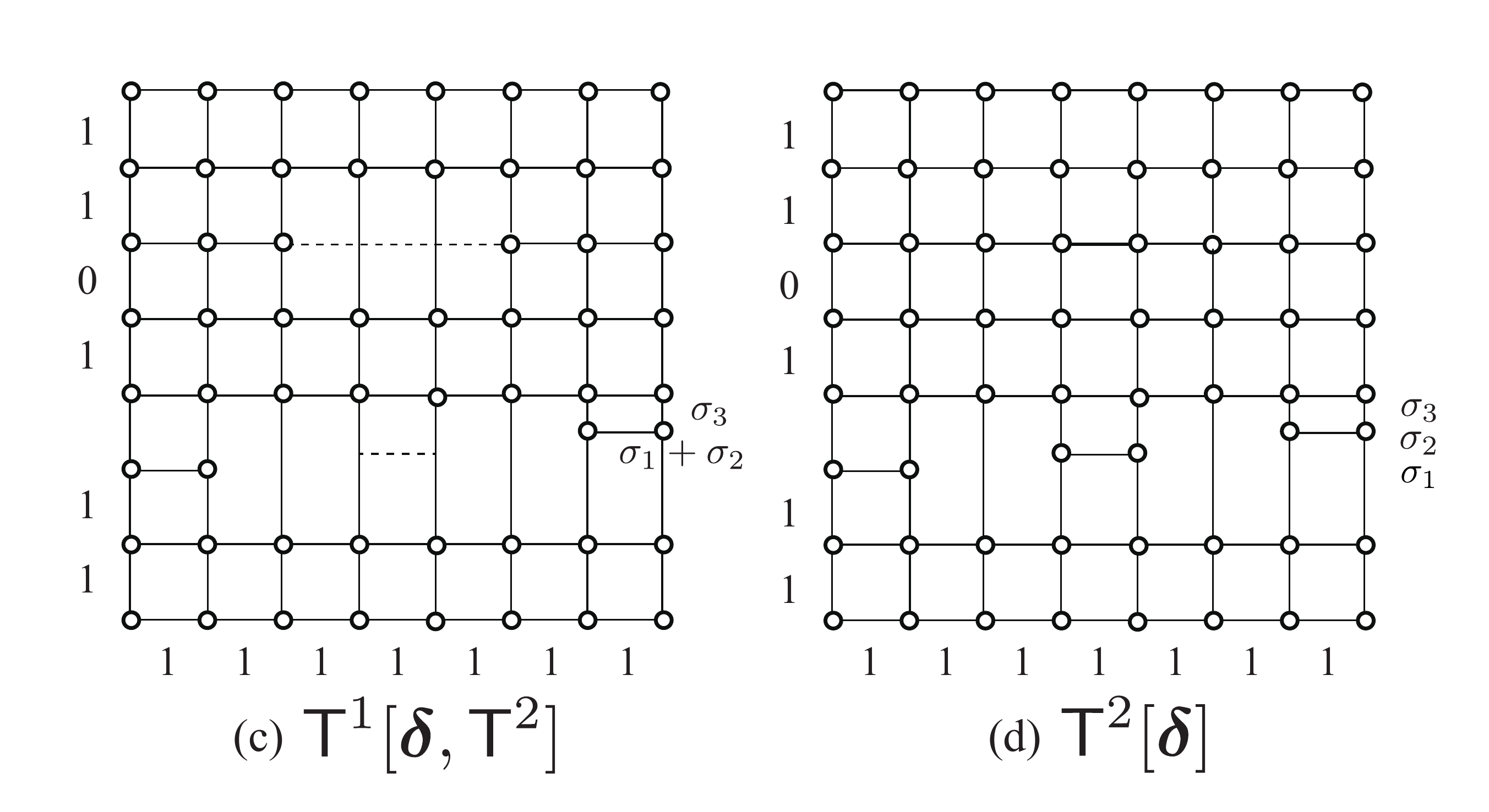}}
\caption{The construction of $\mathsf{T}^1[\boldsymbol{\delta},
  \mathsf{T}^2]$. Two analysis-suitable T-meshes are shown in (a) and (b) such
  that $\mathsf{T}^1 \subseteq \mathsf{T}^2$. The perturbed T-mesh
  $\mathsf{T}^1[\boldsymbol{\delta}, \mathsf{T}^2]$, shown in (c), is formed by
  removing the dotted lines from $\mathsf{T}^2[\boldsymbol{\delta}]$,
  shown in (d).}
\label{fig:submesh}
~\\[-1ex]
\end{figure}

\begin{definition}
Given a T-mesh, $\mathsf{T}$, with no knot multiplicities, and the
corresponding extended T-mesh, $\mathsf{T}_{ext}$, the homogeneous
extended spline space is defined as
\begin{equation}
\mathcal{S}_{ext} = \left\{ f \in C^{2,2}(\mathbb{R}^2) \, | \,  f
  |_{\tilde{Q}_{ext}} \in \mathbb{P}_{33},
 \forall \tilde{Q}_{ext} \subseteq \tilde{\Omega}, \,
 \mathrm{and} \, f |_{\mathbb{R}^2 \setminus \tilde{\Omega}} \equiv 0 \right\}
\end{equation}
where $C^{2,2}(\mathbb{R}^2)$ is the space of bivariate functions
which are $C^2$-continuous in $\xi$ and $\eta$ over all of
$\mathbb{R}^2$. $\mathbb{P}_{33}$ is the space of bicubic
polynomials. The extended spline space is defined to be
$\mathcal{T}_{ext} = \mathcal{S}_{ext} |_{\hat{\Omega}}$.
\label{def:hom_ext_space}
\end{definition}

\begin{proposition}
$\dim \mathcal{T}_{ext} = \dim \mathcal{S}_{ext}$
\label{thm:eqv_spaces}
\end{proposition}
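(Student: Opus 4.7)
The plan is to show that the restriction map $\rho\colon \mathcal{S}_{ext} \to \mathcal{T}_{ext}$, $f \mapsto f|_{\hat{\Omega}}$, is injective; surjectivity is built into the definition $\mathcal{T}_{ext} = \mathcal{S}_{ext}|_{\hat{\Omega}}$. I would take $f \in \mathcal{S}_{ext}$ with $f|_{\hat{\Omega}} \equiv 0$ and propagate vanishing outward through the frame $\tilde{\Omega} \setminus \hat{\Omega}$ by slicing and invoking one-dimensional $C^2$ cubic-spline theory. The key observation driving the argument is that $C^{2,2}$-regularity, combined with $f \equiv 0$ on $\hat{\Omega}$ and $f \equiv 0$ on $\mathbb{R}^2 \setminus \tilde{\Omega}$, forces $f$ together with its first two derivatives in the appropriate coordinate direction to vanish both along $\partial \hat{\Omega}$ and along $\partial \tilde{\Omega}$.

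The first step handles the four ``bands,'' each of which borders $\hat{\Omega}$ along one side and the exterior of $\tilde{\Omega}$ along the opposite side. Take for instance the right band $B = [\xi_{\overline{m}-3}, \xi_{\overline{m}}] \times [\eta_{\underline{n}+3}, \eta_{\overline{n}-3}]$, and for $\eta_0 \in [\eta_{\underline{n}+3}, \eta_{\overline{n}-3}]$ consider the slice $g_{\eta_0}(\xi) := f(\xi, \eta_0)$. This is a $C^2$ piecewise cubic on $[\xi_{\overline{m}-3}, \xi_{\overline{m}}]$ whose breakpoints are the $\xi$-values at which a vertical edge of $\mathsf{T}_{ext}$ crosses $\eta = \eta_0$. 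Admissibility supplies full vertical lines at $\xi_{\overline{m}-3}, \xi_{\overline{m}-2}, \xi_{\overline{m}-1}, \xi_{\overline{m}}$, and since every T-junction anchor lies in $\mathsf{AR} = [\underline{m}+2, \overline{m}-2] \times [\underline{n}+2, \overline{n}-2]$, any vertical T-junction extension reaching into this $\xi$-range must have $\xi$-index in $\{\overline{m}-3, \overline{m}-2\}$, so it already overlies one of the full lines and contributes no new breakpoint. Thus $g_{\eta_0}$ has exactly three cubic pieces with two interior $C^2$ joints, and $g_{\eta_0}, g_{\eta_0}', g_{\eta_0}''$ vanish both at $\xi_{\overline{m}-3}$ (from $\hat{\Omega}$) and at $\xi_{\overline{m}}$ (from outside $\tilde{\Omega}$). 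A dimension count $3\cdot 4 - 2 \cdot 3 - 2 \cdot 3 = 0$ yields $g_{\eta_0} \equiv 0$, so $f \equiv 0$ on $B$. Symmetric arguments give $f \equiv 0$ on the left, top, and bottom bands.

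The second step treats each of the four corners, e.g.\ $C = [\xi_{\overline{m}-3}, \xi_{\overline{m}}] \times [\eta_{\overline{n}-3}, \eta_{\overline{n}}]$. After the first step $f$ vanishes along the two inner sides of $C$ (shared with the bands) together with the appropriate first two derivatives, and along the two outer sides (contained in $\mathbb{R}^2 \setminus \tilde{\Omega}$) with their derivatives. The extension-bookkeeping is identical to the band case, so horizontal slices of $C$ again decompose into three cubic pieces joined $C^2$ and clamped to zero at both ends, forcing $f \equiv 0$ on $C$. Repeating for all four corners, $f$ vanishes on all of $\tilde{\Omega}$, hence on $\mathbb{R}^2$, which gives $\ker \rho = 0$ and the equality of dimensions. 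The principal obstacle in this plan is not the dimension count itself—which is completely routine—but the combinatorial bookkeeping that no T-junction extension introduces a spurious breakpoint inside any frame strip; this rests squarely on the fact that $\mathsf{AR}$ is defined with an integer buffer of $2$ away from the boundary, which forces every T-junction extension reaching into a frame strip to overlay a full admissibility line.
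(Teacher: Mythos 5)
Your proposal is correct and follows essentially the same route as the paper: the restriction map is surjective by definition, and its kernel is trivial because the frame is only three knot intervals wide while a nonzero $C^2$ cubic spline must be supported on at least four intervals. The paper compresses your entire band-and-corner slicing argument into that single minimal-support observation, so your write-up is just a more explicit elaboration of the same proof.
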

\begin{proof}
We first prove that the dimension of $\mathcal{S}_{ext}$ is not less
than the dimension of $\mathcal{T}_{ext}$. Notice that for any
function $f \in \mathcal{S}_{ext}$, $f|_{\hat{\Omega}} \in
\mathcal{T}_{ext}$. We now show that the dimension of
$\mathcal{T}_{ext}$ is not less than the dimension of
$\mathcal{S}_{ext}$. This is equivalent to showing that there is
only one function in $\mathcal{S}_{ext}$ which is zero over
$\hat{\Omega}$. It is easy to see that the only function which is
zero over $\hat{\Omega}$ must be zero over all of $\mathbb{R}^2$ since the
minimum support of a cubic $C^2$ spline function is four intervals.
\end{proof}

\begin{lemma}
\label{lemma:charac}
If the extended T-mesh, $\mathsf{T}_{ext}$, of an
analysis-suitable T-mesh, $\mathsf{T}$, has no knot multiplicities or
overlap vertices, then $\mathcal{T}=\mathcal{T}_{ext}$.
In other words, the analysis-suitable T-spline space, $\mathcal{T}$,
and the extended spline space, $\mathcal{T}_{ext}$, are the same space.
\end{lemma}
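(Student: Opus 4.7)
My plan is to establish the equality through the standard double inclusion, with the reverse inclusion reduced to a dimension comparison that I would defer to the topological formula promised in Section~\ref{sec:lemma}.

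For the inclusion $\mathcal{T} \subseteq \mathcal{T}_{ext}$, I would verify that each blending function $N_A = B[\Xi_A]\otimes B[\Pi_A]$, extended by zero outside $\tilde{\Omega}$, belongs to $\mathcal{S}_{ext}$; restricting to $\hat{\Omega}$ then places $N_A$ in $\mathcal{T}_{ext}$. Since each univariate factor is a cubic $C^2$ B-spline on a five-knot local vector drawn from the global knots, $N_A$ is automatically $C^{2,2}$ and bicubic on each tensor cell determined by $\Xi_A\times\Pi_A$. The only substantive point is that the breakpoints of $N_A$ coincide with edges of $\mathsf{T}_{ext}$. This is where the hypotheses are used: analysis-suitability and the absence of overlap vertices ensure that the local knot vectors are well defined and unambiguous, and every breakpoint of $N_A$ is either an edge of $\mathsf{T}$ or a T-junction face extension, hence an edge of $\mathsf{T}_{ext}$.

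For the reverse inclusion I would appeal to a dimension count. The blending functions $\{N_A : A\in\mathsf{A}(\mathsf{T})\}$ are linearly independent on any analysis-suitable T-mesh, so $\dim\mathcal{T}=n^A$. By Proposition~\ref{thm:eqv_spaces}, $\dim\mathcal{T}_{ext}=\dim\mathcal{S}_{ext}$. Once it is shown that $\dim\mathcal{S}_{ext}=n^A$, the already-established inclusion $\mathcal{T}\subseteq\mathcal{T}_{ext}$ between finite-dimensional spaces of equal dimension forces equality.

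The main obstacle is therefore the identity $\dim\mathcal{S}_{ext}=n^A$, which I would prove by a combinatorial argument on the extended T-mesh: count raw polynomial coefficients over elements of $\mathsf{T}_{ext}$, subtract the $C^2$ smoothness constraints across each edge, and then collapse the result via an Euler-type identity on the vertex, edge, and face sets of $\mathsf{T}_{ext}$, with the vanishing boundary condition $f|_{\mathbb{R}^2\setminus\tilde{\Omega}}\equiv 0$ absorbed into the frame region. The hypothesis that $\mathsf{T}_{ext}$ has no knot multiplicities or overlap vertices prevents degenerate coincidences among these constraints, while the defining analysis-suitable condition, no crossing horizontal and vertical extensions, guarantees that the constraints introduced by T-junction extensions are locally independent so that the count is purely topological rather than geometric. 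I expect the delicate bookkeeping to occur at the T-junctions and extended vertices, where contributions of $n^+$ and $n^*$ must combine with the element count to leave precisely $n^A$.
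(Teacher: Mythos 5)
Your proposal follows essentially the same route as the paper: establish $\mathcal{T}\subseteq\mathcal{T}_{ext}$ (the paper simply cites \cite{BeBuChSa12}, Lemma 4.3, where you sketch the verification directly), then conclude equality from a dimension count in which $\dim\mathcal{T}=n^A$ by linear independence of the blending functions and $\dim\mathcal{T}_{ext}=\dim\mathcal{S}_{ext}=n^a+n^{+}+n^{-}=n^A$ by Proposition~\ref{thm:eqv_spaces} together with the topological dimension formula of Section~\ref{sec:lemma} (here $n^{+}=n^{-}=0$ because analysis-suitability forbids crossing extensions and overlap vertices are excluded by hypothesis). Your deferral of the dimension identity to a constraint-counting argument on $\mathsf{T}_{ext}$, with analysis-suitability guaranteeing independence of the constraints, matches exactly how the paper organizes this via Theorems~\ref{lemma:dim0} and~\ref{lemma:fullrank}.
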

\begin{proof}
We have that $\mathcal{T} \subseteq \mathcal{T}_{ext}$
(see~\cite{BeBuChSa12}, Lemma 4.3), so the dimension of $\mathcal{T}$ is less than
that of $\mathcal{T}_{ext}$, which, according to
Theorem~\ref{lemma:dim0}, Proposition~\ref{thm:eqv_spaces}, and
Theorem~\ref{lemma:fullrank}
is the number of active vertices. Since the blending functions for
anaysis-suitable T-splines are linearly independent the dimension of
$\mathcal{T}$ is also the number of active vertices. Thus, the
two spline spaces are identical.
\end{proof}

\begin{lemma}
\label{th:nest_result}
Given two analysis-suitable T-meshes,
$\mathsf{T}^1$ and $\mathsf{T}^2$, neither of which has knot multiplicities
or overlap vertices, if $\mathsf{T}_{ext}^1 \subseteq
\mathsf{T}_{ext}^2$, then $\mathcal{T}^1 \subseteq \mathcal{T}^2$.
\end{lemma}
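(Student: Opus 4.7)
The plan is to reduce everything to a purely piecewise-polynomial containment via the characterization in Lemma~\ref{lemma:charac}. Since both $\mathsf{T}^1$ and $\mathsf{T}^2$ are analysis-suitable and free of knot multiplicities and overlap vertices, that lemma yields $\mathcal{T}^1 = \mathcal{T}^1_{ext}$ and $\mathcal{T}^2 = \mathcal{T}^2_{ext}$. Hence it suffices to prove $\mathcal{T}^1_{ext} \subseteq \mathcal{T}^2_{ext}$, and by Proposition~\ref{thm:eqv_spaces} it is equivalent, and notationally cleaner, to work with the homogeneous extensions $\mathcal{S}^1_{ext}$ and $\mathcal{S}^2_{ext}$ on $\mathbb{R}^2$ and then restrict back to $\hat\Omega$ at the end.

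I would then argue directly from Definition~\ref{def:hom_ext_space} that every $f \in \mathcal{S}^1_{ext}$ lies in $\mathcal{S}^2_{ext}$. The global smoothness requirement $f \in C^{2,2}(\mathbb{R}^2)$ and the vanishing condition $f|_{\mathbb{R}^2 \setminus \tilde\Omega} \equiv 0$ are identical in the two definitions, so the only condition to verify is the bicubic-on-each-element property. This is where the hypothesis $\mathsf{T}^1_{ext} \subseteq \mathsf{T}^2_{ext}$ is used: because $\mathsf{T}^2_{ext}$ is obtained from $\mathsf{T}^1_{ext}$ by adding vertices and edges, every element $\tilde Q^2_{ext}$ of $\mathsf{T}^2_{ext}$ is contained in some element $\tilde Q^1_{ext}$ of $\mathsf{T}^1_{ext}$. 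Since $f|_{\tilde Q^1_{ext}} \in \mathbb{P}_{33}$, its restriction to $\tilde Q^2_{ext}$ is a fortiori bicubic, so $f \in \mathcal{S}^2_{ext}$. Restricting to $\hat\Omega$ gives $\mathcal{T}^1_{ext} \subseteq \mathcal{T}^2_{ext}$, and the two invocations of Lemma~\ref{lemma:charac} close the argument.

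The main obstacle is really bookkeeping rather than a substantive analytical hurdle: one must justify that the combinatorial containment $\mathsf{T}^1_{ext} \subseteq \mathsf{T}^2_{ext}$ genuinely refines the rectangular partition cell by cell, i.e.\ that adding vertices and edges to a rectangular partition can only subdivide existing open rectangles and can never create cells that straddle an old cell boundary. This follows from the very definition of $\mathsf{T}^1 \subseteq \mathsf{T}^2$ given at the beginning of Section~\ref{sec:tmesh}, but it is the one place where one must be careful. A secondary point is that the parametric cells are defined to be \emph{non-empty}, so one should note that the containment of elements passes through the parametric map unchanged because no knot multiplicities are present on either mesh. Once these observations are in hand, the inclusion follows immediately from the characterization theory already built up.
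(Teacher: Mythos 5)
Your proof is correct and takes essentially the same route as the paper: both arguments reduce the lemma to the containment $\mathcal{T}^1_{ext} \subseteq \mathcal{T}^2_{ext}$ and then invoke Lemma~\ref{lemma:charac} on each analysis-suitable mesh to identify $\mathcal{T}^i$ with $\mathcal{T}^i_{ext}$. The only difference is one of detail: the paper dismisses the containment of the extended spline spaces with the word ``obviously,'' whereas you spell out the cell-by-cell refinement argument (every element of $\mathsf{T}^2_{ext}$ sits inside an element of $\mathsf{T}^1_{ext}$, so the bicubic, smoothness, and vanishing conditions all transfer), which is a welcome elaboration rather than a departure.
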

\begin{proof}
Obviously, $\mathcal{T}_{ext}^1 \subseteq \mathcal{T}_{ext}^2$. Since
$\mathsf{T}^1$ and $\mathsf{T}^2$ are analysis-suitable, according to
Lemma~\ref{lemma:charac}, $\mathcal{T}^1 \subseteq \mathcal{T}^2$.
\end{proof}

\begin{lemma}
\label{lemma:in}
Given two analysis-suitable T-spline spaces, $\mathcal{T}^1$ and $\mathcal{T}^2$,
if $\mathcal{T}^1[\boldsymbol{\delta}] \subseteq
\mathcal{T}^2[\boldsymbol{\delta}]$, then
$\mathcal{T}^1 \subseteq \mathcal{T}^2$.
\end{lemma}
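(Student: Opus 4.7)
The plan is to lift the hypothesis $\mathcal{T}^1[\boldsymbol{\delta}] \subseteq \mathcal{T}^2[\boldsymbol{\delta}]$ through the limit $\delta \to 0$ using Theorem~\ref{thm:converge_basis} together with the dual basis for ASTS stated in Section~\ref{sec:asts}. Fix any anchor $A \in \mathsf{A}(\mathsf{T}^1[\boldsymbol{\delta}])$. By hypothesis, and by the linear independence of the blending functions of $\mathcal{T}^2[\boldsymbol{\delta}]$ (which is analysis-suitable by Proposition~\ref{lemma:offset_asts}), there exist unique real coefficients $c_{A,B}(\boldsymbol{\delta})$ with
$$
N^1_A[\boldsymbol{\delta}] = \sum_{B \in \mathsf{A}(\mathsf{T}^2[\boldsymbol{\delta}])} c_{A,B}(\boldsymbol{\delta}) \, N^2_B[\boldsymbol{\delta}],
$$
and using the tensor-product dual basis $\lambda^2_B[\boldsymbol{\delta}]$ the coefficients are recovered as $c_{A,B}(\boldsymbol{\delta}) = \lambda^2_B[\boldsymbol{\delta}]\bigl(N^1_A[\boldsymbol{\delta}]\bigr)$.

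Next I would pass to the limit $\delta \to 0$. Theorem~\ref{thm:converge_basis} gives $N^1_A[\boldsymbol{\delta}] \to N^1_{\pi(A)}$ and $N^2_B[\boldsymbol{\delta}] \to N^2_{\pi(B)}$ pointwise on $\hat{\Omega}$, and the convergence is in fact uniform since the blending functions are piecewise bicubic with coefficients depending continuously on the perturbed knots. The anchor map $\pi$ is a bijection because perturbation does not introduce or remove anchors, so every blending function of $\mathcal{T}^1$ arises as some $N^1_{\pi(A)}$. The dual functionals $\lambda^2_B[\boldsymbol{\delta}]$ are tensor products of Schumaker's univariate cubic dual functionals, which are given by explicit integrals against polynomial weights built from the five local knots, so they depend continuously on $\boldsymbol{\delta}$. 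Consequently each $c_{A,B}(\boldsymbol{\delta})$ converges to a finite limit $\bar c_{A,B}$, and taking the limit in the expansion gives
$$
N^1_{\pi(A)} = \sum_B \bar c_{A,B} \, N^2_{\pi(B)} \in \mathcal{T}^2.
$$
Since this holds for every anchor of $\mathsf{T}^1$, the inclusion $\mathcal{T}^1 \subseteq \mathcal{T}^2$ follows.

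The step I expect to require the most care is the convergence $c_{A,B}(\boldsymbol{\delta}) \to \bar c_{A,B}$, because the perturbation causes pairs of local knots in $\lambda^2_B[\boldsymbol{\delta}]$ to coalesce in the limit. I would handle this by writing each univariate factor of $\lambda^2_B[\boldsymbol{\delta}]$ in the explicit integral form from Schumaker's treatment, noting that the integrand is a polynomial whose coefficients depend smoothly on the knot locations and the support varies continuously, and then invoking a dominated convergence argument applied to the pointwise evaluation $\lambda^2_B[\boldsymbol{\delta}](N^1_A[\boldsymbol{\delta}])$. The remaining ingredients—existence and uniqueness of the expansion coefficients, the bijectivity of $\pi$, and the convergence of the basis functions themselves—are already delivered by Proposition~\ref{lemma:offset_asts}, the construction of $\mathsf{T}[\boldsymbol{\delta}]$ in Section~\ref{subsec:perturb}, and Theorem~\ref{thm:converge_basis}.
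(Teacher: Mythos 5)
Your proposal is correct and follows essentially the same route as the paper: expand $N^1_A[\boldsymbol{\delta}]$ in the perturbed basis of $\mathcal{T}^2[\boldsymbol{\delta}]$ via the tensor-product dual functionals, then pass to the limit $\delta \to 0$ using Theorem~\ref{thm:converge_basis}. The only difference is in justifying convergence of the coefficients, where your continuity/dominated-convergence argument is if anything more careful than the paper's appeal to boundedness of $\lambda_B(N_A[\boldsymbol{\delta}])$ via Theorem~4.41 of Schumaker.
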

\begin{proof}
Suppose the perturbed T-spline space, $\mathcal{T}^{1}[\boldsymbol{\delta}]$,
is spanned by the basis functions, $N_{A}^1[\boldsymbol{\delta}]$, and
the perturbed T-spline space,
$\mathcal{T}^{2}[\boldsymbol{\delta}]$, is spanned by the
basis functions, $N_{B}^2[\boldsymbol{\delta}]$. We have that
\begin{equation*}
N_{A}^1[\boldsymbol{\delta}] = \sum
\lambda_B(N_A[\boldsymbol{\delta}]) N_{B}^2[\boldsymbol{\delta}].
\end{equation*}
where $\lambda_B$, $B \in \mathsf{A}(\mathsf{T}^2[\boldsymbol{\delta}])$ are the dual
functionals for the analysis-suitable T-spline basis $N_B[\boldsymbol{\delta}]$ as
described in~\cite{BeBuChSa12}. According to Theorem~\ref{thm:converge_basis},
\begin{equation*}
\lim_{\delta \rightarrow 0} N_{B}^2[\boldsymbol{\delta}] = N_{\pi(B)}^2, \quad
\lim_{\delta \rightarrow 0} N_{A}^1[\boldsymbol{\delta}] = N_{\pi(A)}^1.
\end{equation*}
According to Theorem 4.41 in~\cite{Schu93},
$\lambda_B(N_A[\boldsymbol{\delta}])$ is bounded, so
\begin{equation*}
\lim_{\delta \rightarrow 0}\lambda_B(N_A[\boldsymbol{\delta}]) = c_{B}^{A},
\end{equation*}
i.e.,
\begin{equation*}
N_{\pi(A)}^1 = \sum c_{B}^{A}N_{\pi(B)}^2.
\end{equation*}
\end{proof}

\begin{theorem}
\label{th:nest_result} Given two analysis-suitable T-meshes,
$\mathsf{T}^1$ and $\mathsf{T}^2$, if $\mathsf{T}^{1}_{ext}[\boldsymbol{\delta},\mathsf{T}^2]
\subseteq \mathsf{T}^{2}_{ext}[\boldsymbol{\delta}]$, then $\mathcal{T}^1 \subseteq \mathcal{T}^2$.
\end{theorem}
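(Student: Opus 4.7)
The plan is to lift the desired inclusion to the strictly perturbed setting, where zero knot intervals and overlap vertices are absent, invoke the earlier nesting lemma (Lemma~\ref{th:nest_result} in the no-multiplicity case), and then pass to the limit using the dual-basis argument of Lemma~\ref{lemma:in}. Because $\mathsf{T}^{2}[\boldsymbol{\delta}]$ is a \emph{strict} perturbation of $\mathsf{T}^{2}$, all of its knot intervals are positive, so $\mathsf{T}^{2}_{ext}[\boldsymbol{\delta}]$ has neither knot multiplicities nor overlap vertices. The submesh $\mathsf{T}^{1}[\boldsymbol{\delta},\mathsf{T}^{2}]$ inherits exactly the same (nonzero) knot data from $\mathsf{T}^{2}[\boldsymbol{\delta}]$, so $\mathsf{T}^{1}_{ext}[\boldsymbol{\delta},\mathsf{T}^{2}]$ also has no knot multiplicities and no overlap vertices. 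By Proposition~\ref{lemma:offset_asts}, together with the remark immediately following the definition of $\mathsf{T}^{1}[\boldsymbol{\delta},\mathsf{T}^{2}]$, both perturbed meshes are analysis-suitable.

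With this setup, the hypothesis $\mathsf{T}^{1}_{ext}[\boldsymbol{\delta},\mathsf{T}^{2}] \subseteq \mathsf{T}^{2}_{ext}[\boldsymbol{\delta}]$ matches the hypotheses of the earlier Lemma~\ref{th:nest_result}, so I would apply it to conclude $\mathcal{T}^{1}[\boldsymbol{\delta},\mathsf{T}^{2}] \subseteq \mathcal{T}^{2}[\boldsymbol{\delta}]$. I would then pass to the limit $\delta \to 0$ by reusing the proof of Lemma~\ref{lemma:in} verbatim: expand each basis function as $N^{1}_{A}[\boldsymbol{\delta},\mathsf{T}^{2}] = \sum_{B}\lambda_{B}\bigl(N^{1}_{A}[\boldsymbol{\delta},\mathsf{T}^{2}]\bigr)\,N^{2}_{B}[\boldsymbol{\delta}]$ against the ASTS dual basis of~\cite{BeBuChSa12}, use Theorem~4.41 of~\cite{Schu93} for the uniform boundedness of $\lambda_{B}$ to extract constants $c^{A}_{B}$ as $\delta \to 0$, and invoke Theorem~\ref{thm:converge_basis} on both sides to obtain $N^{1}_{\pi(A)} = \sum_{B} c^{A}_{B}\, N^{2}_{\pi(B)}$, establishing $\mathcal{T}^{1}\subseteq \mathcal{T}^{2}$.

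The only nonformal ingredient, and the main obstacle, is checking that $\mathsf{T}^{1}[\boldsymbol{\delta},\mathsf{T}^{2}]$ is genuinely a perturbed T-mesh in the sense of Section~\ref{sec:perturb}, so that Theorem~\ref{thm:converge_basis} applies to it and its basis functions actually converge to those indexed by $\mathsf{T}^{1}$ rather than some spurious limit arising from the removal of edges in $\mathsf{T}^{2}\setminus\mathsf{T}^{1}$. This is precisely the ``by inspection'' observation built into the definition of $\mathsf{T}^{1}[\boldsymbol{\delta},\mathsf{T}^{2}]$; once it is granted the remainder of the argument is a mechanical combination of Lemma~\ref{th:nest_result} in its no-multiplicity form with the limit procedure of Lemma~\ref{lemma:in}.
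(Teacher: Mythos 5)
Your overall architecture --- establish the inclusion between the perturbed spaces and then pass to the limit $\delta \to 0$ via the dual-basis argument of Lemma~\ref{lemma:in} --- is the same as the paper's. The gap is in the middle step: you assert that because $\mathsf{T}^{1}[\boldsymbol{\delta},\mathsf{T}^{2}]$ inherits the strictly perturbed knots of $\mathsf{T}^{2}[\boldsymbol{\delta}]$, its extended mesh has no overlap vertices, and on that basis you invoke the no-multiplicity nesting lemma (Lemma~\ref{th:nest_result}) for the pair $\bigl(\mathsf{T}^{1}[\boldsymbol{\delta},\mathsf{T}^{2}],\, \mathsf{T}^{2}[\boldsymbol{\delta}]\bigr)$. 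The absence of knot multiplicities is indeed inherited, but the absence of overlap vertices is not: overlap vertices are produced by face extensions of T-junctions, and deleting the edges of $\mathsf{T}^{2}\setminus\mathsf{T}^{1}$ creates new T-junctions. Two segments of $\mathsf{T}^{1}$ that lie inside a single segment of $\mathsf{T}^{2}$ are assigned the \emph{same} perturbed knot value, because the perturbation separates the segments of $\mathsf{T}^{2}$, not those of $\mathsf{T}^{1}$; the facing face extensions of their endpoint T-junctions can therefore meet the skeleton in the gap between them and generate overlap vertices. The paper acknowledges exactly this (``$\mathsf{T}^1_{ext}[\boldsymbol{\delta},\mathsf{T}^2]$ may have overlap vertices but no knot multiplicities''), so Lemma~\ref{lemma:charac}, and with it the no-multiplicity nesting lemma, cannot be applied to the coarse perturbed mesh.

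The repair, which is the route the paper takes, is asymmetric: apply Lemma~\ref{lemma:charac} only to the fine mesh to obtain the equality $\mathcal{T}^{2}[\boldsymbol{\delta}] = \mathcal{T}^{2}_{ext}[\boldsymbol{\delta}]$, and for the coarse mesh use only the one-sided inclusion $\mathcal{T}^{1}[\boldsymbol{\delta},\mathsf{T}^{2}] \subseteq \mathcal{T}^{1}_{ext}[\boldsymbol{\delta},\mathsf{T}^{2}]$, which survives the presence of overlap vertices (via Theorem~\ref{lemma:dim0} and Proposition~\ref{thm:eqv_spaces}). Chaining $\mathcal{T}^{1}[\boldsymbol{\delta},\mathsf{T}^{2}] \subseteq \mathcal{T}^{1}_{ext}[\boldsymbol{\delta},\mathsf{T}^{2}] \subseteq \mathcal{T}^{2}_{ext}[\boldsymbol{\delta}] = \mathcal{T}^{2}[\boldsymbol{\delta}]$ yields the perturbed inclusion, after which your limit argument goes through unchanged.
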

\begin{proof}
Obviously, $\mathcal{T}_{ext}^1[\boldsymbol{\delta},\mathsf{T}^2] \subseteq
\mathcal{T}_{ext}^2[\boldsymbol{\delta}]$.
Since $\mathsf{T}_{ext}^2[\boldsymbol{\delta}]$ has no knot multiplicities or overlap
vertices we have that $\mathcal{T}^2[\boldsymbol{\delta}] =
\mathcal{T}^2_{ext}[\boldsymbol{\delta}]$ according to
Lemma~\ref{lemma:charac}. Since the extended T-mesh,
$\mathsf{T}^1_{ext}[\boldsymbol{\delta}, \mathsf{T}^2]$, may have overlap vertices
but no knot multiplicities we have that
$\mathcal{T}^1[\boldsymbol{\delta}, \mathsf{T}^2] \subseteq
\mathcal{T}^1_{ext}[\boldsymbol{\delta}, \mathsf{T}^2]$ according to
Theorem~\ref{lemma:dim0} and Proposition~\ref{thm:eqv_spaces}. This
immediately implies that $\mathcal{T}^1 \subseteq \mathcal{T}^2$.
\end{proof}

\begin{proposition}
\label{the:opti}
Every analysis-suitable T-spline space contains the space of bicubic
polynomials.
\end{proposition}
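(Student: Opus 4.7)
The plan is to combine the characterization in Lemma~\ref{lemma:charac} with the classical Marsden identity, and then transfer the resulting representation to the original T-spline space through the perturbation-limit argument employed in Lemma~\ref{lemma:in}. First I would strictly perturb $\mathsf{T}$ to obtain $\mathsf{T}[\boldsymbol{\delta}]$, which by Proposition~\ref{lemma:offset_asts} remains analysis-suitable. Because each segment of $\mathsf{T}$ is assigned a distinct knot value in $\Xi[\boldsymbol{\delta}],\Pi[\boldsymbol{\delta}]$ and all strictly perturbed intervals are positive, $\mathsf{T}[\boldsymbol{\delta}]$ has neither zero knot intervals nor overlap vertices, so Lemma~\ref{lemma:charac} identifies $\mathcal{T}[\boldsymbol{\delta}]$ with the extended spline space $\mathcal{T}_{ext}[\boldsymbol{\delta}]$.

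Next I would show that any bicubic polynomial $p \in \mathbb{P}_{33}$ lies in $\mathcal{T}_{ext}[\boldsymbol{\delta}]$. Each tensor-product cubic B-spline $B_{\imath}[\boldsymbol{\delta}](\xi)\,B_{\jmath}[\boldsymbol{\delta}](\eta)$ built on the perturbed global knot vectors is globally $C^{2,2}$, is piecewise bicubic on the extended T-mesh (since the tensor-product mesh is a refinement of it), and vanishes outside $\tilde{\Omega}[\boldsymbol{\delta}]$ by virtue of the quadruple end knots; hence each such B-spline sits in $\mathcal{S}_{ext}[\boldsymbol{\delta}]$. Marsden's identity then writes $p$ as an explicit linear combination of these B-splines on $\hat{\Omega}[\boldsymbol{\delta}]$, so $p \in \mathcal{T}_{ext}[\boldsymbol{\delta}] = \mathcal{T}[\boldsymbol{\delta}]$. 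To transfer back to $\mathcal{T}$, I would pass to the limit $\delta \to 0$ exactly as in Lemma~\ref{lemma:in}: expand $p = \sum_{A} \lambda_{A}[\boldsymbol{\delta}](p)\,N_{A}[\boldsymbol{\delta}]$ through the ASTS dual basis, invoke Theorem~4.41 of~\cite{Schu93} to bound the coefficients uniformly in $\delta$ (because $p$ is fixed and the dual functionals are $L^{\infty}$-stable), extract a convergent subsequence $\lambda_{A}[\boldsymbol{\delta}](p) \to c_{A}$, and combine with $N_{A}[\boldsymbol{\delta}] \to N_{\pi(A)}$ from Theorem~\ref{thm:converge_basis} to obtain the limit identity $p = \sum_{A} c_{A}\, N_{\pi(A)} \in \mathcal{T}$. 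Since strict perturbation preserves the number of anchors, $\pi$ is a bijection onto $\mathsf{A}(\mathsf{T})$, so this representation uses the full ASTS basis of $\mathcal{T}$.

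I expect the main obstacle to be the careful verification that the tensor-product B-splines really belong to $\mathcal{S}_{ext}[\boldsymbol{\delta}]$—in particular, the condition $f|_{\mathbb{R}^{2}\setminus\tilde{\Omega}} \equiv 0$ requires explicit use of the quadruple end-knot convention, and the piecewise-bicubic condition must be checked against the (coarser) extended T-mesh rather than the tensor-product mesh itself. A secondary technical nuisance is that the reduced parametric subdomain $\hat{\Omega}[\boldsymbol{\delta}]$ varies with $\delta$; however, its endpoints depend continuously on $\delta$ and the limit identity need only hold pointwise on $\hat{\Omega}$, which follows from continuity of B-splines in their knots, so this is routine to handle.
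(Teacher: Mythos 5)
Your proposal follows essentially the same route as the paper's own proof: strictly perturb the T-mesh, identify $\mathcal{T}[\boldsymbol{\delta}]$ with the extended spline space via Lemma~\ref{lemma:charac}, expand the polynomial in the ASTS dual basis, and pass to the limit $\delta \to 0$ using Theorem~\ref{thm:converge_basis} together with the boundedness of the dual functionals from Theorem~4.41 of~\cite{Schu93}. The only difference is that you explicitly justify, via tensor-product B-splines and Marsden's identity, why a bicubic polynomial lies in $\mathcal{T}_{ext}[\boldsymbol{\delta}]$ --- a step the paper leaves implicit --- which is a welcome addition rather than a deviation.
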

\begin{proof}
We have that any bicubic polynomial $f[\boldsymbol{\delta}] \in
\mathcal{T}[\boldsymbol{\delta}]$ according to Lemma~\ref{lemma:charac}. Using the
dual basis for $\mathcal{T}[\boldsymbol{\delta}]$ we have that
\begin{equation*}
f[\boldsymbol{\delta}] = \sum \lambda_A(f) N_A[\boldsymbol{\delta}].
\end{equation*}
Thus, according to Theorem~\ref{thm:converge_basis} and (\cite{Schu93},
Theorem 4.41), as $\delta \rightarrow 0$,
\begin{equation*}
f = \sum c_A N_A.
\end{equation*}
\end{proof}

\begin{corollary}
\label{cor:pu}
Every analysis-suitable T-spline space forms a partition of unity. In
other words, $\sum_A N_A(\xi, \eta) = 1$, $\forall (\xi,\eta) \in \hat{\Omega}$.
\end{corollary}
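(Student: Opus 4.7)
The plan is to obtain the partition-of-unity identity as a direct consequence of Proposition~\ref{the:opti} together with the corresponding univariate fact for cubic B-splines. Since the constant function $f \equiv 1$ is a bicubic polynomial, Proposition~\ref{the:opti} immediately gives $1 = \sum_A c_A N_A$ on $\hat{\Omega}$ for some coefficients $c_A$. The remaining work is to identify those coefficients explicitly as $c_A = 1$.

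First I would revisit the construction in the proof of Proposition~\ref{the:opti}. In a strictly perturbed T-mesh $\mathsf{T}[\boldsymbol{\delta}]$ there are no knot multiplicities or overlap vertices, so Lemma~\ref{lemma:charac} applies and the dual basis $\lambda_A = \lambda[\Xi_A] \otimes \lambda[\Pi_A]$ is available. Writing $1 = \sum_A \lambda_A(1)\, N_A[\boldsymbol{\delta}]$ and then letting $\delta \to 0$, Theorem~\ref{thm:converge_basis} and the boundedness of $\lambda_A$ (Theorem~4.41 of~\cite{Schu93}) yield $1 = \sum_A c_A N_A$ with $c_A = \lim_{\delta \to 0} \lambda_A(1)$.

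Next I would evaluate $\lambda_A(1)$ using the tensor-product form. Since $\lambda_A(1) = \lambda[\Xi_A](1)\cdot \lambda[\Pi_A](1)$, it suffices to recall the classical fact that for the univariate cubic B-spline dual basis on a strictly increasing five-knot window, $\lambda[\xi_{i_1},\ldots,\xi_{i_5}](1) = 1$. This is exactly the dual statement of the univariate partition of unity $\sum_i B_i(\xi) = 1$: applying the dual functional $\lambda_j$ to both sides gives $\lambda_j(1) = \sum_i \delta_{ij} = 1$. Hence $\lambda_A(1) = 1$ for every anchor $A$ in the perturbed mesh, and passing to the limit gives $c_A = 1$ for every anchor $A$ in $\mathsf{T}$.

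Combining these observations, $\sum_A N_A(\xi,\eta) = 1$ on $\hat{\Omega}$, which is the desired partition-of-unity statement. The only mildly delicate point, and the place I would take the most care, is ensuring that the indexing of anchors is preserved under $\boldsymbol{\delta} \to 0$ so that the finite sum $\sum_A c_A N_A$ indexed by $\mathsf{A}(\mathsf{T}[\boldsymbol{\delta}])$ really does converge termwise to $\sum_A N_A$ indexed by $\mathsf{A}(\mathsf{T})$. This is handled precisely by Theorem~\ref{thm:converge_basis} together with the bijection $\pi$ between anchors, so no additional work is needed.
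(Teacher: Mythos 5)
Your argument is correct and follows essentially the same route as the paper, which proves the corollary by combining Proposition~\ref{the:opti} with the fact that $\lambda_A(1) = 1$. You simply spell out in more detail why $\lambda_A(1)=1$ (via the tensor-product structure and the univariate dual-basis identity) and how the coefficients survive the limit $\delta \to 0$, both of which are consistent with the paper's intent.
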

\begin{proof}
This immediately follows from Proposition~\ref{the:opti} and the fact
that $\lambda_A(1) = 1$.
\end{proof}

\section{Approximation}
\label{sec:approx}
As described in~\cite{BeBuChSa12,BeBuSaVa12} approximation
properties of analysis-suitable T-splines are directly linked to
Proposition~\ref{the:opti}. In other words, having the bicubic
polynomials in the T-spline space is the minimal requirement to obtain an~$O(h^4)$
convergence rate in the mesh size.

Following the approach in~\cite{BaBeCoHuSa06,BeBuChSa12,BeBuSaVa12}, the dual
basis for an analysis-suitable T-spline space, $\mathcal{T}$,
can be used to construct a projection operator, $\mathbb{P}: L^2(\hat{\Omega})
\rightarrow \mathcal{T}$, where
\begin{equation*}
\mathbb{P}(f)(\xi, \eta) = \sum_{A \in \mathsf{A}(\mathsf{T})}
\lambda_A(f) N_A(\xi,\eta) \quad \forall f \in L^2(\hat{\Omega}),
\forall (\xi, \eta) \in \hat{\Omega}.
\end{equation*}
We denote the open support of a T-spline basis function by $Q_A
\subset \tilde{\Omega}$, and
the extended support of an element $\hat{Q}$ by $\Omega_{\hat{Q}}
\subset \tilde{\Omega}$,
where
\begin{equation*}
\Omega_{\hat{Q}} = \bigcup_{A \in \mathsf{A}(\hat{Q})} Q_A, \quad
\mathsf{A}(\hat{Q}) = \{A \in \mathsf{A}(\mathsf{T}) : Q_A \cap
\hat{Q} \neq \emptyset \}.
\end{equation*}
We will denote by $R(\Omega_{\hat{Q}})$ the smallest rectangle in
$\tilde{\Omega}$ containing $\Omega_{\hat{Q}}$ and .

\begin{proposition}
Given an analysis-suitable T-spline space, $\mathcal{T}$, the
projection operator $\mathbb{P}$ is (locally) h--uniformly continuous
in the $L^2$ norm. In other words, there exists a constant $C$
independent of $\mathsf{T}, \Xi, \Pi$ such that
\begin{equation*}
||\mathbb{P}(f)||_{L^2(\hat{Q})} \leq C ||f||_{L^2(\Omega_{\hat{Q}})}
\quad \forall \hat{Q} \in \mathsf{T}_{ext}, \forall f \in L^2(\hat{\Omega})
\end{equation*}
Note that the constant $C$ may depend on the polynomial degree.
\end{proposition}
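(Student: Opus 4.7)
The plan is to adapt the standard dual-basis quasi-interpolation argument. Since $N_B|_{\hat{Q}} \equiv 0$ whenever $B \notin \mathsf{A}(\hat{Q})$, the expansion of $\mathbb{P}(f)$ reduces on $\hat{Q}$ to a sum over $\mathsf{A}(\hat{Q})$, and the triangle inequality gives
\begin{equation*}
\|\mathbb{P}(f)\|_{L^2(\hat{Q})} \leq \sum_{A \in \mathsf{A}(\hat{Q})} |\lambda_A(f)| \, \|N_A\|_{L^2(\hat{Q})}.
\end{equation*}
Everything then reduces to three ingredients: (i) a sharp $L^2$-scaled bound on the dual functional, (ii) an $L^\infty$-type bound on $N_A$, and (iii) a uniform cardinality bound on $\mathsf{A}(\hat{Q})$.

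For (i), I would invoke the tensor product structure $\lambda_A = \lambda[\Xi_A] \otimes \lambda[\Pi_A]$ (the dual basis lemma proven earlier in Section~\ref{sec:asts}, following~\cite{BeBuChSa12}). Using the integral representation of the univariate de Boor--Fix functional, whose kernel $\psi_A$ satisfies $\|\psi_A\|_{L^\infty} \leq C\, h_A^{-1}$ on its support of length $h_A$ (a quantitative form of~\cite{Schu93}, Theorem 4.41), a tensor-product Cauchy--Schwarz argument yields
\begin{equation*}
|\lambda_A(f)| \leq C\, |Q_A|^{-1/2}\, \|f\|_{L^2(Q_A)},
\end{equation*}
with $C$ depending only on the polynomial degree. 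For (ii), Corollary~\ref{cor:pu} together with non-negativity of the basis gives $\|N_A\|_{L^\infty} \leq 1$, hence $\|N_A\|_{L^2(\hat{Q})} \leq |\hat{Q}|^{1/2}$. Because $\hat{Q}$ is a single element of the extended T-mesh intersecting $Q_A$, the entire element is contained in $Q_A$, so $|\hat{Q}|/|Q_A| \leq 1$ and each summand is bounded by $C\,\|f\|_{L^2(Q_A)} \leq C\,\|f\|_{L^2(\Omega_{\hat{Q}})}$.

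For (iii), the local tensor-product structure of the ASTS basis inherited from the characterization in Lemma~\ref{lemma:charac} caps the cardinality of $\mathsf{A}(\hat{Q})$ at $(p+1)^2 = 16$, independent of the mesh. Summing over $\mathsf{A}(\hat{Q})$ then yields the claimed inequality with a constant depending only on the polynomial degree. The main obstacle is step (i): \cite{Schu93}, Theorem 4.41 is typically phrased as an $L^\infty$ bound, and promoting it to the sharp $L^2$ scaling $|Q_A|^{-1/2}$ requires either an explicit estimate on the de Boor--Fix kernel followed by tensor-product Cauchy--Schwarz or a scaling argument mapping $Q_A$ to a reference square and tracking the Jacobian. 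Either route must be carried out so that the resulting constant depends only on the polynomial degree and not on the particular knots in $\Xi_A, \Pi_A$; once this is in hand, the remainder of the proof is essentially mechanical.
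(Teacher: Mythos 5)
Your proposal takes a genuinely different route from the paper: the paper's entire proof is a citation, deducing the result ``immediately'' from (\cite{BeBuChSa12}, Proposition 5.4) together with Proposition~\ref{the:opti}, i.e.\ it outsources exactly the dual-basis quasi-interpolation argument that you reconstruct from scratch. Your reconstruction is the standard one and its skeleton is sound: the restriction of $\mathbb{P}(f)$ to $\hat{Q}$ involves only $A \in \mathsf{A}(\hat{Q})$; the crux is the $L^2$-scaled functional bound $|\lambda_A(f)| \leq C\,|Q_A|^{-1/2}\|f\|_{L^2(Q_A)}$, which you correctly identify as the step requiring an integral (kernel) representation of the de Boor--Fix functional rather than the bare $L^\infty$ statement of (\cite{Schu93}, Theorem 4.41) --- this is precisely the content supplied by \cite{BaBeCoHuSa06,BeBuChSa12}; and the containment $\hat{Q} \subseteq \overline{Q_A}$ for $A \in \mathsf{A}(\hat{Q})$ is legitimate because $\hat{Q}$ is an element of the \emph{extended} T-mesh, on which every $N_A$ is a single polynomial. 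What each approach buys: the paper gets brevity at the cost of opacity (the reader must check that the hypotheses of the cited proposition --- dual compatibility and polynomial reproduction --- are met, which is where Proposition~\ref{the:opti} enters); your version makes the mechanism visible and shows explicitly where partition of unity and the ASTS dual basis are used.

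The one step I would push back on is (iii). You justify $\#\mathsf{A}(\hat{Q}) \leq 16$ by appeal to Lemma~\ref{lemma:charac}, but that lemma asserts equality of the \emph{spans} $\mathcal{T} = \mathcal{T}_{ext}$; equality of spans does not bound the number of blending functions supported on a single element (that would require local linear independence, which is not established anywhere in this paper). As written, this is a gap. Fortunately the step is avoidable: since the $N_A$ are non-negative and, by Corollary~\ref{cor:pu}, $\sum_A N_A \equiv 1$ on $\hat{\Omega}$, one has
\begin{equation*}
\|\mathbb{P}(f)\|_{L^{\infty}(\hat{Q})} \;\leq\; \Bigl(\max_{A \in \mathsf{A}(\hat{Q})} |\lambda_A(f)|\Bigr) \sum_{A \in \mathsf{A}(\hat{Q})} N_A \;=\; \max_{A \in \mathsf{A}(\hat{Q})} |\lambda_A(f)|,
\end{equation*}
and then $\|\mathbb{P}(f)\|_{L^2(\hat{Q})} \leq |\hat{Q}|^{1/2} \max_A |\lambda_A(f)| \leq C\, |\hat{Q}|^{1/2} |Q_A|^{-1/2} \|f\|_{L^2(\Omega_{\hat{Q}})} \leq C \|f\|_{L^2(\Omega_{\hat{Q}})}$, with no cardinality count needed. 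Replace (iii) with this max-plus-partition-of-unity argument (which is in fact how \cite{BaBeCoHuSa06,BeBuChSa12} proceed) and your proof is complete modulo the explicit kernel estimate you already flagged in (i).
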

\begin{proof}
The result follows immediately from (\cite{BeBuChSa12}, Proposition
5.4) and Proposition~\ref{the:opti}.
\end{proof}

\begin{proposition}
Given an analysis-suitable T-spline space, $\mathcal{T}$, there exists a constant $C'$
independent of $\mathsf{T}, \Xi, \Pi$ such that for $r \in [0,4]$
\begin{equation*}
||f - \mathbb{P}(f)||_{L^2(\hat{Q})} \leq C'(h_{R(\Omega_{\hat{Q}})})^r |f|_{H^r(R(\Omega_{\hat{Q}}))}
\quad \forall \hat{Q} \in \mathsf{T}_{ext}, \forall f \in L^2(\hat{\Omega})
\end{equation*}
where $h_{R(\Omega_{\hat{Q}})}$ denotes the diameter of
$R(\Omega_{\hat{Q}})$. Note that the constant $C'$ may depend on the
polynomial degree.
\end{proposition}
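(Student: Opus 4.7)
The plan is to combine the polynomial reproduction property of $\mathbb{P}$ with the local $L^2$-continuity established in the preceding proposition, and then close the argument with a classical Bramble--Hilbert estimate on the rectangle $R(\Omega_{\hat Q})$.

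First I would observe that $\mathbb{P}$ reproduces the full space of bicubic polynomials $\mathbb{P}_{33}$. Indeed, by Proposition~\ref{the:opti} every $p \in \mathbb{P}_{33}$ already lies in $\mathcal{T}$; since the functionals $\lambda_A$ are dual to the basis $\{N_A\}$ on $\mathcal{T}$, one has $\mathbb{P}(p) = \sum_A \lambda_A(p)\, N_A = p$. This step is immediate but is the only place where the nontrivial characterization developed earlier in the paper enters the approximation estimate.

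Next, for an arbitrary $p \in \mathbb{P}_{33}$, polynomial reproduction and the triangle inequality yield
\begin{equation*}
\|f - \mathbb{P}(f)\|_{L^2(\hat Q)} = \|(f-p) - \mathbb{P}(f-p)\|_{L^2(\hat Q)} \le \|f-p\|_{L^2(\hat Q)} + \|\mathbb{P}(f-p)\|_{L^2(\hat Q)}.
\end{equation*}
Since $\hat Q \subseteq R(\Omega_{\hat Q})$, the first term is bounded by $\|f-p\|_{L^2(R(\Omega_{\hat Q}))}$. The preceding proposition applied to $f-p$ controls the second term by $C\|f-p\|_{L^2(\Omega_{\hat Q})} \le C\|f-p\|_{L^2(R(\Omega_{\hat Q}))}$, with $C$ depending only on the polynomial degree. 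Taking the infimum over $p \in \mathbb{P}_{33}$ reduces the proof to a standard best-polynomial-approximation estimate on the rectangle $R(\Omega_{\hat Q})$.

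Finally, I would choose $p$ to be an averaged Taylor (Dupont--Scott) polynomial of $f$ on $R(\Omega_{\hat Q})$ of degree at most $3$ in each variable, and invoke the Bramble--Hilbert lemma in its bivariate form to obtain, for each $r \in [0,4]$,
\begin{equation*}
\|f-p\|_{L^2(R(\Omega_{\hat Q}))} \le C'' (h_{R(\Omega_{\hat Q})})^r\, |f|_{H^r(R(\Omega_{\hat Q}))}.
\end{equation*}
Absorbing the constants $C, C''$ into a single $C'$ gives the result. The subtle point, and the main obstacle I anticipate, is to ensure that $C'$ is genuinely independent of $\mathsf{T}$, $\Xi$ and $\Pi$; this requires $R(\Omega_{\hat Q})$ to be shape-regular uniformly over all admissible analysis-suitable T-meshes rather than merely of controlled diameter. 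In the analysis-suitable setting this is handled exactly as in~\cite{BeBuChSa12,BeBuSaVa12}: the extended support $\Omega_{\hat Q}$ involves only a locally bounded number of anchors, hence a bounded number of knot strips on each side of $\hat Q$, so a standard scaling to a reference rectangle produces a degree-dependent but mesh-independent Bramble--Hilbert constant.
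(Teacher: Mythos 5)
Your proposal is correct and follows essentially the same route as the paper: the paper's proof simply cites Proposition~\ref{the:opti} (polynomial reproduction) together with Proposition 5.4 of~\cite{BeBuChSa12}, which is precisely the stability-plus-Bramble--Hilbert argument you have written out in full. The only difference is that you make explicit the steps the paper delegates to the reference, including the shape-regularity caveat needed for the mesh-independence of $C'$.
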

\begin{proof}
The result follows immediately from (\cite{BeBuChSa12}, Proposition
5.4) and Proposition~\ref{the:opti}.
\end{proof}

\section{Dimension}
\label{sec:lemma}
In this section, we develop a dimension formula for polynomial spline
spaces defined over the extended T-mesh in the parametric domain of a
T-spline and establish the
connection between this dimension formula and analysis-suitable
T-spline spaces. The dimension formula, written only in terms of
topological quantities of the original T-mesh,
is an essential ingredient in establishing the refineability
properties in Section~\ref{sec:refine} and the approximation results
in Section~\ref{sec:approx} for analysis-suitable T-splines. The
essential results are proven in
Theorems~\ref{lemma:dim0} and~\ref{lemma:fullrank}.

Unlike existing approaches, our dimension formula does not require
that the T-mesh have any nesting structure. Of critical importance is
how this dimension formula can be directly related to
analysis-suitable T-spline spaces which can then be used to construct a
simple set of basis functions for the spline space which are compatible
with commercial CAD and analysis frameworks.

\subsection{Smoothing cofactor-conformality method}
\label{sec:cofac} We use the smoothing cofactor-conformality
method~\cite{Wa01,Schu93} to transform the
smoothness properties of $\mathcal{S}_{ext}$ into a
linear constraint matrix, $\mathbf{M}$. This constraint matrix is
then analyzed to determine the dimension of $\mathcal{S}_{ext}$. We
recall that the spline space, $\mathcal{S}_{ext}$, is defined using the
extended T-mesh, $\mathsf{T}_{ext}$, corresponding to a T-mesh,
$\mathsf{T}$, which \textit{does not} have any knot multiplicities.

\subsubsection{Vertex and edge cofactors} As shown in Figure
\ref{fig:smoothv}, for any vertex, $V_{i, j} = (\xi_i, \eta_j)
\in \mathsf{T}_{ext}$, the surrounding bicubic polynomial patches are
labeled, $p_{i, j}^{k}(\xi, \eta)$, $k = 0, 1, 2, 3$. If the vertex,
$V_{i, j}$, is a T-junction, then $p_{i, j}^{k}(\xi, \eta) = p_{i,
j}^{k+1}(\xi, \eta)$ for some $k$. Since $p_{i, j}^{0}(\xi, \eta)$
and $p_{i, j}^{1}(\xi, \eta)$ are $C^{2}$-continuous there exists a
cubic polynomial $\gamma_{i,j}^{2}(\eta)$, called the \textit{edge
cofactor}, such that
\begin{equation}
p_{i, j}^{1}(\xi, \eta) - p_{i, j}^{0}(\xi, \eta) = \gamma_{i, j}^{2}(\eta)(\xi -
\xi_{i})^{3}.
\label{eq:ecofac1}
\end{equation}
\begin{figure}[htbp]
\begin{center}
\includegraphics[width=0.9\textwidth]{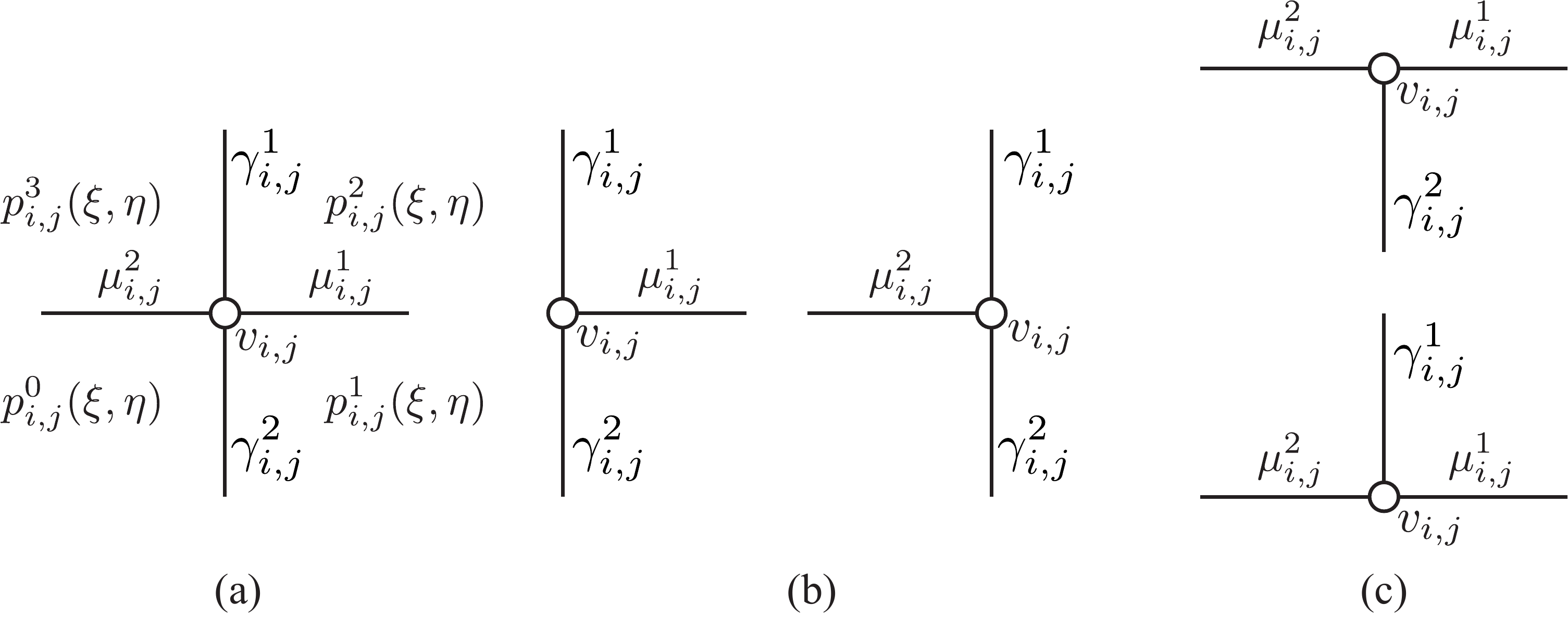}
\end{center}
\caption{The smoothing cofactors around a vertex.\label{fig:smoothv}}
\end{figure}

Similarly, there exists cubic polynomials, $\gamma_{i,
j}^{1}(\eta)$, $\mu_{i, j}^{1}(\xi)$, and $\mu_{i, j}^{2}(\xi)$,
such that
\begin{align}
p_{i, j}^{2}(\xi, \eta) - p_{i, j}^{1}(\xi, \eta) &= \mu_{i, j}^{1}(\xi)(\eta -
\eta_{j})^{3},  \label{eq:ecofac2} \\
p_{i, j}^{3}(\xi, \eta) - p_{i, j}^{2}(\xi, \eta) &= -\gamma_{i, j}^{1}(\eta)(\xi -
\xi_{i})^{3}, \label{eq:ecofac3} \\
p_{i, j}^{0}(\xi, \eta) - p_{i, j}^{3}(\xi, \eta) &= -\mu_{i, j}^{2}(\xi)(\eta -
\eta_{j})^{3}.
\label{eq:ecofac4}
\end{align}
We note that if two patches are identical the edge cofactor is zero.
Combining~\eqref{eq:ecofac1} - \eqref{eq:ecofac4} gives
\begin{equation}
(\gamma_{i, j}^{1}(\eta) - \gamma_{i, j}^{2}(\eta))(\xi - \xi_{i})^{3} =
(\mu_{i, j}^{1}(\xi) - \mu_{i, j}^{2}(\xi))(\eta - \eta_{j})^{3}.
\end{equation}
Since $(\xi - \xi_{i})^{3}$ and $(\eta - \eta_{j})^{3}$ are prime to each
other there exists a constant, $d_{i, j}$, called the \textit{vertex
  cofactor}, such that
\begin{equation}
\label{equ:edge} \gamma_{i, j}^{1}(\eta) - \gamma_{i, j}^{2}(\eta) =
d_{i, j}(\eta - \eta_{j})^{3}, \quad \mu_{i, j}^{1}(\xi) - \mu_{i, j}^{2}(\xi)
= d_{i, j}(\xi - \xi_{i})^{3}.
\end{equation}

\subsubsection{Assembling the constraint matrix, $\mathbf{M}$}

\begin{figure}[htbp]
\begin{center}
\includegraphics[width=0.9\textwidth]{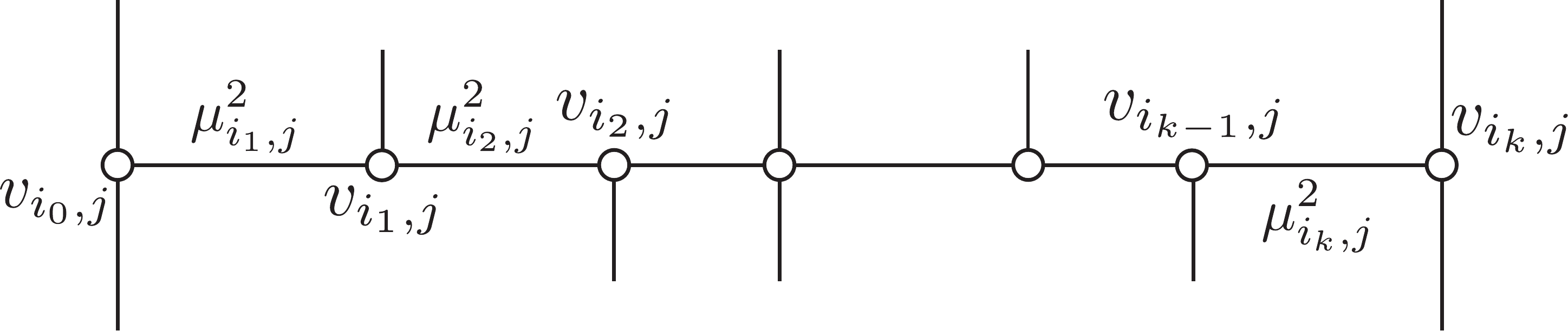}
\end{center}
\caption{The smoothing cofactors along a horizonal edge
segment.\label{fig:smoothe}}
\end{figure}
Referring to Figure \ref{fig:smoothe}, consider a horizontal segment
$G_{j}^{h}$ with $k + 1$ vertices and $k$ edge cofactors.
Using~\eqref{equ:edge} we have that
\begin{align}
\mu_{i_{0}, j}^{1} - 0 &= d_{i_{0}, j}( \xi - \xi_{i_{0}})^{3}, \label{eq:vcofac_const1}\\
\mu_{i_{1}, j}^{2} - \mu_{i_{1}, j}^{1} &= d_{i_{1}, j}( \xi -
\xi_{i_{1}})^{3},  \label{eq:vcofac_const2}\\
&\,\,\,\vdots \nonumber\\
0 - \mu_{i_{k}, j}^{2} &= d_{i_{k}, j}( \xi - \xi_{i_{k}})^{3},
\label{eq:vcofac_const3}
\end{align}
and
\begin{equation}
\mu_{i_{\ell+1}, j}^{2} = \mu_{i_{\ell}, j}^{1}, \, \, \ell = 1,
\ldots, k.
\label{eq:vcofac_const4}
\end{equation}
Summing \eqref{eq:vcofac_const1} - \eqref{eq:vcofac_const3} and using
\eqref{eq:vcofac_const4} results in the linear system
\begin{equation}
\label{equ:con0}
L_j^h := \sum_{\ell=0}^{k} d_{i_{\ell}, j}( \xi -
\xi_{i_{\ell}})^{3} = 0.
\end{equation}
We call the solution space, denoted by $W[G_{j}^{h}]$, for this linear system the \emph{edge
conformality space}. Similarly, for a vertical segment $G_i^v$ we have
that
\begin{equation}
\label{equ:con1}
L_i^v := \sum_{\ell=0}^{l} d_{i, j_{\ell}}( \eta -
\eta_{j_{\ell}})^{3} = 0
\end{equation}
where the solution space is denoted by $W[G_{i}^{v}]$.
By~\eqref{equ:con0} and~\eqref{equ:con1}, one immediately has that

\begin{lemma}
\label{lemma:dim}
If each $\xi_{i_{\ell}}$ and $\eta_{j_{\ell}}$
are different, then the dimension of $W[G_{j}^{h}]$ and
$W[G_{i}^{v}]$ are $k - 3$ and $l - 3$ respectively.
\end{lemma}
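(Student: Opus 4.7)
The plan is to view the constraint $L_j^h = \sum_{\ell=0}^{k} d_{i_\ell,j}(\xi - \xi_{i_\ell})^3 \equiv 0$ as a polynomial identity in $\xi$, and then extract one linear condition per coefficient. Since every $(\xi-\xi_{i_\ell})^3$ lies in the $4$-dimensional space $\mathbb{P}_3$ of cubic polynomials, the identity is equivalent to four scalar conditions on the $k+1$ unknowns $d_{i_0,j},\dots,d_{i_k,j}$. So $\dim W[G_j^h] = k+1 - \mathrm{rank}(M)$, where $M$ is the $4\times(k+1)$ coefficient matrix of these four conditions; the entire content of the lemma is to show that $\mathrm{rank}(M)=4$ when the knots $\xi_{i_\ell}$ are distinct.

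For the execution, I would expand $(\xi-\xi_{i_\ell})^3 = \xi^3 - 3\xi_{i_\ell}\xi^2 + 3\xi_{i_\ell}^2\xi - \xi_{i_\ell}^3$ by the binomial theorem and equate coefficients of $\xi^3,\xi^2,\xi,1$ in $L_j^h\equiv 0$ to zero. After absorbing harmless constants ($-3,\,3,\,-1$), the resulting system is
\begin{equation*}
\begin{pmatrix} 1 & 1 & \cdots & 1 \\ \xi_{i_0} & \xi_{i_1} & \cdots & \xi_{i_k} \\ \xi_{i_0}^2 & \xi_{i_1}^2 & \cdots & \xi_{i_k}^2 \\ \xi_{i_0}^3 & \xi_{i_1}^3 & \cdots & \xi_{i_k}^3 \end{pmatrix}
\begin{pmatrix} d_{i_0,j} \\ d_{i_1,j} \\ \vdots \\ d_{i_k,j} \end{pmatrix} = \mathbf{0}.
\end{equation*}
This is the transpose of a Vandermonde matrix. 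Under the hypothesis that the $\xi_{i_\ell}$ are pairwise distinct, any choice of four columns yields a $4\times 4$ Vandermonde determinant $\prod_{p<q}(\xi_{i_q}-\xi_{i_p}) \neq 0$, so $M$ has maximal rank $4$. By rank–nullity, $\dim W[G_j^h] = (k+1) - 4 = k-3$.

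The vertical case for $W[G_i^v]$ is proven by the identical argument applied to the polynomial identity $\sum_{\ell=0}^{l} d_{i,j_\ell}(\eta-\eta_{j_\ell})^3 \equiv 0$ in $\eta$, using that the $\eta_{j_\ell}$ are pairwise distinct, yielding $\dim W[G_i^v] = l - 3$.

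There is no real obstacle here: the argument is a straightforward invocation of the nonvanishing of the Vandermonde determinant. The only thing worth flagging is the implicit requirement $k \geq 3$ (resp.\ $l \geq 3$) for the formula to give a nonnegative dimension; in the degenerate regimes $k < 3$ the rank of $M$ simply equals $k+1$ and $W[G_j^h] = \{0\}$, which is consistent with the formula being understood in the sense $\max(k-3,0)$. Otherwise the proof is a direct coefficient-matching computation and will occupy only a few lines.
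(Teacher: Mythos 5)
Your proof is correct and is exactly the argument the paper treats as immediate (the lemma is stated with no explicit proof beyond ``By \eqref{equ:con0} and \eqref{equ:con1}, one immediately has that''): expanding each $(\xi-\xi_{i_\ell})^3$ and matching coefficients reduces the identity to a $4\times(k+1)$ Vandermonde system of rank $4$ when the knots are distinct, giving nullity $k-3$ by rank--nullity. Your remark about the implicit requirement $k\geq 3$ is a reasonable caveat, and your computation is consistent with how the paper later uses the lemma (the full-rank $4\times 4$ diagonal blocks in Lemma~\ref{lemma:four}).
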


The linear systems (\ref{equ:con0}) and (\ref{equ:con1}), associated
with the horizontal and vertical segments in
$\mathsf{T}_{ext}$, can be assembled into the global system
\begin{equation}
\mathbf{M} \mathbf{D} = \mathbf{0}
\label{eq:glob_conf}
\end{equation}
where $\mathbf{D} = [d_1, d_2, \ldots, d_{n^{ext}}]^T$ is a column
vector of all vertex cofactors in $\mathsf{T}_{ext}$ and
$\mathbf{M}$ is a $4n_{seg} \times n_{ext}$ real matrix. Each edge
conformality condition corresponds to a submatrix consisting of $4$
rows of $\mathbf{M}$ and each vertex cofactor corresponds to a
column of $\mathbf{M}$.
\begin{lemma}The dimension of $\mathcal{S}_{ext}$ is the
nullity of $\mathbf{M}$, i.e., the dimension is $n_{ext}$ minus the
rank of $\mathbf{M}$.
\label{lemma:null_m}
\end{lemma}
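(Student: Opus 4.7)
The plan is to exhibit a linear isomorphism between $\mathcal{S}_{ext}$ and $\ker\mathbf{M}$, which would immediately yield $\dim\mathcal{S}_{ext}=n^{ext}-\mathrm{rank}(\mathbf{M})$. Define the cofactor map $\Phi:\mathcal{S}_{ext}\to\mathbb{R}^{n^{ext}}$ sending $f$ to the vector $\mathbf{D}(f)=(d_{1},\ldots,d_{n^{ext}})^{T}$ whose entries are the vertex cofactors of $f$ obtained from~\eqref{equ:edge}. The derivation of $L_{j}^{h}=0$ and $L_{i}^{v}=0$ in Section~\ref{sec:cofac} shows that $\Phi(f)$ automatically satisfies every edge conformality condition, so $\Phi$ lands in $\ker\mathbf{M}$, and it is manifestly linear.

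For injectivity, suppose $\Phi(f)=\mathbf{0}$. Then~\eqref{equ:edge} forces $\gamma^{1}_{i,j}=\gamma^{2}_{i,j}$ and $\mu^{1}_{i,j}=\mu^{2}_{i,j}$ at every vertex of $\mathsf{T}_{ext}$, so along any maximal horizontal (resp.\ vertical) line in the mesh the edge cofactor is a \emph{single} polynomial in $\eta$ (resp.\ $\xi$). Because $f\equiv 0$ off $\tilde{\Omega}$, the edge cofactor associated with any edge incident to the exterior vanishes, hence vanishes along the whole line. Thus the bicubic patches on adjacent elements coincide, so $f|_{\tilde{\Omega}}$ is one global bicubic polynomial, and the $C^{2}$ matching to the zero function across $\partial\tilde{\Omega}$ forces $f\equiv 0$.

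For surjectivity, I would reconstruct $f$ element-by-element from a given $\mathbf{D}\in\ker\mathbf{M}$. Initialise $f\equiv 0$ on every element of $\mathbb{R}^{2}\setminus\tilde{\Omega}$, then traverse the elements of $\tilde{\Omega}$ in a suitable order, say lexicographically from the boundary inward. Each time an unassigned element shares an edge with an already-assigned neighbour, recover the edge cofactor across that edge by telescoping the segment equations~\eqref{eq:vcofac_const1}--\eqref{eq:vcofac_const3} starting from a boundary vertex, and define the new polynomial patch accordingly. The conditions $L_{j}^{h}=0$ and $L_{i}^{v}=0$ encoded in $\mathbf{M}\mathbf{D}=\mathbf{0}$ guarantee that the result is independent of the traversal: every closed loop around a vertex closes up. The resulting $f$ is then bicubic on each element, $C^{2,2}$ globally, zero outside $\tilde{\Omega}$, and satisfies $\Phi(f)=\mathbf{D}$.

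The main obstacle lies in making the surjectivity construction rigorous, specifically the path-independence claim. One must check that walking around any closed loop of elements in $\mathsf{T}_{ext}$ produces no monodromy, which reduces to showing that every such loop-compatibility relation is a linear consequence of the segment-wise relations $L_{j}^{h}$ and $L_{i}^{v}$ assembled into $\mathbf{M}$. This is the content of the classical smoothing-cofactor theorem~\cite{Wa01,Schu93}, and it is precisely where the hypothesis that $\mathsf{T}$ has no knot multiplicities, together with the structural completeness of the extended T-mesh $\mathsf{T}_{ext}$, is genuinely used.
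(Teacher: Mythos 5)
Your proposal is correct and follows essentially the same route as the paper: the paper's proof of Lemma~\ref{lemma:null_m} is a one-sentence appeal to the smoothing cofactor-conformality method of \cite{Wa01}, and your argument simply unpacks what that citation contains, namely the linear bijection between $\mathcal{S}_{ext}$ and the null space of $\mathbf{M}$. Your injectivity argument is complete, and your surjectivity sketch correctly isolates the path-independence (monodromy) step as the one point that must still be taken from the classical theorem in \cite{Wa01,Schu93} --- which is precisely the step the paper outsources to the same references.
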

\begin{proof}
Since the continuity constraints in $\mathcal{S}_{ext}$
  have been converted into the linear system in~\eqref{eq:glob_conf}
  using the smoothing cofactor-conformality method (see~\cite{Wa01}),
  the dimension of $\mathcal{S}_{ext}$ is the dimension of the null
  space of $\mathbf{M}$, i.e., the dimension is $n_{ext}$ minus the
  rank of $\mathbf{M}$.
\end{proof}

\subsection{Simplifying the constraint matrix, $\mathbf{M}$, and
  $\mathsf{T}_{ext}$}
It is possible to simplify the constraint matrix, $\mathbf{M}$, and
the topology of the extended T-mesh in the parametric domain,
$\mathsf{T}_{ext}$, such that
the null space of $\mathbf{M}$ is undisturbed. To \textit{remove} a
vertex from $\mathsf{T}_{ext}$ means we delete the corresponding
column from $\mathbf{M}$ and to \textit{remove} a segment from
$\mathsf{T}_{ext}$ means we delete the appropriate submatrix from
$\mathbf{M}$. We form the reduced constraint matrix
$\overline{\mathbf{M}}$ by removing the eight segments and contained
vertices $[\xi_{\underline{m}}, \xi_{\overline{m}}] \times
\{\eta_{\underline{n}}\}$, $[\xi_{\underline{m}}, \xi_{\overline{m}}] \times
\{\eta_{\underline{n} + 1}\}$, $[\xi_{\underline{m}}, \xi_{\overline{m}}] \times
\{\eta_{\overline{n} - 1}\}$, $[\xi_{\underline{m}}, \xi_{\overline{m}}] \times
\{\eta_{\overline{n}}\}$, $\{\xi_{\underline{m}}\} \times
[\eta_{\underline{n}}, \eta_{\overline{n}}]$, $\{\xi_{\underline{m} + 1}\} \times
[\eta_{\underline{n}}, \eta_{\overline{n}}]$, $\{\xi_{\overline{m} - 1}\} \times
[\eta_{\underline{n}}, \eta_{\overline{n}}]$, and $\{\xi_{\overline{m}}\} \times
[\eta_{\underline{n}}, \eta_{\overline{n}}]$.
We denote the T-mesh after the removals by
$\overline{\mathsf{T}}_{ext}$ and the number of vertices and segments
in $\overline{\mathsf{T}}_{ext}$ by $\overline{n}^{ext}$ and
$\overline{n}^{G}$, respectively. Figure~\ref{fig:simple} shows
the simplified extended T-mesh $\overline{\mathsf{T}}_{ext}$ for the
extended T-mesh in Figure~\ref{fig:tmesh_extended}.
The vertices and segments which remain after the removal process have
corresponding entries in $\overline{\mathbf{M}}$.

\begin{figure}[htbp]
\begin{center}
\includegraphics[width=0.6\textwidth]{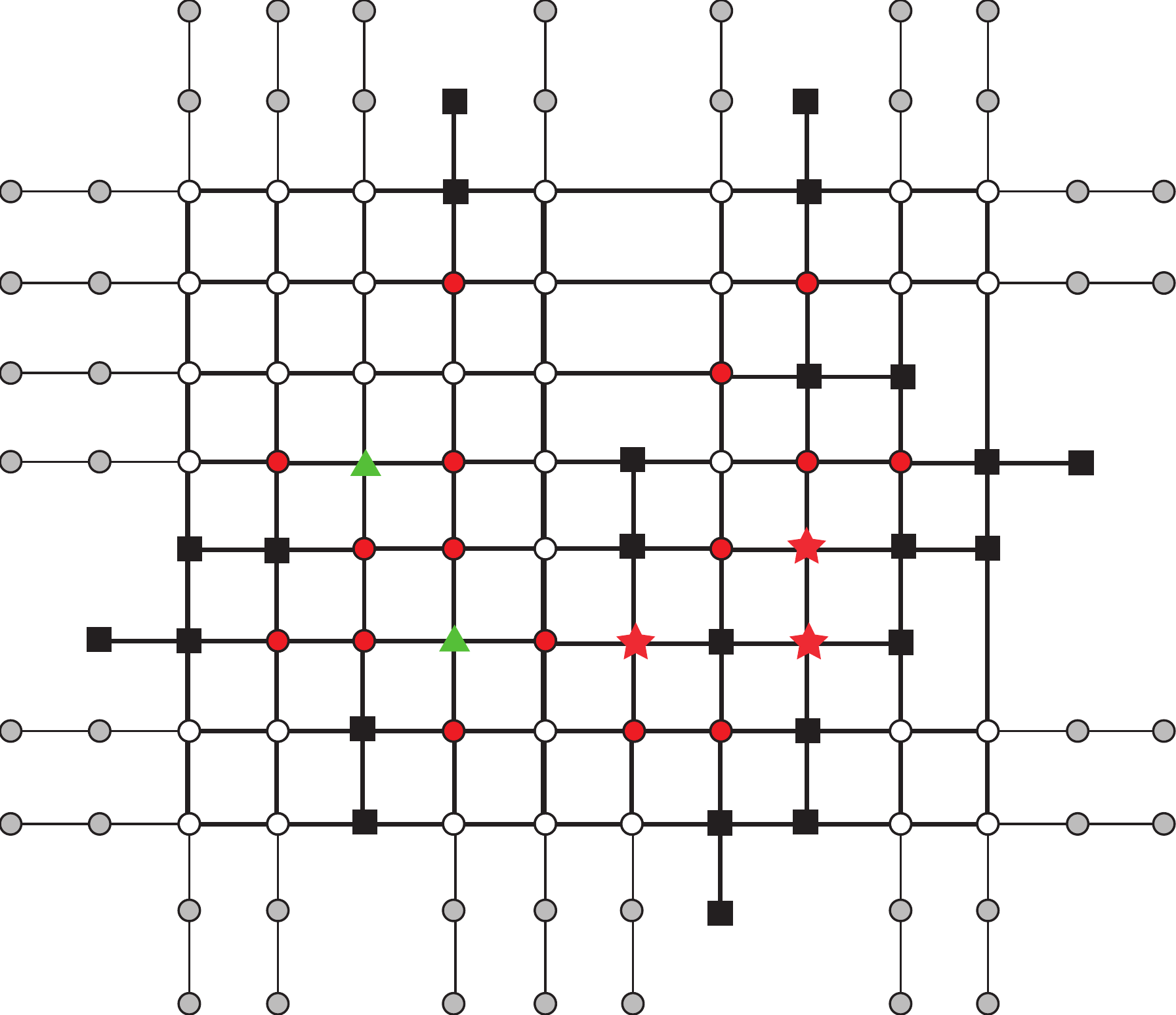}
\end{center}
\caption{The simplified extended T-mesh
  $\overline{\mathsf{T}}_{ext}$ for the extended T-mesh in
  Figure~\ref{fig:tmesh_extended}.}
\label{fig:simple}
\end{figure}

\begin{lemma}
\label{lemma:equ} The dimension of the null space of $\mathbf{M}$ is
the same as that for $\overline{\mathbf{M}}$.
\end{lemma}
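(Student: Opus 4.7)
The plan is a block-matrix decomposition combined with a uniqueness claim for $\mathcal{S}_{ext}$. I partition the vertex cofactor vector as $\mathbf{D}=(\mathbf{D}_R,\mathbf{D}_K)$, where $\mathbf{D}_R$ collects cofactors at vertices contained in one of the eight removed segments and $\mathbf{D}_K$ collects the remaining cofactors, and partition the rows of $\mathbf{M}$ analogously into the $32$ rows coming from the eight removed segments and the rest. Because every vertex on a removed segment is itself, by construction, a removed vertex, the $32$ removed-segment rows have no nonzero entries in kept columns, yielding the lower block-triangular form
\[
\mathbf{M}=\begin{pmatrix}\mathbf{M}_{RR}&\mathbf{0}\\\mathbf{M}_{KR}&\overline{\mathbf{M}}\end{pmatrix}.
\]

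Writing the nullities as column count minus rank, the claim nullity$(\mathbf{M})=$ nullity$(\overline{\mathbf{M}})$ becomes the rank identity $\mathrm{rank}(\mathbf{M})-\mathrm{rank}(\overline{\mathbf{M}})=|R|$, where $|R|$ is the number of removed columns. Using the block structure, this is equivalent to showing that the stacked matrix $\begin{pmatrix}\mathbf{M}_{RR}\\\mathbf{M}_{KR}\end{pmatrix}$ has full column rank $|R|$: if so, no column $(\mathbf{0},\overline{\mathbf{M}}_{\cdot j})^T$ of the right block lies in the span of the left block (the left block has trivial kernel, so a linear combination $\mathbf{M}_{RR}c=\mathbf{0}$ forces $c=\mathbf{0}$), and therefore $\mathrm{rank}(\mathbf{M})=|R|+\mathrm{rank}(\overline{\mathbf{M}})$ as required.

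Full column rank of $\begin{pmatrix}\mathbf{M}_{RR}\\\mathbf{M}_{KR}\end{pmatrix}$ says that no nontrivial $\mathbf{D}_R$ satisfies every conformality equation (removed and kept) when all kept cofactors are set to zero. Via Lemma~\ref{lemma:null_m}, this translates into a spline-theoretic uniqueness statement: the only $f\in\mathcal{S}_{ext}$ all of whose interior (kept) vertex cofactors vanish is $f\equiv 0$.

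The main obstacle is this uniqueness statement. Given such an $f$, the vanishing of interior vertex cofactors combined with $C^{2,2}$ smoothness forces the interior edge cofactor functions to propagate unchanged across every interior vertex, which in turn forces $f$ to collapse to a single bicubic polynomial on each interior row and column of patches. Slicing along an interior horizontal line $\eta=\eta_{j_0}$ reduces the problem to a univariate $C^2$ cubic spline on $[\xi_{\underline{m}},\xi_{\overline{m}}]$ whose only possible interior breakpoints lie in the two outer $\xi$-frame strips and which vanishes $C^2$ at the endpoints (propagated from $f|_{\mathbb{R}^2\setminus\tilde{\Omega}}\equiv 0$). Admissibility of $\mathsf{T}$ guarantees that the outer two horizontal and two vertical skeleton lines are complete, giving exactly enough rigidity for the standard univariate cubic spline uniqueness argument to force each slice to vanish identically; sweeping over $j_0$ then gives $f\equiv 0$, completing the chain of implications back to nullity$(\mathbf{M})=$ nullity$(\overline{\mathbf{M}})$.
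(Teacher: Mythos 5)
Your block decomposition rests on a misreading of how $\overline{\mathbf{M}}$ is built, and the misreading is fatal. Only the vertices that lie on two removed segments and on \emph{no} kept segment lose their columns --- these are the sixteen ``removed corner vertices'' the paper's own proof refers to (four on each removed boundary line). Every other vertex of the eight removed segments still lies on a kept segment and therefore keeps its column. This reading is forced by the counting in Theorem~\ref{lemma:dim0}: each kept segment must retain exactly four extended vertices, and $\overline{n}^{ext}-4\overline{n}^{G}$ must equal $n^a+n^++n^-$; in a tensor-product mesh your reading deletes $4m+4n-8$ columns instead of $16$ and that formula fails. Worse, under your reading the lemma itself is false: for a tensor-product mesh $\dim\ker\mathbf{M}=(m-3)(n-3)$, while your reduced system imposes nontrivial constraints on its $(m-3)(n-3)$ surviving (anchor) columns, so its nullity is strictly smaller. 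With the correct reading the triangularity runs the other way: kept segments contain no removed vertices, so $\mathbf{M}=\left[\begin{smallmatrix}\mathbf{M}_{RR}&\mathbf{M}_{RK}\\ \mathbf{0}&\overline{\mathbf{M}}\end{smallmatrix}\right]$, and it is the $32$ removed rows that couple to kept columns.

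Independently of this, the linear algebra in your second step does not hold, and the actual content of the proof is missing. Full column rank of the stacked block $\left[\begin{smallmatrix}\mathbf{M}_{RR}\\ \mathbf{M}_{KR}\end{smallmatrix}\right]$ does not yield $\mathrm{rank}(\mathbf{M})=|R|+\mathrm{rank}(\overline{\mathbf{M}})$: take $\mathbf{M}_{RR}=(0)$, $\mathbf{M}_{KR}=(1)$, $\overline{\mathbf{M}}=(1)$; the stack has full column rank yet $\mathrm{rank}(\mathbf{M})=1\neq 2$. Your parenthetical ``$\mathbf{M}_{RR}c=\mathbf{0}$ forces $c=\mathbf{0}$'' requires $\mathbf{M}_{RR}$ itself, not the stack, to be injective, and ``no single column of the right block lies in the span of the left block'' is in any case weaker than the trivial intersection of column spaces that rank additivity needs. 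What the lemma actually requires, given the correct upper-triangular structure, is that for every $\mathbf{D}_K\in\ker\overline{\mathbf{M}}$ the $32$ removed equations are solvable for the $16$ corner cofactors. The four horizontal removed segments determine them uniquely (each is an invertible $4\times 4$ system, since $(\xi-\xi_i)^3$ at four distinct knots are independent); the crux --- entirely absent from your proposal --- is that the four \emph{vertical} removed-segment equations are then automatically satisfied. The paper proves this redundancy by summing the horizontal conformality identities against $(\eta-\eta_{j_\ell})^3$, exchanging the order of summation, and invoking the independence of $(\eta-\eta_{j_\ell})^3$ for $\ell=0,1,l-1,l$. Your closing spline-uniqueness claim, even if it could be made rigorous, does not substitute for that argument.
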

\begin{proof}
The vertex cofactors which correspond to the removed corner vertices
can be uniquely determined by applying~\eqref{equ:con0} to the four
horizontal removed segments \textit{or} by applying~\eqref{equ:con1}
to the four vertical removed segments. To establish the result we
need to show that the constraints corresponding to the four vertical
removed segments can be derived from the constraints corresponding to
the four horizontal removed segments. We have that
\begin{align}
0 &= \sum_{m = 0}^{k}\left[\sum_{\ell=0}^{l} d_{i_{m}, j_{\ell}}(
\eta
- \eta_{j_{\ell}})^{3}\right](\xi - \xi_{i_{m}})^{3} \label{eq:n_reduce1}\\
 & = \sum_{\ell = 0}^{l}\left[\sum_{m=0}^{k} d_{i_{m}, j_{\ell}}( \xi -
 \xi_{i_{m}})^{3}\right](\eta - \eta_{j_{\ell}})^{3} \label{eq:n_reduce2}\\
 & = \sum_{\ell = 0, 1, l-1, l}\left[\sum_{m=0}^{k} d_{i_{m}, j_{\ell}}(
 \xi - \xi_{i_{m}})^{3}\right](\eta - \eta_{j_{\ell}})^{3}.
\label{eq:n_reduce3}
\end{align}
Equation~\eqref{eq:n_reduce1} involves the sum of all edge
conformality conditions for the horizonal edge segments.
Equation~\eqref{eq:n_reduce3} holds because the linear systems for
the other vertical segments are satisfied. Since $(\eta -
\eta_{j_{\ell}})^3$, $\ell=0,1,l-1,l$, form a basis for a linear space
of polynomials with degree less than four, $\sum_{m=0}^{k} d_{i_{m},
j_{\ell}}(
 \xi - \xi_{i_{m}})^{3} = 0$, for $m=0,\ldots,k$. In other words, the
 constraints for the four vertical removed segments can be
 derived from the other constraints.
\end{proof}

\begin{remark}
There are many possible simplification techniques which could have
been used. This simplication technique was chosen because it leaves
the null space of $\mathbf{M}$ undisturbed and each resulting segment
in $\overline{\mathbf{M}}$ contains exaclty four extended vertices.
\end{remark}

\begin{theorem}
\label{lemma:dim0}
If $\overline{\mathbf{M}}$ has full column rank, then the
dimension of $\mathcal{S}_{ext}$ is
\begin{equation} \label{dimension} \begin{array}{r@{\;}l}
\dim\mathcal{S}_{ext} = n^a + n^{+} + n^{-}
\end{array}
\end{equation}
where $n^a$ is the number of active vertices in $\mathsf{T}$ and
$n^+$ and $n^-$ are the number of crossing and overlap vertices,
respectively, in $\mathsf{T}_{ext}$.
\end{theorem}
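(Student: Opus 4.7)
The strategy is to reduce the dimension of $\mathcal{S}_{ext}$ to a rank count on the reduced constraint matrix $\overline{\mathbf{M}}$ and then exploit the local structure recorded in the remark following Lemma~\ref{lemma:equ}. By Lemma~\ref{lemma:null_m}, $\dim \mathcal{S}_{ext} = n^{ext} - \mathrm{rank}(\mathbf{M})$, and Lemma~\ref{lemma:equ} says that passing from $\mathbf{M}$ to $\overline{\mathbf{M}}$ preserves the nullity, so $\dim \mathcal{S}_{ext} = \overline{n}^{ext} - \mathrm{rank}(\overline{\mathbf{M}})$. All of the work then lies in pinning down $\mathrm{rank}(\overline{\mathbf{M}})$.

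The first step is bookkeeping. The simplification preceding Lemma~\ref{lemma:equ} removes only vertices lying on the eight outermost frame segments, all of which sit outside the active region $\mathsf{AR}$. Consequently every active, crossing, and overlap vertex survives, and partitioning the remaining columns of $\overline{\mathbf{M}}$ by vertex type yields $\overline{n}^{ext} = n^a + n^+ + n^- + \overline{n}^*$, where $\overline{n}^*$ denotes the number of extended vertices left in $\overline{\mathsf{T}}_{ext}$. The target formula is therefore equivalent to $\mathrm{rank}(\overline{\mathbf{M}}) = \overline{n}^*$.

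The second step is local and leverages the remark after Lemma~\ref{lemma:equ}: every segment of $\overline{\mathsf{T}}_{ext}$ contains exactly four extended vertices, whose knot coordinates are pairwise distinct since $\mathsf{T}$ has no knot multiplicities. The four edge-conformality rows of such a segment, restricted to those four extended-vertex columns, form a Vandermonde-type $4 \times 4$ block in the distinct knots via~\eqref{equ:con0} (or~\eqref{equ:con1} for vertical segments), and this block is invertible. On each segment this forces every non-extended column (active, crossing, or overlap) to coincide, once restricted to the segment's four rows, with a linear combination of the four extended-vertex columns. The full-column-rank hypothesis then promotes this local solvability to the global rank identity: it asserts that the $\overline{n}^*$ extended-vertex columns are linearly independent globally, so the local reductions piece together into an expression of every non-extended column as a global combination of extended-vertex columns. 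Hence $\mathrm{rank}(\overline{\mathbf{M}}) = \overline{n}^*$, and substituting into the identity from step one gives $\dim \mathcal{S}_{ext} = n^a + n^+ + n^-$.

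The main obstacle is exactly this passage from the segment-by-segment Vandermonde argument to the global rank equality. A single non-extended vertex typically lies on both a horizontal and a vertical segment, and the two local linear combinations that represent its column in terms of extended-vertex columns must be mutually consistent. That global consistency is precisely what the full-column-rank assumption encodes; isolating it from the routine bookkeeping and the local Vandermonde inversion is what makes the present theorem a clean topological dimension count, while verifying the hypothesis itself in the analysis-suitable setting is deferred to Theorem~\ref{lemma:fullrank}.
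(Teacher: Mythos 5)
Your proof is correct and follows essentially the same route as the paper: reduce to the nullity of $\overline{\mathbf{M}}$ via Lemmas~\ref{lemma:null_m} and~\ref{lemma:equ}, note that the removed vertices are all non-active so $\overline{n}^{ext} = n^a + n^+ + n^- + \overline{n}^*$, and use the fact that each segment of $\overline{\mathsf{T}}_{ext}$ owns exactly four extended vertices to identify $\overline{n}^* = 4\overline{n}^{G}$ with the rank granted by the hypothesis. Your ``second step'' on Vandermonde blocks and local-to-global consistency is harmless but redundant here, since the full-rank assumption already delivers $\mathrm{rank}(\overline{\mathbf{M}}) = 4\overline{n}^{G}$ directly (that machinery is really the content of Lemma~\ref{lemma:four} and Theorem~\ref{lemma:fullrank}, as you note).
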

\begin{proof}
Since there are $\overline{n}^{ext}$ and $\overline{n}^{G}$
vertices and segments, respectively, in
$\overline{\mathsf{T}}_{ext}$, $\overline{\mathbf{M}}$ is a
$4\overline{n}^{G} \times \overline{n}^{ext}$ matrix. Since
$\overline{\mathbf{M}}$ has full column rank the dimension of
$\mathcal{S}_{ext}$ is $\overline{n}^{ext} - 4\overline{n}^{G}$.
As every segment in $\overline{\mathsf{T}}_{ext}$ has exactly four
extended vertices and these four
extended vertices are not extended vertices for any other
segment, the number of extended vertices in
$\overline{\mathsf{T}}_{ext}$ is $4\overline{n}^{G}$. Thus,
\begin{equation}
\begin{array}{r@{\;}l}  \dim\mathcal{S}_{ext} =
\overline{n}^{ext} - 4\overline{n}^{G} = n^a + n^{+} + n^{-}.
\end{array}
\end{equation}
\end{proof}

\subsection{Rank of the constraint matrix $\overline{\mathbf{M}}$}
\label{sec:dim}
Since every extended vertex in
$\overline{\mathsf{T}}_{ext}$ is an extended vertex in exactly one
segment, the matrix $\overline{\mathbf{M}}$ has more columns than
rows, i.e., $\overline{n}^{ext} > 4\overline{n}^{G}$. After
arranging the order of edge conformality conditions and the order of
vertex cofactors, an appropriate partition of the linear system of
constraints, $\overline{\mathbf{M}} \, \, \overline{\mathbf{D}} =
\mathbf{0}$, is
 \begin{equation}
    \left[\begin{array}{c|c}
        \overline{\mathbf{M}}_1 & \overline{\mathbf{M}}_2
      \end{array}\right]
    \left[
      \begin{array}{c}
        \overline{\mathbf{D}}_{1} \vspace{2pt} \\
        \hline \vspace{-10pt} \\
        \overline{\mathbf{D}}_2
      \end{array}\right] =
    \mathbf{0}
  \end{equation}
where $\overline{\mathbf{M}}_1$ is a $4\overline{n}^{G} \times
4\overline{n}^{G}$ matrix and $\overline{\mathbf{M}}_1$ is a
$4\overline{n}^{G} \times (\overline{n}^{ext} - 4\overline{n}^{G})$ matrix,
$\overline{\mathbf{D}}_1$ is a vector of the first
$4\overline{n}^{G}$ vertex cofactors, and $\overline{\mathbf{D}}_2$ is
a vector of the remaining vertex cofactors.

\begin{definition}
A simplified extended T-mesh is called
diagonalizable if we can arrange all the segments $G_{i}, i
= 1, \dots, \overline{n}^{G}$ such that the number of vertices
on segment $G_{j}$ but not on segment $G_{i}, i < j$ is at least 4.
\label{def:seg_order}
\end{definition}
\begin{lemma}
\label{lemma:four}
If a simplified extended T-mesh is diagonalizable, then the matrix
$\overline{\mathbf{M}}$ has full column rank.
\end{lemma}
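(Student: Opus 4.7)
The plan is to exhibit a square submatrix of $\overline{\mathbf{M}}$ of size $4\overline{n}^{G} \times 4\overline{n}^{G}$ which, after an appropriate reordering of rows and of a selected subset of columns, is block lower triangular with invertible $4 \times 4$ diagonal blocks. Its invertibility will imply that the $4\overline{n}^{G}$ rows of $\overline{\mathbf{M}}$ are linearly independent, giving the matrix the maximal rank asserted and justifying the dimension count in Theorem~\ref{lemma:dim0}.

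First I would invoke the diagonalizability hypothesis to list the segments as $G_{1},\ldots,G_{\overline{n}^{G}}$ so that each $G_{j}$ carries at least four vertices not appearing on any earlier $G_{i}$, $i<j$. For each $j$ I select exactly four such ``fresh'' vertices and declare their columns the pivot columns of block $j$; I also arrange the rows of $\overline{\mathbf{M}}$ so that the four rows produced by the conformality condition \eqref{equ:con0} (or \eqref{equ:con1}) on $G_{j}$ form the $j$-th row block. The central observation is that the column corresponding to a vertex $v$ contributes a nonzero entry to the row block of $G_{j}$ only when $v$ lies on $G_{j}$. Because the pivots chosen for $G_{i}$ with $i>j$ are by construction absent from every segment indexed less than $i$, they are absent from $G_{j}$ in particular. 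Therefore every block strictly above the diagonal of the resulting square pivot submatrix vanishes, and the submatrix is block lower triangular.

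It then remains to show that each diagonal block is invertible. For a horizontal segment $G_{j}^{h}$, expanding $\sum_{\ell} d_{i_{\ell},j}(\xi-\xi_{i_{\ell}})^{3}=0$ in the monomial basis $\{1,\xi,\xi^{2},\xi^{3}\}$ yields four scalar equations, and the column of the $j$-th diagonal block associated with a pivot vertex of first coordinate $\xi_{i_{\ell}}$ equals $(-\xi_{i_{\ell}}^{3},\,3\xi_{i_{\ell}}^{2},\,-3\xi_{i_{\ell}},\,1)^{T}$. The assembled diagonal block is therefore, up to signs and a power of $3$, a Vandermonde matrix in the four parameters $\xi_{i_{1}},\ldots,\xi_{i_{4}}$; its determinant is a nonzero multiple of $\prod_{k<m}(\xi_{i_{m}}-\xi_{i_{k}})$, which does not vanish since $\mathsf{T}$ has no knot multiplicities and distinct vertices on a single segment of $\mathsf{T}_{ext}$ sit at distinct knot coordinates. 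The vertical case is identical.

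The main obstacle is the bookkeeping that couples the combinatorics of ``freshness'' with the algebra of the cofactor system: one must verify both that fresh vertices for $G_{j}$ really produce four distinct parameter values along $G_{j}$ (so that the Vandermonde argument applies), and that ``fresh with respect to $G_{1},\ldots,G_{j-1}$'' truly forces zero entries in the earlier row blocks. Both points reduce to the elementary fact that the column of a vertex $v$ is nonzero in the row block of a segment $G$ precisely when $v\in G$, which is hardwired into the smoothing cofactor-conformality construction. Once this is in place, the determinant of the pivot submatrix is the product of the nonzero $4\times 4$ diagonal determinants, and the lemma follows.
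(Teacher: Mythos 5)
Your proposal is correct and follows essentially the same route as the paper: order the segments by diagonalizability, pick four fresh vertices per segment as pivot columns, observe that the resulting $4\overline{n}^{G}\times 4\overline{n}^{G}$ pivot submatrix is block triangular with (signed, scaled) Vandermonde diagonal blocks that are invertible because the knots are distinct, which is exactly the paper's $\overline{\mathbf{M}}_1$ construction with Lemma~\ref{lemma:dim} supplying the diagonal blocks. Your explicit Vandermonde computation is in fact slightly more careful than the paper's appeal to Lemma~\ref{lemma:dim}, since it verifies that the particular four pivot columns (not just the whole row block) form an invertible $4\times 4$ matrix, and you correctly read the asserted ``full column rank'' as the maximal rank $4\overline{n}^{G}$, i.e.\ full row rank, which is what the dimension count in Theorem~\ref{lemma:dim0} actually uses.
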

\begin{proof}
Since the simplified extended T-mesh is diagonalizable, the segments
can be ordered as described in Definition~\ref{def:seg_order}. Given
this ordering of segments, for $i = 1, \ldots, \overline{n}^{G}$, we
place the edge conformality conditions corresponding to segment $G_i$
in rows $4(\overline{n}^{G}-i) + 1$ through $4(\overline{n}^{G}-i) +4$ of
$\overline{\mathbf{M}}$, and place the first four vertex cofactors which
appear in $G_{i}$ but not in $G_{j}, j < i$ in columns
$4(\overline{n}^{G}-i) + 1$ through $4(\overline{n}^{G}-i) +4$ of
$\overline{\mathbf{D}}_1$. Then the matrix $\overline{\mathbf{M}}_1$
is in upper block triangular form and according to
Lemma~\ref{lemma:dim} each diagonal block $4\times4$ matrix is full
rank, thus matrix $\overline{\mathbf{M}}_1$ is obviously of full
rank.
\end{proof}

\begin{lemma}\label{lemma:dia}
$\overline{\mathsf{T}}_{ext}$ is diagonalizable if and only if for any set of segments there
exists at least one segment in the set that has at least four vertices
which are not in the other segments in the set.
\end{lemma}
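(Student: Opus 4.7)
The plan is to prove the two directions separately, in both cases working directly with the definition of diagonalizability. The ``only if'' direction is an immediate consequence of the definition, while the ``if'' direction will proceed by a reverse greedy construction of the ordering.

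For the ``only if'' direction, suppose $\overline{\mathsf{T}}_{ext}$ is diagonalizable, so there is an ordering $G_1, \ldots, G_{\overline{n}^G}$ satisfying Definition~\ref{def:seg_order}. Let $S \subseteq \{G_1,\ldots,G_{\overline{n}^G}\}$ be an arbitrary nonempty set of segments, and let $G_j$ be the segment in $S$ with the largest index. By the defining property of the ordering, $G_j$ has at least four vertices which do not lie on any $G_i$ with $i<j$; since every other segment in $S$ has index strictly less than $j$, these four vertices are not on any other segment in $S$, as required.

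For the ``if'' direction, I would construct the ordering by reverse induction on the segment index. At step $0$, apply the hypothesis to the full set $S_{\overline{n}^G} := \{G_1,\ldots,G_{\overline{n}^G}\}$ to obtain a segment with four vertices not lying on any other segment in $S_{\overline{n}^G}$; label this segment $G_{\overline{n}^G}$. Inductively, having chosen $G_{\overline{n}^G}, G_{\overline{n}^G-1}, \ldots, G_{k+1}$, let $S_k$ denote the set of remaining (not yet labelled) segments, and apply the hypothesis to $S_k$ to pick out a segment with four vertices not on any other segment in $S_k$; label it $G_k$. Continuing down to $k=1$ exhausts all segments and yields an ordering $G_1,\ldots,G_{\overline{n}^G}$. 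By construction, for each $j$, the four distinguished vertices of $G_j$ lie on no segment in $S_j \setminus \{G_j\}$, and $S_j \setminus \{G_j\}$ is precisely $\{G_1,\ldots,G_{j-1}\}$, so Definition~\ref{def:seg_order} is satisfied.

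The proof is essentially a routine translation between a local extremal property of every subset of segments and the existence of a topological ordering, so no step is truly a serious obstacle; the only point requiring care is keeping the direction of the ordering straight, since Definition~\ref{def:seg_order} is phrased in terms of earlier-indexed segments, whereas the greedy construction naturally peels segments off starting from the largest index.
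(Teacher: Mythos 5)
Your proof is correct and takes essentially the same approach as the paper: the ``only if'' direction by picking the maximal-index segment of the given set (the paper phrases this as a contradiction, you argue it directly), and the ``if'' direction by the identical greedy peeling of segments starting from the last index.
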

\begin{proof}
Assume the T-mesh is diagonalizable but there exists a set of segments
$\{G_{i_{j}}, j = 1, \dots, s\}$ such that any
segment in the set has at most three vertices which are not on the other segments
in the set. Without loss of generality, assume the diagonalizable
segment ordering for $\overline{\mathsf{T}}_{ext}$ is
$G_{i}, i = 1, \dots, \overline{n}^{G}$. Let $k$ be the maximal index
for all $i_{j}, j = 1,\dots, s$ and consider the set $\{G_{i}, i = 1,
\dots, k\}$. Since $G_{k}$ has at most three vertices which are not in
the segments $G_{i_{j}}, j = 1, \dots, s$, there exists an index $i, i
< k$ such that the number of vertices on
segment $G_k$ but not on segment $G_i$ is at most three. This violates
the assumption that $\overline{\mathsf{T}}_{ext}$ is
diagonalizable.

Suppose for any set of segments in $\overline{\mathsf{T}}_{ext}$ there
exists at least one segment which has at
least four vertices which are not on the other segments in the
set. For the set containing every segment, according to the
assumption, there exists one segment, $G_{\overline{n}^G}$, which
has at least four vertices which are not on the other segments. Now,
removing $G_{\overline{n}^G}$ from the set, we have that in the set of
remaining segments there exists one segment, $G_{\overline{n}^G - 1}$, which has at
least four vertices which are not on the other segments. Continuing this
process, we can arrange all the segments $G_{i}, i = 1, 2, \dots,
\overline{n}^{G}$ such that the number of vertices
on segment $G_{j}$ but not on segment $G_{i}, i < j$ is at least
4. Thus $\overline{\mathsf{T}}_{ext}$ is diagonalizable.
\end{proof}

\begin{theorem}
\label{lemma:fullrank}
For an analysis-suitable T-mesh, matrix $\overline{\mathbf{M}}$ has full column rank.
\end{theorem}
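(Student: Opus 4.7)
The plan is to reduce, via Lemmas~\ref{lemma:four} and~\ref{lemma:dia}, to a purely combinatorial statement about segments of $\overline{\mathsf{T}}_{ext}$, and then use the analysis-suitable hypothesis to establish that statement. By Lemma~\ref{lemma:four} it suffices to show that $\overline{\mathsf{T}}_{ext}$ is diagonalizable; by Lemma~\ref{lemma:dia} this is equivalent to: for every nonempty set $S$ of segments of $\overline{\mathsf{T}}_{ext}$, at least one segment in $S$ has four or more vertices that do not lie on any other segment of $S$.

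I would argue this by contradiction. Suppose $S$ is a set of segments in which every $G \in S$ has at most three private vertices relative to $S$. By the construction in Section~\ref{sec:dim} (and the Remark following Lemma~\ref{lemma:equ}), every segment in $\overline{\mathsf{T}}_{ext}$ carries exactly four extended vertices, and these are the only candidates to be private to $G$. Hence, for every $G \in S$, at least one of its four extended vertices must also lie on some other segment of $S$. Since a horizontal segment shares a vertex with another segment only at a crossing with a vertical segment, the assumption yields a mutual sharing relation between the horizontal and vertical segments in $S$ via their extended vertices.

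The main obstacle is deriving a contradiction from this sharing, and this is where the ASTS hypothesis enters. I would use an extremality argument: select, for instance, a horizontal segment $G^h \in S$ whose $\eta$-coordinate is minimal and whose $\xi$-endpoints are extremal among all such segments in $S$, and examine its four extended vertices. These vertices cluster at the two endpoints of $G^h$, and by the definition of the extended T-mesh, each arises as the terminus of a horizontal T-junction extension in $\mathsf{T}$. If any of them also lies on some vertical segment $G^v \in S$, then $G^v$ would necessarily contain a vertical T-junction extension passing through that location, producing an intersection between a horizontal and a vertical T-junction extension, in direct violation of the analysis-suitable condition on $\mathsf{T}$. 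Therefore all four of the extended vertices of the extremal $G^h$ must be private, contradicting the assumption on $S$.

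Once diagonalizability is established, the full column rank of $\overline{\mathbf{M}}$ follows immediately from Lemma~\ref{lemma:four}. The most delicate part of carrying the plan out is the case analysis near the endpoints of the extremal segment, in particular handling the four T-junction orientations $\vdash,\dashv,\bot,\top$ together with the possibility of overlap vertices, and verifying in each case that sharing one of the four extended vertices forces a forbidden extension intersection.
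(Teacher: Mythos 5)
Your reduction to diagonalizability via Lemmas~\ref{lemma:four} and~\ref{lemma:dia}, the proof by contradiction, and even the choice of an extremal (bottommost) horizontal segment all match the paper's strategy. The gap is in the step that is supposed to produce the contradiction. You claim that if one of the four extended vertices of the extremal horizontal segment $G^h$ lies on a vertical segment $G^v \in S$, then $G^v$ must carry a vertical T-junction extension through that point, yielding a forbidden extension--extension intersection. This is false: an extended vertex of $G^h$ is, by construction, a point where a horizontal face extension of one of $G^h$'s end T-junctions crosses a \emph{vertical edge of the mesh}, and that edge generically belongs to the original, unextended portion of some vertical segment. The analysis-suitable condition forbids intersections of horizontal extensions with vertical \emph{extensions}, not with ordinary vertical edges; in fact every extended vertex of every horizontal segment lies on some vertical segment of $\overline{\mathsf{T}}_{ext}$. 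So the conclusion that all four extended vertices of $G^h$ must be private cannot be forced this way, and no case analysis over $\vdash,\dashv,\bot,\top$ at the endpoints will rescue it --- whether the vertical segment carrying that vertex belongs to $S$ is simply not controlled by analysis-suitability.

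The paper closes the argument differently: it accepts that some extended vertex $P$ of the bottommost horizontal segment $G_i$ (at height $a$) lies on a vertical segment $G_j\in S$, and then uses analysis-suitability to conclude that $P$ cannot lie on the extension of $G_j$'s bottom T-junction. This forces $G_j$ to have two anchor vertices and two extended vertices strictly below height $a$; since $G_i$ is bottommost in the set, no other segment of the set passes through any of these four vertices, so $G_j$ has at least four private vertices. The contradiction therefore lands on the vertical segment $G_j$, not on $G_i$. Your proof needs this transfer of the contradiction to the vertical segment; as written, the key implication does not hold.
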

\begin{proof} For an analysis-suitable T-mesh we will prove that
the corresponding simplified extended T-mesh is diagonalizable.
Otherwise, according to Lemma~\ref{lemma:dia}, there exists a set of
segments such that each segment in the set has at most three vertices
which are not on the other segments in the set. It is evident that
the set must contain horizonal segments. Otherwise, any vertical
segment violates the assumption because it must have at least
four vertices which are not in the other segments in the set. Let $G_{i}$ be the
bottommost horizonal segment in the set (if there is more than one
such segment choose one of them). Since $G_{i}$ has at most three
vertices which are not on the other segments in the set, one of the four
extended vertices of $G_{i}$ must lie on a vertical segment,
$G_{j}$. Now, referring to Figure~\ref{fig:dim_proof}, since the T-mesh is
analysis-suitable, $G_j$ must have two anchor vertices whose vertical
index coordinate is less than $a$. Otherwise, the T-mesh is not
analysis-suitable due to intersecting T-juction
extensions. Additionally, $G_j$ must have two extended vertices whose
vertical index coordinate is less than $a$. Thus, there are four
vertices in $G_j$ which do not belong to any other segment in the set
which contradicts the assumption.
\end{proof}

\begin{figure}[htbp]
\begin{center}
\includegraphics[width=0.75\textwidth]{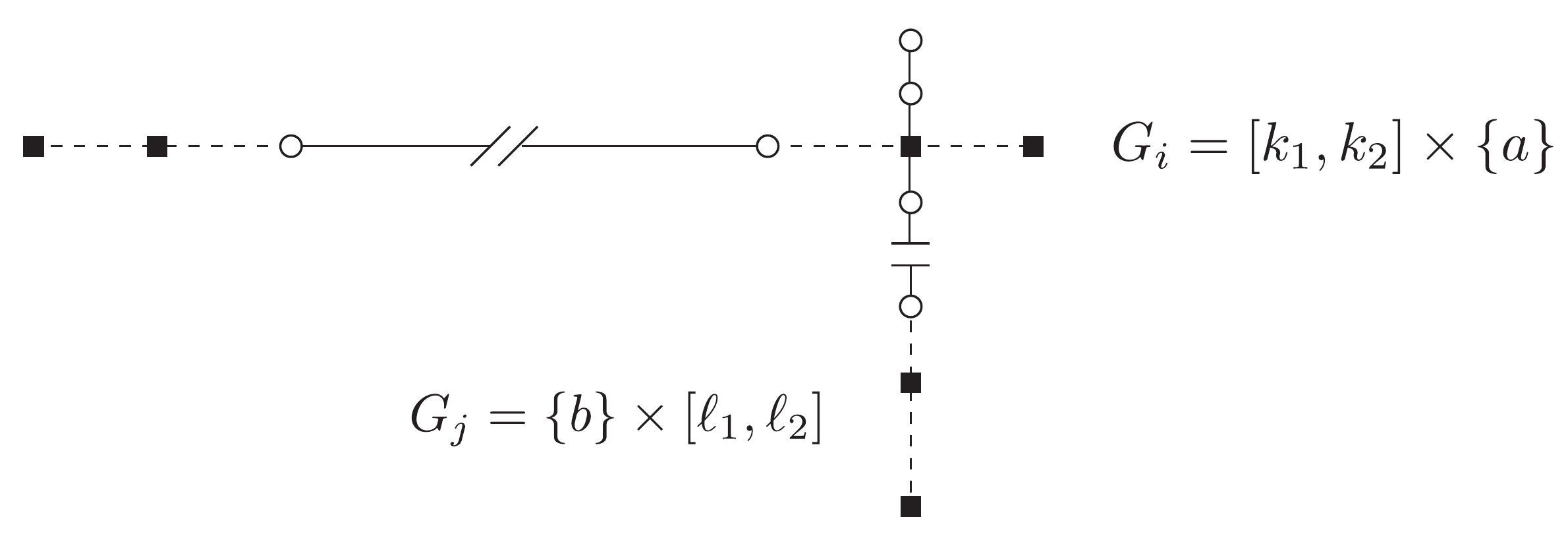}
\end{center}
\caption{A schematic for
Theorem~\ref{lemma:fullrank}.\label{fig:dim_proof}}
\end{figure}

\section{Conclusion}
\label{sec:conc}
We have established several important properties of analysis-suitable
T-splines. We developed a characterization of ASTS spaces and proved
that the space of bicubic polynomials, defined over the extended
T-mesh of an ASTS, is contained in the
corresponding ASTS space. We then proved the
conditions under which two ASTS spaces are
nested. This provides the theoretical foundation for the
analysis-suitable local refinement algorithm
in~\cite{ScLiSeHu10}. Using the characterization of ASTS we then
proved several basic approximation results. Additionally, we developed
the theory of perturbed ASTS and a simple mesh-based dimension formula
which is written in terms of the vertices in the extended T-mesh of a
T-spline. Both of these developments were critical for the proofs in this
paper and may have important applications in other contexts. While the
developments in this paper are restricted to bicubic surfaces the
extension to arbitrary degree should be straightforward.

\section*{Acknowledgements}
The authors thank Prof. A. Buffa, G. Sangalli, R. Vazquez,
University of Pavia, and T. W. Sederberg, Brigham Young University,
for several fruitful and insightful discussions on the technical
aspects of this work. This work was supported by grants from the NSF of China
(Nos.11031007, 60903148), the Chinese Universities Scientific Fund,
SRF for ROCS SE, and the Chinese Academy of Science (Startup
Scientific Research Foundation).  M.A. Scott was partially supported
by an ICES CAM Graduate Fellowship. This support is gratefully
acknowledged.

\appendix

\section{Notational conventions}
\label{sec:notation}
In Tables~\ref{tab:2} and~\ref{tab:3} the notational conventions used
throughout the text are listed as well as the section where they are
defined.
\begin{table}[htbp]
  \centering
\caption{Notational conventions used throughout the text and where
  they are defined.\label{tab:2}}
  \begin{tabular}{ l l l}
  \hline
  Symbols & Description & Section \\
  \hline
  $\mathsf{T}$ & A T-mesh & \ref{sec:tmesh_def} \\
  $\mathsf{V}$ & All vertices in a T-mesh & \ref{sec:tmesh_def} \\
  $\mathsf{E}$ & All edges in a T-mesh & \ref{sec:tmesh_def} \\
  $\mathsf{S}$ & The skeleton of a T-mesh &  \ref{sec:tmesh_def} \\
  $h\mathsf{S}$ $(v\mathsf{S})$ & The horizontal (vertical) skeleton of a T-mesh & \ref{sec:tmesh_def} \\
  $Q$ & An element of a T-mesh & \ref{sec:tmesh_def} \\
  $\mathsf{G}$ & All segments in a T-mesh & \ref{sec:segments} \\
  $n^G$ & The number of segments in a T-mesh & \ref{sec:segments} \\
  $h\mathsf{G}$ ($v\mathsf{G}$) & All horizontal (vertical) segments in a T-mesh & \ref{sec:segments} \\
  $h\mathsf{G}(a)$ $(v\mathsf{G}(a))$ & The horizontal (vertical)
  segments at vertical (horizontal) index $a$ & \ref{sec:segments} \\
  $\mathsf{R}$ & The index domain of a T-mesh & \ref{sec:tmesh_def} \\
  $\mathsf{AR}$ & The active region of an index domain & \ref{sec:tmesh_def} \\
  $\mathsf{FR}$ & The frame region of an index domain & \ref{sec:tmesh_def} \\
  \hline
  $\mathsf{T}_{ext}$ & An extended T-mesh & \ref{sec:ext_tmesh} \\
  $ext^{f}(\mathsf{T})$ & All face extensions in $\mathsf{T}$ & \ref{sec:ext_tmesh} \\
  $ext^f(T)$ & The face extension of T-junction $T$ &
  \ref{sec:ext_tmesh} \\
  $ext^{e}(\mathsf{T})$ & All edge extensions in $\mathsf{T}$ &  \ref{sec:ext_tmesh} \\
  $ext^e(T)$ & The edge extension of T-junction $T$ &\ref{sec:ext_tmesh} \\
  $ext(\mathsf{T})$ & All extensions in $\mathsf{T}$ & \ref{sec:ext_tmesh} \\
  $ext(T)$ & The extension of T-junction $T$ & \ref{sec:ext_tmesh} \\
  $hext^f(\mathsf{T})$ $vext^f{\mathsf{T}}$ & All horizontal (vertical) faces extensions in a T-mesh & \ref{sec:ext_tmesh} \\
  $\mathsf{CV}$ & All crossing vertices in $\mathsf{T}_{ext}$ & \ref{sec:ext_tmesh} \\
  $n^+$ & The number of crossing vertices in $\mathsf{T}_{ext}$ & \ref{sec:ext_tmesh} \\
  $\mathsf{OV}$ & All overlap vertices in $\mathsf{T}_{ext}$ & \ref{sec:ext_tmesh} \\
  $n^-$ & The number of overlap vertices in $\mathsf{T}_{ext}$ & \ref{sec:ext_tmesh} \\
  $\mathsf{EV}$ & All extended vertices in $\mathsf{T}_{ext}$ & \ref{sec:ext_tmesh} \\
  $n^*$ & The number of overlap vertices in $\mathsf{T}_{ext}$ & \ref{sec:ext_tmesh} \\
  $n^{ext}$ & The number of vertices in $\mathsf{T}_{ext}$ &
  \ref{sec:ext_tmesh} \\
  \hline
\end{tabular}
\end{table}

\begin{table}[htbp]
  \centering
  \caption{Notational conventions used throughout the text and where
    they are defined.\label{tab:3}}
  \begin{tabular}{ l l l}
    \hline
    Symbols & Description & Section \\
    \hline
    $\mathsf{A}(\mathsf{T})$ & All anchors in a T-mesh & \ref{sec:anchors} \\
    $n^A$ & The number of anchors in a T-mesh & \ref{sec:anchors} \\
    $\mathsf{J}$ & The set of T-junctions in a T-mesh & \ref{sec:anchors} \\
    $hv(A)$ ($vv(A)$) &  The horizontal (vertical) local index vector for $N_A(\xi, \eta)$ & \ref{sec:ts_spaces} \\
    $\Xi_A$ ($\Pi_A$) & The horizontal (vertical) local knot vector for $N_A(\xi, \eta)$ & \ref{sec:ts_spaces} \\
    $\tilde{\Omega}$ & The full parametric domain of $\mathsf{T}$ & \ref{sec:ts_spaces} \\
    $\tilde{Q}$ & A T-spline element in the full parametric domain & \ref{sec:ts_spaces} \\
    $\hat{\Omega}$ & The reduced parametric domain of $\mathsf{T}$ & \ref{sec:ts_spaces} \\
    $\hat{Q}$ & A T-spline element in the reduced parametric domain & \ref{sec:ts_spaces} \\
    $\Omega_{\hat{Q}}$ & The extended support of $\hat{Q}$ & \ref{sec:approx} \\
    $R(\Omega_{\hat{Q}})$ & The smallest rectangle containg $\Omega_{hat{Q}}$ & \ref{sec:approx} \\
    $N_A(\xi, \eta)$ & The T-spline blending function for anchor $A$ & \ref{sec:ts_spaces} \\
    $Q_A$ & The support of $N_A$ & \ref{sec:approx} \\
    $\lambda_A$ & The dual basis function associated with $N_A$ & \ref{sec:ts_spaces} \\
    $\mathcal{T}$ & A T-spline space & \ref{sec:ts_spaces} \\
    $\mathcal{S}_{ext}$ & A homogeneous extended spline space & \ref{sec:refine} \\
    $\mathcal{T}_{ext}$ & An extended spline space & \ref{sec:refine} \\
    \hline
    $\Xi[\boldsymbol{\delta}] $($\Pi[\boldsymbol{\delta}]$) & A
    horizontal (vertical) perturbed global knot vector &
    \ref{sec:perturb} \\
    $\imath = \imath(i, g)$ ($\jmath = \imath(j, g)$) & the index of the $g^{th}$ segment in
    $h\mathsf{G}(\{i\})$ ($v\mathsf{G}(\{j\})$) & \ref{sec:perturb} \\
    $\pi$ & the map from perturbed index $\imath(i,g)$ to index $i$ & \ref{sec:perturb} \\
    $\mathsf{T}(\boldsymbol{\delta})$ & A perturbed T-mesh &
    \ref{sec:perturb} \\
    $\mathcal{T}(\boldsymbol{\delta})$ & A perturbed T-spline space &
    \ref{sec:perturb} \\
    $\mathsf{T}(\boldsymbol{\delta}, \mathsf{T}^i)$ & A perturbed T-mesh
    generated from $\mathsf{T}^i(\boldsymbol{\delta})$ & \ref{sec:perturb} \\
\hline
  \end{tabular}
\end{table}

\bibliographystyle{elsarticle-num}
\bibliography{bibliography}
\end{document}